\let\newfloat\newfloat@ltx
\def\Cbb{\mathbb{C}}
\def\P{\mathbb{P}}
\def\HC{\mathcal{H}}
\def\LC{\mathcal{L}}
\def\ad{^{\dagger}}
\newcommand{\dya}[1]{\ket{#1}\!\bra{#1}}
\newcommand{\poly}{\operatorname{poly}}
\newcommand{\DC}{\mathcal{D}}
\newcommand{\GC}{\mathcal{G}}
\newcommand{\IC}{\mathcal{I}}
\newcommand{\OC}{\mathcal{O}}
\newcommand{\SC}{\mathcal{S}}
\newcommand{\UC}{\mathcal{U}}
\newcommand{\supp}{\text{supp}}
\renewcommand{\geq}{\geqslant}
\renewcommand{\leq}{\leqslant}
\renewcommand{\Re}{\text{Re}}
\renewcommand{\Im}{\text{Im}}
\DeclareMathOperator*{\argmin}{arg\,min}
\renewcommand{\vec}[1]{\boldsymbol{#1}}  
\newcommand*{\id}{\openone}
\newcommand{\bs}{\textsf{BS}}
\renewcommand{\th}{\theta } 
\newcommand{\thv}{\vec{\theta}}
\newcommand{\delv}{\vec{\delta}} 
\newcommand{\uba}{Departamento de F\'isica ``J. J. Giambiagi'' and IFIBA, FCEyN, Universidad de Buenos Aires, 1428 Buenos Aires, Argentina}
\newcommand{\losalamos}{Theoretical Division, Los Alamos National Laboratory, Los Alamos, New Mexico 87545, USA}
\def\be{\begin{equation}}
\def\ee{\end{equation}}
\def\bs{\begin{split}}
\def\e{\end{split}}
\def\ba{\begin{eqnarray}}
\def\bea{\begin{eqnarray}}
\def\tea{\end{eqnarray}}
\def\ea{\end{eqnarray}}
\def\eea{\end{eqnarray}}
\def\R{\mathds{R}}
\def\k{^{(k)}}
\def\liea{\mathfrak{k}}
\def\liea{\mathfrak{g}}
\def\lieg{\mathds{G}}
\def\H{\tilde{H}}
\def\P{\mathbb{P}}
\def\k{^{(k)}}
\def\liea{\mathfrak{k}}
\def\liea{\mathfrak{g}}
\def\lieg{\mathds{G}}
\def\P{\mathbb{P}}
\newcommand{\veryshortarrow}[1][3pt]{\mathrel{%
   \hbox{\rule[\dimexpr\fontdimen22\textfont2-.2pt\relax]{#1}{.4pt}}%
   \mkern-4mu\hbox{\usefont{U}{lasy}{m}{n}\symbol{41}}}}
\def\vsa{\veryshortarrow}
\newcommand\TFIM{\text{TFIM}}
\newtheorem{theorem}{Theorem}
\newtheorem{proposition}{Proposition}
\newtheorem{definition}{Definition}
\def\be{\begin{equation}}
\def\te{\end{equation}}
\def\ee{\end{equation}}
\def\ba{\begin{eqnarray}}
\def\bea{\begin{eqnarray}}
\def\tea{\end{eqnarray}}
\def\ea{\end{eqnarray}}
\def\eea{\end{eqnarray}}
\def\S{^{(\SC)}}
\begin{document}
\title{Theory of overparametrization in quantum neural networks}

\author{Mart\'{i}n Larocca}
\thanks{The first two authors contributed equally to this work.}
\affiliation{\losalamos}
\affiliation{\uba}

\author{Nathan Ju}
\thanks{The first two authors contributed equally to this work.}
\affiliation{\losalamos}

\author{Diego Garc\'{i}a-Mart\'{i}n}
\affiliation{\losalamos}
\affiliation{Barcelona Supercomputing Center, Barcelona 08034, Spain}
\affiliation{Instituto de F\'{i}sica Te\'{o}rica, UAM-CSIC, Madrid 28049, Spain}

\author{Patrick J. Coles}
\affiliation{\losalamos}

\author{M. Cerezo}
\affiliation{Information Sciences, Los Alamos National Laboratory, Los Alamos, NM 87545, USA}
\affiliation{\losalamos}
\affiliation{Center for Nonlinear Studies, Los Alamos National Laboratory, Los Alamos, New Mexico 87545, USA}


\begin{abstract}
The prospect of achieving quantum advantage with Quantum Neural Networks (QNNs) is exciting. Understanding how QNN properties (e.g., the number of parameters $M$)  affect the loss landscape  is crucial to the design of scalable QNN architectures. Here, we rigorously analyze the overparametrization phenomenon in QNNs with periodic structure. We define overparametrization as the regime where the QNN has more than a critical number of parameters $M_c$ that allows it to explore all relevant directions in state space. Our main results show that the dimension of the Lie algebra obtained from the generators of the QNN is an upper bound for $M_c$, and for the maximal rank that the quantum Fisher information and Hessian matrices can reach. Underparametrized  QNNs  have spurious local minima in the loss landscape that start disappearing when $M\geq M_c$. Thus, the overparametrization onset corresponds to a computational phase transition where the QNN trainability is greatly improved by a more favorable landscape. We then connect the notion of overparametrization to the QNN capacity, so that when a QNN is overparametrized, its capacity achieves its maximum possible value. We run numerical simulations for eigensolver, compilation, and autoencoding applications to showcase the overparametrization computational phase transition. We note that our results also apply to variational quantum algorithms and quantum optimal control.
\end{abstract}

\maketitle

\section{Introduction}

The development of Neural Networks (NNs) and  Machine Learning (ML) is one of the greatest scientific revolutions of the twentieth century. Traditionally,  computers were explicitly programmed to solve a task, so that a user-created code would take an input and produce a desired output. In ML, however, one follows a fundamentally different approach. Here, a computer is trained to learn from data, with the goal that  it can accurately solve the problem when presented with new and previously unseen cases~\cite{mohri2018foundations}. Currently, ML is used in virtually all areas of science, with applications such as drug discovery~\cite{vamathevan2019applications}, new materials exploration~\cite{schmidt2019recent}, and self-driving cars~\cite{grigorescu2020survey}.

Despite their tremendous success, training NNs is a difficult task that has even been shown to be  NP-hard~\cite{blum1992training,daniely2016complexity,boob2020complexity}. Thus, finding ways to improve the NNs trainability and generalization capacity has always been a coveted goal. Towards this end, one of the most surprising phenomena in ML  is that of overparametrization. Here, one trains a NN with a capacity larger than that which is necessary to represent the distribution of the training data~\cite{neyshabur2018role}. Usually, this implies having a number of parameters in the NN that is much larger than the number of training points~\cite{zhang2021understanding}. Naively, one could expect that a model with a large capacity would have training difficulties and also have overfitting (poor generalization). However, overparametrizing a  NN can improve its performance and reduce its training and generalization errors~\cite{zhang2021understanding,allen2019convergence,allen2019learning,du2019gradient,buhai2020empirical}, and even lead to provable convergence results~\cite{du2018gradient,brutzkus2018sgd}.

The advent of quantum computers~\cite{nielsen2000quantum,preskill2018quantum} has brought a tremendous interest in using these devices for data science. Here, researchers have embedded ML into the framework of quantum mechanics, with the new, generalized theory being called  Quantum Machine Learning (QML)~\cite{schuld2015introduction,biamonte2017quantum,cerezo2020variationalreview}. With QML, the end goal is not formal generalization but rather to exploit entanglement and superposition to achieve a quantum advantage~\cite{huang2021information,huang2021power,kubler2021inductive,abbas2020power}, that is, to solve the problem more efficiently than any classical algorithm run on a classical supercomputer.

Naturally, as a generalized theory, QML has the potential to exhibit many of the issues and phenomena exhibited by (classical) ML. For instance, like the classical case, it has been shown that training QML models is NP-hard~\cite{bittel2021training}. Since a QML model may consist of a data embedding followed by a parametrized quantum circuit that is often called a Quantum Neural Network (QNN), its training  requires optimizing the QNN's parameters~\cite{cerezo2020variationalreview,benedetti2019parameterized,beer2020training,cong2019quantum,farhi2018classification}. Recently, much effort has gone towards developing so-called Quantum Landscape Theory~\cite{arrasmith2021equivalence}, which studies the properties of QML loss function landscapes. Indeed, there are results analyzing the presence of sub-optimal local minima~\cite{rivera2021avoiding,wierichs2020avoiding}, the existence of barren plateaus~\cite{mcclean2018barren,cerezo2020cost,marrero2020entanglement,patti2020entanglement,larocca2021diagnosing,holmes2021connecting,cerezo2020impact,holmes2021barren,arrasmith2020effect,huembeli2021characterizing,pesah2020absence,sharma2020trainability}, and how quantum noise affects the loss landscape~\cite{wang2020noise,fontana2020optimizing,wang2021can,campos2021training,franca2020limitations}.

Similar to classical NNs, some examples of QNNs that exhibit overparametrization have been constructed~\cite{wiersema2020exploring,zhang2020overparametrization,wierichs2020avoiding,wiersema2020exploring,kiani2020learning,funcke2021best,lee2021towards,anschuetz2021critical}. Some of these works have heuristically shown that increasing the number of parameters in the QNN can improve its trainability and lead to faster convergence. However, there is still need for a detailed theoretical analysis of this overparametrization phenomenon. Understanding overparametrization is crucial for Quantum Landscape Theory and for engineering QNNs to enhance their trainability.

In this work we provide a theoretical framework for the overparametrization of QNNs. Our main results indicate that, for a general type of periodic-structured QNNs, one can reach an overparametrized regime by increasing the number of parameters  past some threshold critical value $M_c$  (see Fig.~\ref{fig:schematic}(a)). Moreover, we prove that $M_c$  is related to the dimension of the Dynamical Lie Algebra (DLA)~\cite{dalessandro2010introduction,zeier2011symmetry} associated with the generators of the QNN.

We here define overparametrization as the QNN  having enough parameters so that the quantum Fisher information matrix saturates its achievable rank. In this case, one can explore all relevant directions in the state space by varying the QNN parameters. We then relate this notion of overparametrization to different measures of the model's capacity ~\cite{abbas2020power,haug2021capacity}, so that a model is overparametrized when its capacity is saturated.  Then, as shown in Fig.~\ref{fig:schematic}(b), our results have direct implications in understanding why overparametrization can improve the model's trainability, as the overparametrization onset corresponds to a computational phase transition~\cite{kiani2020learning}. We verify our theoretical results by performing numerical simulations. In all cases, we find the predicted computational phase transition, where the success probability of solving the optimization problem is greatly increased after a critical number of parameters.

These results provide theoretical grounds for  recent observations of the overparametrization phenomenon in QML~\cite{wiersema2020exploring,kiani2020learning,kim2021universal}. Moreover, our theorems have direct consequences for the field of quantum optimal control~\cite{d2007introduction,chakrabarti2007quantum,larocca2020exploiting,larocca2020fourier}.

\begin{figure}[t]
	\includegraphics[width= .9\columnwidth]{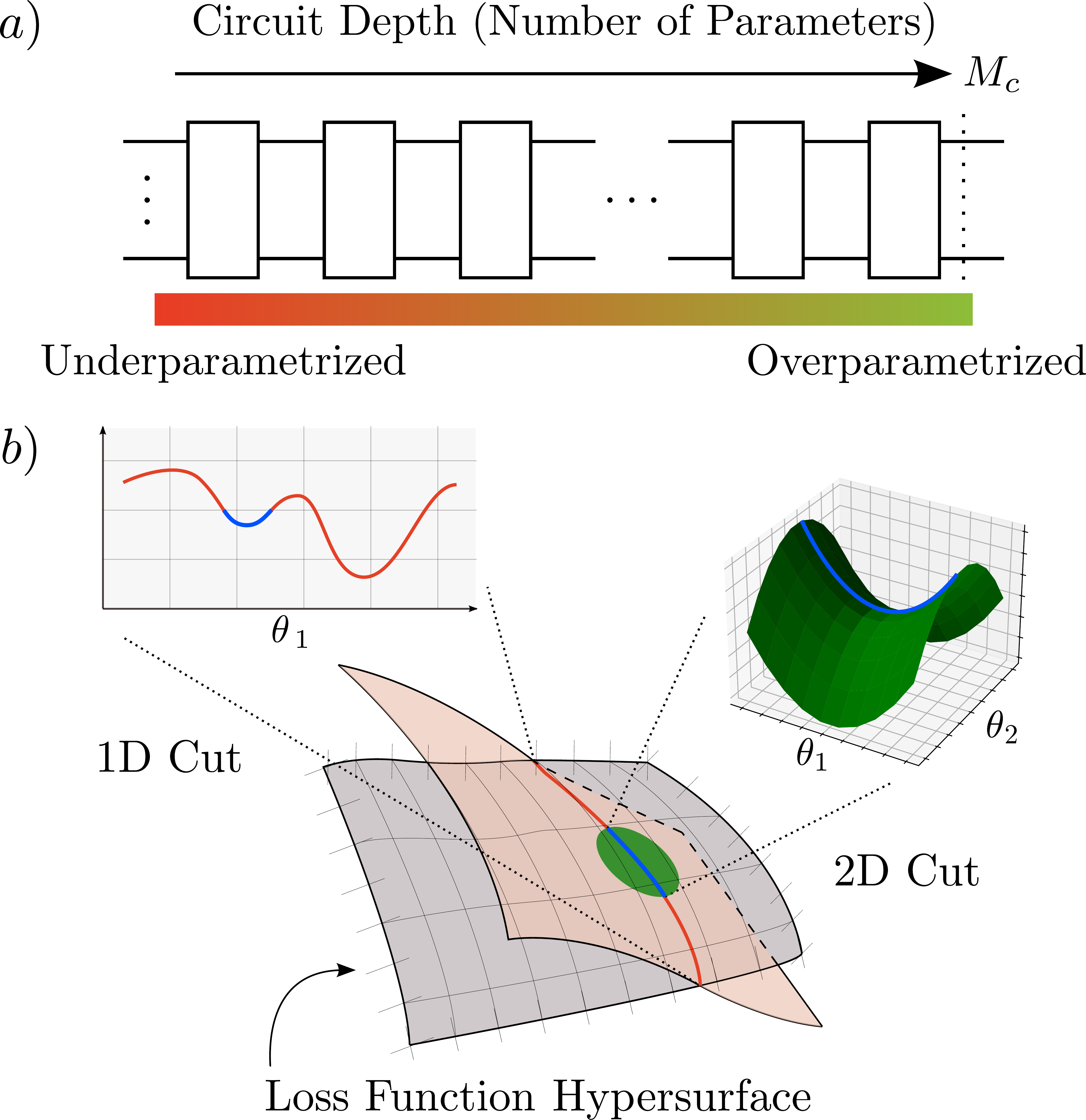}
	\caption{\textbf{Overparametrization in quantum neural networks (QNNs).} a) Quantum circuit description of the QNN. By having a low (high) number of parameters one is not able (is able) to explore all relevant directions in the Hilbert space, and thus the QNN is underparametrized (overparametrized).  b)~The gray surface corresponds to the unconstrained loss function landscape. An underparametrized QNN explores a low dimensional cut of the loss function (1D cut over the red lines). Here, the optimizer can get trapped in spurious local minima (blue segment) that negatively impact the parameter optimization. By increasing the number of parameters past some threshold $M_c$, one can explore a higher dimensional cut of the landscape (2D cut over the green region). As shown, some previous spurious local minima correspond to saddle points (blue segment), and the optimizer can escape the false trap.  }
	\label{fig:schematic}
\end{figure}

\section{Results}

\subsection{Quantum Neural Networks}

 \begin{figure*}[th!]
	\includegraphics[width=1\linewidth]{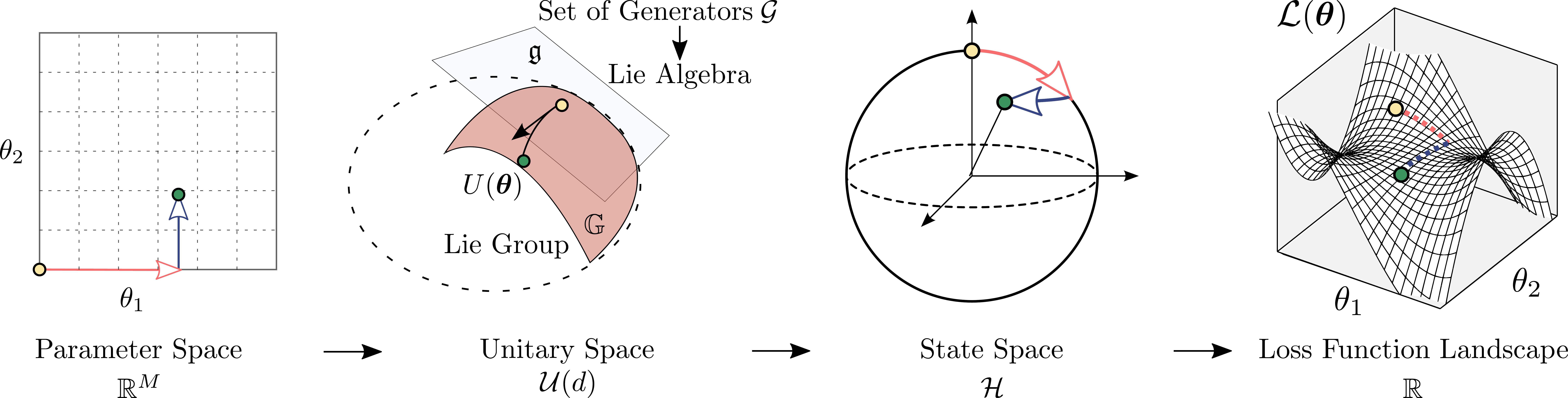}
	\caption{\textbf{Relevant mathematical spaces for QNNs.} QNNs employ a set of $M$  trainable parameters $\thv\in\mathbb{R}^M$, which live in parameter space. The QNN itself is represented by a $d$-dimensional unitary $U(\thv)$, which lives in unitary space. An exemplary form of $U(\thv)$ is that of Eq.~\eqref{eq:PSA_ansatz}, where the set of unitaries depends on the dynamical Lie algebra $\liea$, which in turn is obtained from the set of generators $\GC$ in Definition~\ref{def:generators}. In most QML applications, the QNN acts on an input state $\ket{\psi_\mu}$ from a training set. Thus, the set of reachable unitaries of the ansatz translates into a set of reachable states in the Hilbert space $\HC$. Finally, by performing measurements on a quantum computer one estimates the loss function or its gradient. This information is then used to navigate through the loss function landscape $\LC(\thv)$.  Understanding the connections between these mathematical spaces is fundamental for the theory of quantum landscapes.       }
	\label{fig:surjectivity}
\end{figure*}

Quantum Neural Networks (QNNs)~\cite{schuld2015introduction,biamonte2017quantum,cerezo2020variationalreview} employ parametrized quantum circuits to allow for task-oriented programming of quantum computers. Here, one encodes the problem of interest in a loss function $\LC(\thv)$, whose minima correspond to the task's solution.  Using data from a training dataset $\SC$ composed of  quantum states $\ket{\psi_\mu}\in\SC$, one optimizes the QNN parameters to solve the  problem
\begin{equation}\label{eq:optimization}
    \thv_*=\argmin_{\thv}\LC(\thv).
\end{equation}
Measurements on a quantum computer assist in estimating the loss function (or its gradients), while a classical optimizer is used to update the parameters and solve Eq.~\eqref{eq:optimization}. This hybrid scheme allows the QML model to access the exponentially large dimension of the Hilbert space, with the hope that if the whole process is hard to classically simulate, then a quantum advantage could be achieved~\cite{boixo2018characterizing,google2019supremacy,huang2021power}.

We consider the case when the QNN  is a parametrized quantum circuit $U(\thv)$ that acts on the quantum states in the training set as $U(\thv)\ket{\psi_\mu}$. Here,  $U(\thv)$ has an $L$-layered periodic structure  of the form
\begin{equation}\label{eq:PSA_ansatz}
    U(\thv)=\prod_{l=1}^LU_l(\thv_l)\,, \quad U_l(\thv_l)=\prod_{k=1}^K e^{-i \theta_{lk}H_k}\,,    
\end{equation}
where the index $l$ indicates the layer,  and the index $k$ spans the traceless Hermitian operators $H_k$ that generate the unitaries in the ansatz. Moreover,  $\thv_{l}=(\theta_{l1},\ldots\theta_{lK})$ are the parameters in a single layer, and $\thv=\{\thv_{1},\ldots,\thv_L\}$ denotes the set of $M=K\cdot L$ trainable parameters in the QNN.

As discussed in~\cite{larocca2021diagnosing}, Eq.~\eqref{eq:PSA_ansatz} contains as special cases the hardware-efficient ansatz~\cite{kandala2017hardware}, quantum alternating operator ansatz (QAOA)~\cite{farhi2014quantum,hadfield2019quantum}, Adaptive QAOA~\cite{zhu2020adaptive}, Hamiltonian Variational Ansatz (HVA)~\cite{wecker2015progress}, and Quantum Optimal Control Ansatz~\cite{choquette2020quantum}, among others~\cite{lee2021towards}. As we discuss in the Methods section,  due to the close connection between training a parametrized quantum circuit and the control pulses used to evolved a quantum state in a quantum optimal control protocol, all the results derived hereon can be directly applied to the field of quantum optimal control.

 \subsection{Quantum Landscape Theory}

The usefulness of a QNN for a given task hinges on several factors. First and foremost, it is crucial that a solution (or a good approximation to it) actually exists within the ansatz. Then, even if that solution exists, one must be able to find the associated optimal parameters. The goals of Quantum Landscape Theory are to study properties of the QML loss landscape, how they emerge, and how they affect the optimization process. Here we recall the basic theoretical framework of Quantum Landscape Theory.



First, we note that there are several  aspects of the problem that play a key role in how the loss function landscape arises. Specifically, as shown in Fig.~\ref{fig:surjectivity}, $\thv$  is  a vector in $\mathbb{R}^M$, and each set of parameters corresponds to a unitary $U(\thv)$ in the unitary group $\UC(d)$ of degree $d$. Then, one applies the unitary  $U(\thv)$ to an  $n$-qubit input state $\ket{\psi_\mu}$ (from the dataset $\SC$) in a Hilbert space $\HC$ of dimension $d=2^n$.   Finally, the loss function value $\LC(\thv)\in \mathbb{R}$ is determined by performing measurements over the states $U(\thv)\ket{\psi_\mu}$. In this sense, the action of the QML model arises from the composition of the following three maps:
\begin{equation}\label{eq:maps}
\mathbb{R}^M\xrightarrow[]{} \UC(d) \xrightarrow[]{} \HC \xrightarrow[]{} \mathbb{R}\,.
\end{equation}
Since the landscape is essentially the collection of values obtained at the end of the maps in Eq.~\eqref{eq:maps}, understanding each step of this process is crucial to understanding the properties of the landscape.

Let us consider the first  map in Eq.~\eqref{eq:maps}, i.e., the map between the space of parameters and the unitary group. It has been shown that the unitaries generated by the ansatz in Eq.~\eqref{eq:PSA_ansatz} are characterized via the so-called Dynamical Lie Algebra (DLA)~\cite{dalessandro2010introduction,zeier2011symmetry}. Specifically, consider the following definition. 
\begin{definition}[Set of generators $\GC$]\label{def:generators} Consider a parametrized quantum circuit of the form \eqref{eq:PSA_ansatz}. The set of generators $\GC=\{H_k\}_{k=1}^K$ is defined as the set (of size $|\GC|=K$) of the Hermitian operators that generate the unitaries in a single layer of $U(\thv)$.
\end{definition}
Then, the DLA is defined as follows.
\begin{definition}[Dynamical Lie Algebra (DLA)]\label{def:dynamical_lie_algebra} Consider a set of generators $\GC$ according to Definition~\ref{def:generators}. The DLA $\liea$ is generated by repeated nested commutators of the operators in $\GC$. That is, 
\begin{equation}
\liea={\rm span}\left\langle iH_1, \ldots, iH_K \right\rangle_{Lie}\,,
\end{equation}
where $\left\langle S\right\rangle_{Lie}$ denotes the Lie closure, i.e., the set obtained by repeatedly taking the commutator of the elements in $S$. 
\end{definition}

Recall that the set of reachable unitaries $\{ U(\thv)\}_{\thv} \subseteq \mathbb{G} \subseteq \SC\UC(d)$ obtained from arbitrary choices of $\thv$ forms itself a Lie group, known as the dynamical Lie group $\mathbb{G}$. Then, we note that $\mathbb{G}$ is fully obtained from the DLA as  $\mathbb{G} = e^{\liea}$~\cite{larocca2021diagnosing,morales2020universality}.
We refer the reader to the Methods section for some intuitive understanding on the role of the DLA.

Here, we should remark that the optimal choice of ansatz (or equivalently, the best choice of generators) for a given task is still an open question. While a natural choice would be to use a QNN that is as expressible as possible~\cite{sim2019expressibility}, it has been shown that such choice can lead to trainability issues such as barren plateaus~\cite{mcclean2018barren,holmes2021connecting,larocca2021diagnosing}. 

We can now analyze the second map in Eq.~\eqref{eq:maps}, i.e., the map leading to quantum states in a Hilbert space. Given the fact that $\liea$ determines the set of reachable unitaries, and recalling that the QNN acts on the states $\ket{\psi_\mu}$ in the training set $\SC$ as $U(\thv)\ket{\psi_\mu}$, then the set of reachable states (i.e., the orbit) is, in turn, also directly determined by the DLA. We note that in many cases the set of generators can have symmetries, in which case the DLA is of the form $\liea=\bigoplus_\nu \liea_\nu $. Here, $\nu$ is an index over the invariant subspaces. The states in the training set need not respect some, or any, of the symmetries of the QNN. In this work, we consider the case where the states in the training set respect some of the symmetries, and we denote as $\liea_{\SC}$ the DLA associated with the symmetries preserved by the states in $\SC$. The limiting case when the states in $\SC$ break all symmetries in the ansatz (or when the ansatz has no symmetries) corresponds to $\liea_{\SC}=\liea$.

Here, one can study the set of reachable states through the action of $U(\thv)$ on $\ket{\psi_\mu}$ as follows. Given a set of parameters $\thv$ and an infinitesimal perturbation $\vec{\delta}$ (possibly obtained from some update rule), it is  useful to quantify the distance $\DC$ between the quantum states $\ket{\psi_\mu(\thv)}=U(\thv)\ket{\psi_\mu}$ and $\ket{\psi_\mu(\thv+\vec{\delta})}=U(\thv+\vec{\delta})\ket{\psi_\mu}$. The second-order Taylor expansion of $\DC$ is given by the Fubini-Study metric~\cite{cheng2010quantum,meyer2021fisher} as 
\begin{equation}
    \DC(\ket{\psi_\mu(\thv)},\ket{\psi_\mu(\thv+\vec{\delta})})=\frac{1}{2}\delv^T\cdot F_\mu(\thv)\cdot\delv\,.
\end{equation}
Here, $F_\mu(\thv)$ is the Quantum Fisher Information Matrix (QFIM) for the state $\ket{\psi_\mu}$. The QFIM is an  $M\times M$ matrix whose elements are~\cite{liu2019quantum}
\begin{align}\label{eq:QFIM-elem}
\begin{split}
[F_\mu(\thv)]_{ij}\!=\!4\Re[&\braket{\partial_i\psi_\mu(\thv)}{\partial_j\psi_\mu(\thv)}\\
&-\braket{\partial_i\psi_\mu(\thv)}{\psi_\mu(\thv)}\braket{\psi_\mu(\thv)}{\partial_j\psi_\mu(\thv)}]\,,
\end{split}
\end{align}
where $\ket{\partial_i\psi_\mu(\thv)}=\partial \ket{\psi_\mu(\thv)}/\partial\theta_i=\partial_i\ket{\psi_\mu(\thv)}$ for $\theta_i\in\thv$. 
The QFIM  plays a crucial role in imaginary time evolution algorithms~\cite{mcardle2019variational}, and in quantum-aware optimizers such as the quantum natural gradient descent~\cite{stokes2020quantum,koczor2019quantum,gacon2021simultaneous,haug2021natural}. Moreover, we recall that the rank of the QFIM quantifies the number of independent directions in state space that can be explored by making an infinitesimal change in $\thv$.

Finally, consider the third  map in Eq.~\eqref{eq:maps}, i.e., the map leading to the loss function value. Similar to how the QFIM is related to the changes in state space arising by a change in the parameters, one can also quantify  how much the loss function value changes by a small parameter update. In this case, one can study the curvature of the loss landscape via the Hessian matrix $\nabla^2 \LC(\thv)$, an $M\times M$ matrix whose elements are defined as
\begin{equation}
    [\nabla^2\LC(\thv)]_{ij}=\partial_i\partial_j \LC(\thv)\,.
\end{equation}
Evaluating the gradient and the Hessian at a given point allows one to construct a quadratic model of the loss function, with the Hessian eigenvectors associated with positive (negative) eigenvalues determining directions of positive (negative) curvature. Thus, the rank of $\nabla^2\LC(\thv)$ is related to the number of directions that lead to (second order) changes in the loss, as a zero-valued eigenvalue indicates a zero-curvature flat direction. We finally note that the Hessian has been used to characterize the loss landscapes of variational quantum algorithms~\cite{huembeli2021characterizing,kim2021quantum,dalgaard2021predicting,dalgaard2020hessian}.

\subsection{Theoretical Results} \label{sec:Main_results}

Here we present our main results, where we rigorously analyze the overparametrization phenomenon in QNNs. Our results prove that: 1) there exists a critical number of parameters $M_c$ needed to overparametrize a QNN, and 2) that $M_c$, and the onset of overparametrization, can be related to the dimension of the associated DLA. The proofs of our main results are sketched in the Methods section and formally derived in the Supplementary Information. For our main results in Theorem~\ref{theo:1} and Theorem~\ref{theo:2}, we make no assumption on the loss function other than the QNN  acting on the states $\ket{\psi_\mu}$ in the training set $\SC$ as $U(\thv)\ket{\psi_\mu}$ and that the loss is estimated via measurements on these evolved states. Then, for Theorem~\ref{theo:3} we consider special cases of such loss functions. 

First, consider the following definition. 
\begin{definition}[Overparametrization]\label{def:overparametrization}
    A QNN is said to be overparametrized if the number of parameters $M$ is such that the QFI matrices, for all the states in the training set, simultaneously saturate their achievable rank $R_\mu$ at least in one point of the loss landscape. That is, if increasing the number of parameters past some minimal (critical) value $M_c$ does not further increase the rank of any QFIM:
    \begin{equation}\label{eq:Rmu}
        \max_{M\geq M_c,\thv}\rank[F_\mu(\thv)] = R_\mu\,.
    \end{equation}
\end{definition}
In the Methods section we give additional motivation for this definition, as well as present an equivalent definition that further highlights the geometrical nature of the overparametrization phenomenon. 

According to Definition~\ref{def:overparametrization}, when the QNN is overparametrized, one can explore all relevant and independent directions in the state space by changing the parameters of the ansatz. Evidently, since the rank of the QFIM is at most equal to $M$, then Definition~\ref{def:overparametrization} implies that $M_c$ must be such that $M_c\geq \max_\mu R_\mu$. We also remark that the overparametrization  is here defined for the  QFIM ranks to be equal to $R_\mu$ on a single point in the landscape. In principle, the QFIM could achieve its maximum  rank in a given point, and not in others. However, as we numerically verify (see Supplementary Information), at the overparametrization onset the QFIM saturates its rank almost everywhere in the landscape simultaneously. Here, increasing the number of parameters will not further increase the number of accessible directions in state space. However, it can still be beneficial to add more parameters as this will lead to global minima with higher degeneracy~\cite{moore2012exploring,larocca2020exploiting,larocca2020fourier,fontana2020optimizing}.

In light of Definition~\ref{def:overparametrization}, overparametrization has implications for the trainability of the QNN parameters.  If the QNN is underparametrized, the loss landscape can exhibit spurious, or false, local minima~\cite{wu2012singularities,riviello2014searching,rach2015dressing,larocca2018quantum}. However, by increasing the number of parameters and overparametrizing the QNN, one can explore more directions in state space, and hence the optimizer is able to escape these false minima. As such, crossing the overparametrization threshold can be considered as a computational phase transition~\cite{kiani2020learning} where a more favorable landscape ameliorates the optimization.

Then, our first main result to understand how overparametrization can improve the trainability is as follows.
\begin{theorem}\label{theo:1}
    For each state $\ket{\psi_\mu}$ in the training set $\SC$, the maximum rank $R_\mu$ of its associated QFIM (defined in Eq.~\eqref{eq:QFIM-elem}) is upper bounded  as
    \begin{equation}
        R_\mu\leq\dim(\liea_\SC)\,.
    \end{equation}
\end{theorem}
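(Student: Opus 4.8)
The plan is to recognize the QFIM as, up to the factor $4$, the Gram matrix of a family of Hermitian operators with respect to the covariance bilinear form of the state $\ket{\psi_\mu(\thv)}$, and then to show that this family is confined to a real vector space of dimension $\dim(\liea_\SC)$. The bound then follows from the elementary fact that a Gram matrix built from vectors lying in a space of dimension $D$ has rank at most $D$. The work is therefore to make both the ``Gram matrix'' structure and the ``dimension $D=\dim(\liea_\SC)$'' claim precise.

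First I would compute the tangent vectors. Writing $\theta_i=\theta_{lk}$ and splitting the circuit as $U(\thv)=W_{>}\,e^{-i\theta_{lk}H_k}\,W_{<}$, differentiation gives $\partial_i U(\thv)=-i\,\tilde H_i\,U(\thv)$ with $\tilde H_i:=W_{>}H_kW_{>}^{\dagger}$, so that $\ket{\partial_i\psi_\mu(\thv)}=-i\,\tilde H_i\ket{\psi_\mu(\thv)}$. The decisive structural input is that $\liea$ is invariant under the adjoint action of the dynamical Lie group $\mathbb{G}=e^{\liea}$: since $W_{>}$ is a product of gates $e^{-i\theta H_{k'}}$ with each $-i\theta H_{k'}\in\liea$, we have $W_{>}\in\mathbb{G}$, whence $i\tilde H_i=W_{>}(iH_k)W_{>}^{\dagger}\in\liea$. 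Thus every $\tilde H_i$ lies in the real space of Hermitian operators $\mathfrak{h}:=i\,\liea$, which has dimension $\dim(\liea)$ and, crucially, does \emph{not} grow with the number of layers $L$ (the DLA is fixed by the generators alone).

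Next I would substitute into Eq.~\eqref{eq:QFIM-elem}. With $\langle\,\cdot\,\rangle:=\bra{\psi_\mu(\thv)}\cdot\ket{\psi_\mu(\thv)}$, the identities $\braket{\psi_\mu(\thv)}{\partial_i\psi_\mu(\thv)}=-i\langle\tilde H_i\rangle$ (real, since $\tilde H_i$ is Hermitian) and $\braket{\partial_i\psi_\mu(\thv)}{\partial_j\psi_\mu(\thv)}=\langle\tilde H_i\tilde H_j\rangle$ give
\begin{equation}
[F_\mu(\thv)]_{ij}=4\big(\tfrac12\langle\{\tilde H_i,\tilde H_j\}\rangle-\langle\tilde H_i\rangle\langle\tilde H_j\rangle\big)=:4\,B_\psi(\tilde H_i,\tilde H_j),
\end{equation}
where $B_\psi$ is the symmetric, positive-semidefinite covariance form on Hermitian operators. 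Hence $F_\mu(\thv)$ is exactly the Gram matrix of $\{\tilde H_i\}_{i=1}^{M}\subset\mathfrak{h}$ with respect to $B_\psi$, so $\rank[F_\mu(\thv)]\leq\dim(\operatorname{span}_{\mathbb{R}}\{\tilde H_i\})\leq\dim(\mathfrak{h})=\dim(\liea)$ at every $\thv$. Because this bound is uniform in $L$ (and therefore in $M$), the achievable maximum obeys $R_\mu\leq\dim(\liea)$.

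The remaining, and most delicate, step is to tighten $\dim(\liea)$ to $\dim(\liea_\SC)$. Here I would use the symmetry decomposition $\liea=\bigoplus_\nu\liea_\nu$: the input $\ket{\psi_\mu}$ has support only on the invariant subspace $W=\bigoplus_{\nu}\HC_\nu$ spanned by the sectors whose symmetries it respects, and since $U(\thv)\in\mathbb{G}$ is block-diagonal with respect to this decomposition, the entire orbit $\ket{\psi_\mu(\thv)}$ remains in $W$. As $W$ is invariant under the algebra, $\tilde H_j\ket{\psi_\mu(\thv)}\in W$, so every expectation entering $B_\psi(\tilde H_i,\tilde H_j)$ is unchanged upon replacing $\tilde H_i$ by its compression $P_W\tilde H_i P_W$; these compressions lie in the restricted algebra, which has dimension $\dim(\liea_\SC)$. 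Re-running the Gram-matrix argument with the compressed operators then yields $R_\mu\leq\dim(\liea_\SC)$. I expect this last bookkeeping to be the main obstacle: one must verify that the restriction of $\liea$ to the relevant sectors is precisely the algebra $\liea_\SC$ counted in the statement, and that the cross-sector contributions genuinely drop out of the anticommutator expectations. The Gram-matrix core of the argument is otherwise routine.
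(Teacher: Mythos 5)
Your proposal is correct and follows essentially the same route as the paper's proof: you identify the derivative generators $\tilde{H}_i$ as elements of $i\liea$ via adjoint invariance of the DLA, recognize the QFIM as four times the covariance Gram matrix of these operators (an observation the paper makes explicitly when it writes $[F]_{jk}=4\Re[\mathrm{Cov}_{\ket{\psi}}(\tilde{H}_j,\tilde{H}_k)]$, before expanding it as a sum of rank-one matrices generated by vectors in the span of the $\dim(\liea_{\SC})$ coefficient vectors $\vec{a}_\nu$), and bound the rank by the dimension of the containing space. Your final compression step $P_W\tilde{H}_iP_W$ is precisely the paper's device of working with reduced states and operators $O^{(\SC)}=Q_m O Q_m^{\dagger}$ on the invariant subspace from the outset, and the identification of the restricted algebra with $\liea_{\SC}$ that you flag as delicate is in fact how the paper defines $\liea_{\SC}$, so no gap remains.
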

We remark that $\dim(\liea_\SC)\leq \dim(\liea)$, and hence $\dim(\liea)$ also upper bounds $R_\mu$. Theorem~\ref{theo:1} shows that, at most, the QNN can explore  $\dim(\liea_\SC)$ relevant and independent directions in the state space. And thus, a sufficient condition for overparametrization is that
\begin{equation}
    M\geq \dim(\liea_\SC)\,.
\end{equation}
Note here that the number of parameters for overparametrization depends on the data in $\SC$ and on the set of generators $\GC$.  The latter implies that: 1) Different ansatzes for the QNN can be overparametrized for different depths even when using the same dataset, 2) The same QNN ansatz can reach overparametrization for different depths when used for two different datasets.

Then, as shown in the numerical results below, in many cases the QNN is found to be overparametrized when
\begin{equation}\label{eq:crit-params}
    M\sim\dim(\liea_\SC)\,.
\end{equation}
Evidently, $M$ in Eq.~\eqref{eq:crit-params} can be intractable for ansatzes where $\dim(\liea_\SC)\in\OC(b^n)$ with $b>1$ (e.g. controllable systems~\cite{larocca2021diagnosing}). More promising, however, are QNNs where $\rm dim(\liea_\SC) \in\OC(\poly(n))$, as here the QNN can be overparametrized for a number of parameters $M\in\OC(\poly(n))$. Below we show examples of ansatzes that can achieve overparametrization with polynomially deep circuits.

Here we note that Definition~\ref{def:overparametrization} allows us to connect the notion of overparametrization to that of the QNN's capacity. We recall that the capacity (or power) of a QNN quantifies the breadth of functions
that it can capture~\cite{coles2021seeking}. While there is no unique definition of capacity, we here consider two definitions for the so-called effective quantum dimension, which measures the power of the QNN. First, following~\cite{haug2021capacity}, we can define the average effective quantum dimension of a QNN:
\begin{equation}\label{eq:eff_dim_1}
    D_1(\thv)=\mathbb{E}\left[\sum_{i=1}^M\IC(\lambda^{i}_\mu(\thv))\right]\,,
\end{equation}
where $\lambda^{i}_\mu(\thv)$ are the eigenvalues of the QFIM for the state $\ket{\psi_\mu}$, and where $\IC(x)=0$ for $x=0$, and $\IC(x)=1$ for $x\neq1$. Here the expectation value is taken over the probability distribution  that samples input states from the dataset.

The second definition follows from~\cite{abbas2020power}. In the $n\rightarrow\infty$ limit, the effective quantum dimension of~\cite{abbas2020power} converges to
\begin{equation}\label{eq:eff_dim_2}
    D_2=\max_{\thv}\left(\rank\left[\widetilde{F}(\thv)\right]\right)\,,
\end{equation}
where $\widetilde{F}(\thv)$ is the classical Fisher Information matrix obtained as
\small
\begin{equation}
    \widetilde{F}(\thv)=\mathbb{E}\left[\frac{\partial\log(p(\ket{\psi_\mu},y_\mu;\thv))}{\partial\thv}\frac{\partial\log(p(\ket{\psi_\mu},y_\mu;\thv))}{\partial\thv}^T\right].
\end{equation}
\normalsize 
Here, $p(\ket{\psi},y;\thv)$, describes the joint relationship between an input $\ket{\psi}$ and an output $y$ of the QNN. In addition, the expectation value is taken over the probability distribution that samples input states from the dataset.

Then, the following theorem holds.
\begin{theorem}\label{theo:2}
The model capacity, as quantified by the effective dimensions of Eqs.~\eqref{eq:eff_dim_1} or~\eqref{eq:eff_dim_2}, is upper bounded as
\begin{equation}
    D_1(\thv)\leq \dim(\liea_\SC),\quad D_2\leq \dim(\liea_\SC)\,.
\end{equation}
Moreover, when the QNN is overparametrized according to Definition~\ref{def:overparametrization}, $D_1(\thv)$ achieves its maximum value on at least one point of the landscape. 
\end{theorem}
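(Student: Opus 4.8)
The plan is to treat the two effective dimensions separately, reducing each to the rank bound already established in Theorem~\ref{theo:1}, and then to obtain the saturation claim as a short consequence of Definition~\ref{def:overparametrization}.

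First I would dispose of $D_1(\thv)$. Since $\IC$ is the indicator of a nonvanishing eigenvalue, the inner sum $\sum_{i=1}^M\IC(\lambda^i_\mu(\thv))$ simply counts the nonzero eigenvalues of the positive semidefinite QFIM, i.e.\ it equals $\rank[F_\mu(\thv)]$. Hence
\begin{equation}
D_1(\thv)=\mathbb{E}\big[\rank[F_\mu(\thv)]\big]\,.
\end{equation}
Theorem~\ref{theo:1} gives $\rank[F_\mu(\thv)]\leq R_\mu\leq\dim(\liea_\SC)$ for every $\mu$ and every $\thv$; since $\dim(\liea_\SC)$ is a constant, taking the expectation over the data distribution preserves the inequality and yields $D_1(\thv)\leq\dim(\liea_\SC)$.

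The bound on $D_2$ is the main obstacle, since it requires linking the \emph{classical} Fisher information $\widetilde{F}(\thv)$ to the quantum one. The route I would take rests on one structural fact: for the periodic ansatz of Eq.~\eqref{eq:PSA_ansatz}, differentiating with respect to $\theta_i$ brings down the corresponding generator conjugated by the surrounding circuit, so that $\ket{\partial_i\psi_\mu(\thv)}=-i\,\Omega_i(\thv)\ket{\psi_\mu(\thv)}$, with a Hermitian operator $\Omega_i(\thv)=U_{>i}(\thv)H_{k(i)}U_{>i}(\thv)\dg$ that is \emph{independent of $\mu$} and lies in $\liea$ (the DLA is invariant under conjugation by the group $\mathbb{G}=e^{\liea}$), acting effectively as an element of $\liea_\SC$ on the symmetry sector containing the $\ket{\psi_\mu}$. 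Writing the output probabilities as $p(y|\psi_\mu;\thv)=\bra{\psi_\mu(\thv)}M_y\ket{\psi_\mu(\thv)}$ for a fixed measurement $\{M_y\}$, and noting that the $\thv$-independent prior on inputs drops out upon differentiation, each score vector has components
\begin{equation}
\big[s^{\mu,y}\big]_i=\frac{i\,\bra{\psi_\mu(\thv)}[\Omega_i(\thv),M_y]\ket{\psi_\mu(\thv)}}{p(y|\psi_\mu;\thv)}\,.
\end{equation}
Since this depends on $i$ only through $\Omega_i(\thv)$, every $s^{\mu,y}$ lies in the image of the adjoint of the linear map $\vec v\mapsto\sum_i v_i\,\Omega_i(\thv)$, whose rank is $\dim\,\mathrm{span}\{\Omega_i(\thv)\}\leq\dim(\liea_\SC)$. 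As $\widetilde{F}(\thv)$ is the data-averaged Gram matrix of these score vectors, its support is contained in that common subspace, giving $\rank[\widetilde{F}(\thv)]\leq\dim(\liea_\SC)$ uniformly in $\thv$, hence $D_2\leq\dim(\liea_\SC)$. Equivalently, one can invoke $\widetilde{F}(\thv)\preceq\mathbb{E}[F_\mu(\thv)]$ together with the elementary fact that $A\preceq B$ implies $\rank A\leq\rank B$ for positive semidefinite matrices, and then bound the support of $\mathbb{E}[F_\mu(\thv)]$ by the same subspace.

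Finally, for the saturation statement I would use Definition~\ref{def:overparametrization} directly. For any $\thv$ we have $\rank[F_\mu(\thv)]\leq R_\mu$, so $D_1(\thv)\leq\mathbb{E}[R_\mu]$ is an upper bound valid everywhere on the landscape. Overparametrization guarantees a point $\thv^\ast$ at which all QFIMs simultaneously attain their maximal ranks, $\rank[F_\mu(\thv^\ast)]=R_\mu$ for every $\mu$; there $D_1(\thv^\ast)=\mathbb{E}[R_\mu]$, so $\thv^\ast$ realizes the maximal value of $D_1$. I expect the delicate points to be the careful bookkeeping of the symmetry sector (ensuring the $\mu$-independent $\Omega_i$ are counted within $\liea_\SC$ rather than the full $\liea$) and the standard but essential classical--quantum Fisher inequality underlying the positive-semidefinite route.
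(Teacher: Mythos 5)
Your proposal is correct, and for the $D_2$ bound you take a genuinely different primary route from the paper. The $D_1$ bound and the saturation claim coincide with the paper's argument: both identify $D_1(\thv)=\mathbb{E}\left[\rank[F_\mu(\thv)]\right]$, apply Theorem~1 under the expectation, and read off saturation from the simultaneity clause in Definition~3. For $D_2$, the paper proceeds abstractly: it invokes $\widetilde{F}_\mu(\thv)\preceq F_\mu(\thv)$, averages to get $\widetilde{F}(\thv)\preceq\mathbb{E}_\mu[F_\mu(\thv)]$, converts the Loewner order to a rank inequality via the operator-monotonicity fact $A\preceq B\Rightarrow A^0\preceq B^0$ (taking traces), and then---this is the step your ``equivalently'' clause correctly anticipates---invokes a separately stated proposition that all the $F_\mu(\thv)$ share a common column space spanned by the $\mu$-independent vectors $\{\vec{a}_\nu\}_{\nu=1}^{\dim(\liea_\SC)}$, which is what bounds $\rank[\mathbb{E}_\mu[F_\mu(\thv)]]$; this common-subspace step is essential, since the rank of an average can exceed the individual ranks. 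Your primary route instead computes the classical scores directly, $\partial_i\log p = i\bra{\psi_\mu(\thv)}[\Omega_i(\thv),M_y]\ket{\psi_\mu(\thv)}/p$, and observes that they all factor through evaluation against the $(\mu,y)$-independent family $\{\Omega_i(\thv)\}$, which spans at most $\dim(\liea_\SC)$ directions on the relevant symmetry sector, so the averaged Gram matrix $\widetilde{F}(\thv)$ is supported on a fixed subspace of that dimension. This is more elementary and self-contained---it bypasses both the classical--quantum Fisher inequality and the operator-monotonicity lemma---at the modest cost of assuming the output law arises from a fixed POVM on the evolved states, which is the setting the paper works in anyway (classical post-processing only coarse-grains the POVM and preserves the argument). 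Your fallback route is exactly the paper's, with the cited $A^0\preceq B^0$ fact replaceable by the elementary kernel-inclusion argument for positive semidefinite matrices, and you correctly flagged the two genuinely delicate points: the common-subspace step across $\mu$ and the sector bookkeeping reducing $\liea$ to $\liea_\SC$.
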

Theorem~\ref{theo:2} provides an operational meaning to the overparametrization definition in terms of the model's capacity. Specifically, the onset of the overparametrization arises when the model's capacity in Eq.~\eqref{eq:eff_dim_1} can get saturated. Moreover, we here see that increasing the number of parameters can never increase the model capacity beyond $\dim(\liea_\SC)$.

Note that Definition~\ref{def:overparametrization} relates the overparametrization phenomenon with the rank of the QFIM and the possibility of exploring all relevant directions in the state space. One can also relate the notion of overparametrization with the rank of the Hessian and the relevant directions in the loss function landscape. Consider the case when the loss function is of the form
\begin{equation}\label{eq:lineal-loss}
    \LC(\thv)=\sum_{\ket{\psi_\mu}\in\SC}c_\mu \Tr[U(\thv)\dya{\psi_\mu}U\ad(\thv)O]\,,
\end{equation}
where $c_\mu$ are real coefficients associated with each state $\ket{\psi_\mu}$ in $\SC$, and where $O$ is a Hermitian operator. Such loss functions arise for supervised quantum machine learning~\cite{havlivcek2019supervised,pesah2020absence,sharma2020trainability}, autoencoding~\cite{romero2017quantum}, principal component analysis~\cite{larose2019variational,bravo2020quantum,cerezo2020variational}, dynamical simulation~\cite{yuan2019theory,cirstoiu2020variational,commeau2020variational,gibbs2021long}, and, more generally, for variational quantum algorithms~\cite{cerezo2020variationalreview,bharti2021noisy}. Then, the following theorem holds.
\begin{theorem}\label{theo:3}
  Let $\nabla^2\LC(\thv_*)$  be the Hessian for a loss function of the form of Eq.~\eqref{eq:lineal-loss} evaluated at the optimal set of parameters $\thv_*$. Then, its rank is upper bounded as
    \begin{equation}
        \rank[\nabla^2\LC(\thv_*)]\leq\min \{ \dim(\liea_\SC), 2dr - r^2 - r\}\,,
    \end{equation}
    where $r=\min\{\rank[\sum_{\mu} c_{\mu} \dya{\psi_{\mu}}],\rank[O]\}$, and $d$ is the Hilbert space dimension.
\end{theorem}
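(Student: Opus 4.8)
The plan is to prove the two upper bounds separately and then take their minimum. Throughout I write $\rho=\sum_\mu c_\mu\dya{\psi_\mu}$ and $\rho_*=U(\thv_*)\rho U\ad(\thv_*)$, so that the loss reads $\LC(\thv)=\Tr[\Phi(\thv)\,O]$ with the \emph{orbit map} $\Phi(\thv)=U(\thv)\rho U\ad(\thv)$. The central observation is that $\Phi$ takes values in the adjoint orbit $\OC_\rho=\{V\rho V\ad : V\in\UC(d)\}$, and in fact in the smaller $\mathbb{G}$-orbit $\{V\rho V\ad : V\in\mathbb{G}\}\subseteq\OC_\rho$, whose tangent space at $\rho_*$ is $T_{\rho_*}=\{[X,\rho_*] : X\in\liea_\SC\}$. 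Because $g(\sigma)=\Tr[\sigma O]$ is linear in the ambient Hermitian matrices, $\rank[\nabla^2\LC(\thv_*)]$ should ultimately be controlled by $\dim T_{\rho_*}$.

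First I would establish the reduction $\nabla^2\LC(\thv_*)=J\trp\,\mathsf{H}\,J$, where $J=D\Phi(\thv_*)$ is the differential of the orbit map into $T_{\rho_*}$ and $\mathsf{H}$ is the intrinsic Hessian of $g$ restricted to the $\mathbb{G}$-orbit. Working in a body frame $W(\thv)=U\ad(\thv_*)U(\thv)$ (so $W(\thv_*)=\id$ and $X_i:=\partial_i W|_{\thv_*}$ are conjugated generators, which lie in $\liea_\SC$ because $\liea$ is closed under the adjoint action of $\mathbb{G}$), one expands $\partial_i\partial_j\Phi$ into a part quadratic in the $X_i$ and a part linear in the second derivatives of $W$. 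Tracing against $O$, the quadratic part reproduces $J\trp\mathsf{H}J$, while the second-derivative part is the ``connection'' contamination; this last term is proportional to $\Tr(O\,[X,\rho_*])$ summed over tangent directions and therefore \emph{vanishes} by the first-order optimality condition at $\thv_*$. Hence $\rank[\nabla^2\LC(\thv_*)]\leq\dim T_{\rho_*}$.

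It then remains to bound $\dim T_{\rho_*}$ in two ways. On one hand $T_{\rho_*}$ is the image of $\liea_\SC$ under $X\mapsto[X,\rho_*]$, so $\dim T_{\rho_*}\leq\dim(\liea_\SC)$. On the other hand $T_{\rho_*}\subseteq T_{\rho_*}\OC_\rho$, and a stabilizer count bounds the dimension of the full unitary orbit: for $\rho$ of rank $r$, the kernel contributes a $\UC(d-r)$ factor and the nonzero spectrum at least a $\UC(1)^r$ factor, giving $\dim\OC_\rho\leq d^2-(d-r)^2-r=2dr-r^2-r$. Writing the loss equivalently as $\LC(\thv)=\Tr[\rho\,U\ad(\thv)O\,U(\thv)]$ runs the identical argument through the orbit of $O$, yielding the same estimate with $\rank[O]$ in place of $\rank[\rho]$; since $2ds-s^2-s$ is non-decreasing in $s$ over the relevant range $0\leq s\leq d$, using the operator of smaller rank gives $2dr-r^2-r$ with $r=\min\{\rank[\rho],\rank[O]\}$. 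Combining the two estimates for $\dim T_{\rho_*}$ yields the claim.

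The main obstacle I anticipate is making the reduction of the second paragraph fully rigorous, namely arguing that the second-derivative (connection) term is genuinely annihilated when traced against $O$. This requires the first-order stationarity $\Tr(O\,[X,\rho_*])=0$ to hold for \emph{all} $X\in\liea_\SC$, i.e.\ that the orbit-gradient, and not merely the parameter-space gradient, vanishes at $\thv_*$; this is automatic once the per-parameter generators $\{X_i\}$ span the stabilizer-complement of $\rho_*$ in $\liea_\SC$, so that $J$ is onto $T_{\rho_*}$, which is precisely the reachability one expects near the optimum. A secondary subtlety is the stabilizer computation, where spectral degeneracies only enlarge the stabilizer and hence preserve $\dim\OC_\rho\leq 2dr-r^2-r$, with equality attained at generic spectra.
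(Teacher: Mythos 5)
Your overall route is genuinely different from the paper's, and the dimension-counting half of it is sound. The paper never introduces the orbit map: it computes the Hessian entries explicitly, invokes the known form of the propagator at the solution, $U(\thv_*)=R\ad Q$ with $Q,R$ diagonalizing $\rho=\sum_\mu c_\mu\dya{\psi_\mu}$ and $O$ with oppositely ordered spectra, writes $\nabla^2\LC(\thv_*)$ as a sum of rank-one projectors weighted by coefficients $(o_m e_m - o_m e_n)$, and then (i) expands the conjugated generators in a basis of $\liea_\SC$ for the first bound and (ii) performs a careful combinatorial splitting of the double sum, showing that only index pairs with $m>n$ and $n\leq r$ survive, which yields $2\binom{r}{2}+2r(d-r)=2dr-r^2-r$. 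Your stabilizer count $d^2-(d-r)^2-r$ reproduces this number exactly and explains it geometrically, your observation that degeneracies only enlarge the stabilizer is correct, and the two-sided $\rho$/$O$ argument with monotonicity of $s\mapsto 2ds-s^2-s$ on the relevant integer range is fine.

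The genuine gap sits exactly where you flagged it, and your proposed fix does not repair it. Annihilating the connection term requires $\Tr(O[X,\rho_*])=\Tr(X[\rho_*,O])=0$ for \emph{all} directions $X$ appearing in the second derivatives of $W$ (for a product-of-exponentials circuit these include the commutators $[\tilde{H}_i,\tilde{H}_j]$, not merely the first-order directions $\tilde{H}_i$). Parameter-space stationarity only gives this along the realized gradient directions, and your surjectivity hypothesis---that $J$ be onto $T_{\rho_*}$---is neither assumed in the theorem nor available in general: it is impossible whenever $M<\dim T_{\rho_*}$ (the underparametrized case, which the theorem covers), and the paper's own Supplementary numerics report ``fairly good'' local minima at which the Hessian rank exceeds $\dim(\liea_\SC)$, demonstrating concretely that stationarity in parameter space does not suffice. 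The correct patch is the one the paper uses implicitly: ``optimal'' means the global solution, where by the von Neumann trace inequality $\rho_*=U(\thv_*)\rho U\ad(\thv_*)$ is diagonal in the eigenbasis of $O$, hence $[\rho_*,O]=0$ and $\Tr(O[X,\rho_*])=0$ for \emph{every} $X\in\mathfrak{u}(d)$, with no surjectivity of $J$ required. With that substitution your reduction $\nabla^2\LC(\thv_*)=J\trp\,\mathsf{H}\,J$ becomes valid and the remainder of your argument closes.
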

Theorem~\ref{theo:3} shows that the maximum number of relevant directions around the global minima of the optimization problem is always smaller than $\dim(\liea_\SC)$. Here, we again numerically find that in the overparametrization regime adding more parameters only adds zero-valued eigenvalues to the Hessian.  We finally remark that Theorem~\ref{theo:3} imposes a maximal rank on the Hessian when evaluated at the solution, but in general the Hessian can have a rank larger than $\dim(\liea_\SC)$ at other points in the landscape. 

Note that in principle one can define overparametrization as the rank of the Hessian being saturated at the solution. However, as discussed in the Methods, this definition could have potential issues.

\subsection{Numerical Results}

Here we numerically illustrate the overparametrization phenomenon and the associated computational phase transition. We consider three different optimization tasks: the Variational Quantum Eigensolver (VQE), unitary compilation, and quantum autoencoding. We note that the overparametrization phenomenon has been empirically observed for the first two tasks respectively in~\cite{wiersema2020exploring,kim2021universal} and~\cite{kiani2020learning}. The simulations were performed with the open-source library \texttt{Qibo} \cite{efthymiou2020qibo,efthymiou2021qibo_zenodo}, and the details can be found in the Supplemental Information.

\subsubsection{Variational Quantum Eigensolver}

First, we use the VQE algorithm \cite{peruzzo2014variational,bravo2020scaling,consiglio2021variational} to minimize the loss function 
\begin{equation} \label{eqn:hva_cost_function}
    E(\vec{\theta}) = \langle \psi(\vec{\theta}) |H_{\TFIM}| \psi(\vec{\theta}) \rangle\,,
\end{equation}
and find the ground state of the Hamiltonian of the transverse field Ising model $H_{\TFIM}$. Here, $| \psi(\vec{\theta}) \rangle=U(\thv)\ket{+}^{\otimes n}$ and  $H_{\TFIM} = -\sum_{i=1}^{n_f} \sigma^z_i\sigma^z_{i+1} - h \sum_{i=1}^n \sigma^x_i$, where  $\sigma^\mu_i$ denotes the $\mu$-Pauli matrix (with $\mu=x,z$) acting on qubit $i$, and $h$ is the strength of the transverse field. We set $h=1$ and consider both open ($n_f=n-1$) and closed ($n_f=n$) boundary conditions. In the latter, $\sigma^\mu_{n+1}=\sigma^\mu_1$. We employ a Hamiltonian variational ansatz for the QNN~\cite{wecker2015progress,wiersema2020exploring}. This ansatz has two parameters per layer and is precisely of the form in~\eqref{eq:PSA_ansatz} (see Methods for a detailed description of the ansatz).

\begin{figure}[t!]
    \centering
    \includegraphics[width=1\columnwidth]{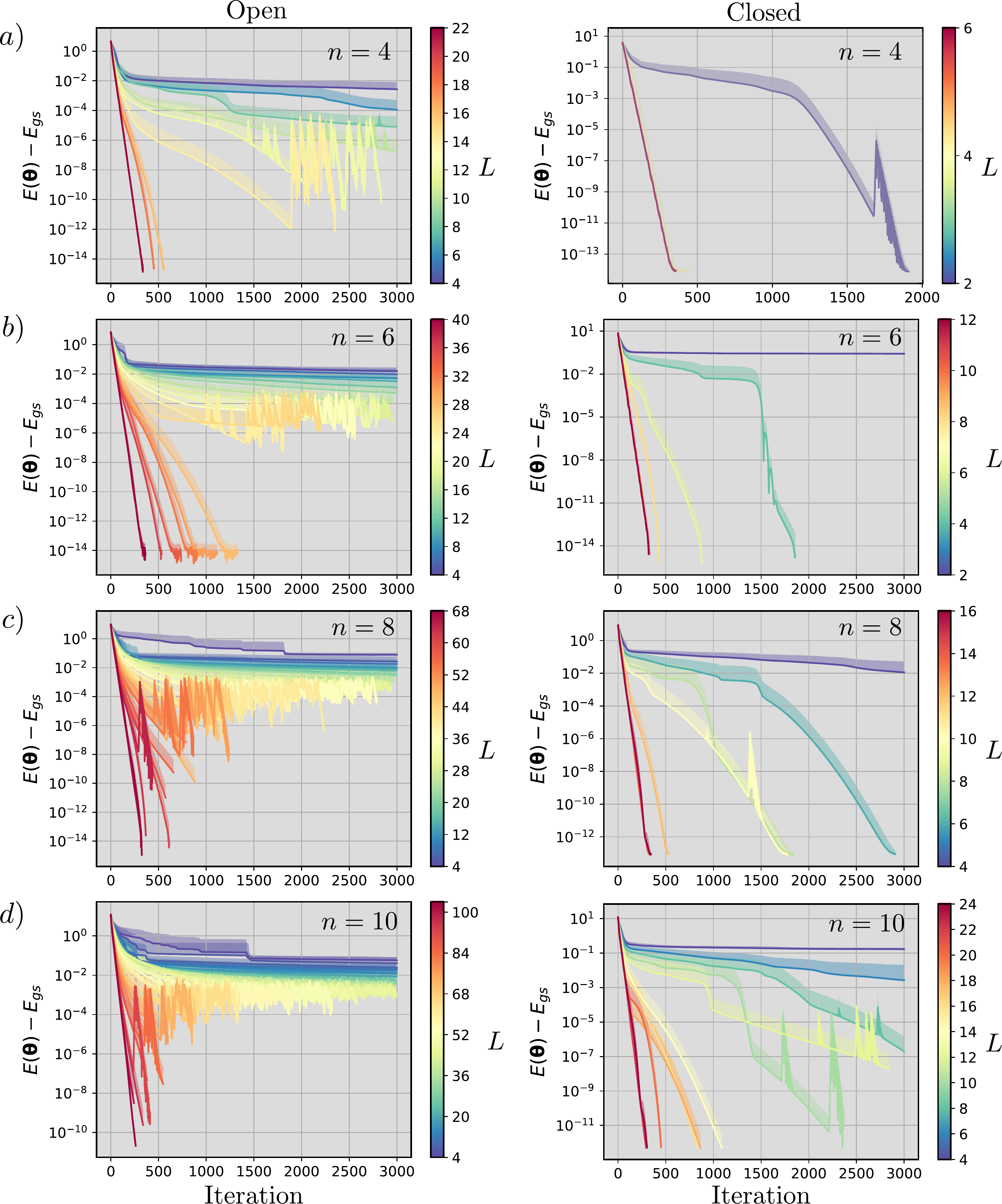}
    \caption{\textbf{Training curves for VQE implementation.} The loss function value minus the exact ground-state energy ($E_{gs}$) is plotted versus iteration. We used a Hamiltonian variational ansatz with open (left) and closed (right) boundary conditions to solve the VQE task in Eq.~\eqref{eqn:hva_cost_function} for a) $n=4$, b) $n=6$, c) $n=8$, and d) $n=10$ qubits. Solid lines represent the average over 50 random initialized runs while the shaded regions correspond to the standard deviation. }
    \label{fig:hva_loss_vs_iter}
\end{figure}

As shown in~\cite{larocca2021diagnosing}, the dimension of the DLA associated with the ansatz is given by~\cite{SI-overparam}
\begin{equation}\label{eqn:HVA_DLAdim} 
 {\rm dim}(\liea^{\rm closed}_\SC) = \frac{3}{2}\,n \quad,\quad {\rm dim}(\liea^{\rm open}_\SC) = n^2\,, 
\end{equation}
where the superscripts indicate closed and open boundary conditions in the ansatz and in $H_{\TFIM}$.   Hence, from our theoretical results, we expect that both of these ansatzes can be overparametrized with only a polynomial number of  parameters.

Figure~\ref{fig:hva_loss_vs_iter} shows the results of minimizing the loss in Eq.~\eqref{eqn:hva_cost_function}, for problem sizes of $n=4,6,8,10$ qubits and for ansatzes with different depths $L$ (i.e.,  $2L$ parameters), with both open and closed boundary conditions. In all cases, we averaged over 50 random parameter initializations.  First, we note that one can always  observe the onset of overparametrization through a computational phase transition  whereby the convergence of the optimization dramatically increases when increasing the number of parameters past some threshold. That is, for a small number of layers, the algorithm is unable to accurately find the ground state, while for a large number of layers the algorithm always rapidly converges to the solution. In fact,  we observe that the loss function decreases exponentially with each optimization step when the number of layers is large enough.

To analyze the number of parameters for which the overparametrization occurs, Figure~\ref{fig:hva_traps}(a) shows the success probability, i.e., the fraction of randomly-initialized instances that converged within $10^{-7}$ of the true solution. Here, one can see the phase transition at the onset of overparametrization. Indeed, at  $M\sim\dim(\liea_\SC)$, the success probability rapidly goes to one. This is due to the fact that the optimization hypersurface becomes more favorable by the removal of false local minima, and thus one can  obtain higher-quality solutions with less iterations. Figure~\ref{fig:hva_traps}(a) also shows that  further increasing the number of parameters past  $\dim(\liea_\SC)$ can in fact lead to the QNN having a higher probability of converging to the solution. There exists a point, however, for  which the overparametrization saturates and there is no visible improvement in convergence speed or quality of the solution found. We found that the saturation number of parameters grows linearly for closed boundary conditions and quadratically for open boundary conditions, and thus these saturation numbers have the same scaling as their corresponding $\dim(\liea_\SC)$.

\begin{figure}[t!]
    \centering
    \includegraphics[width=1\columnwidth]{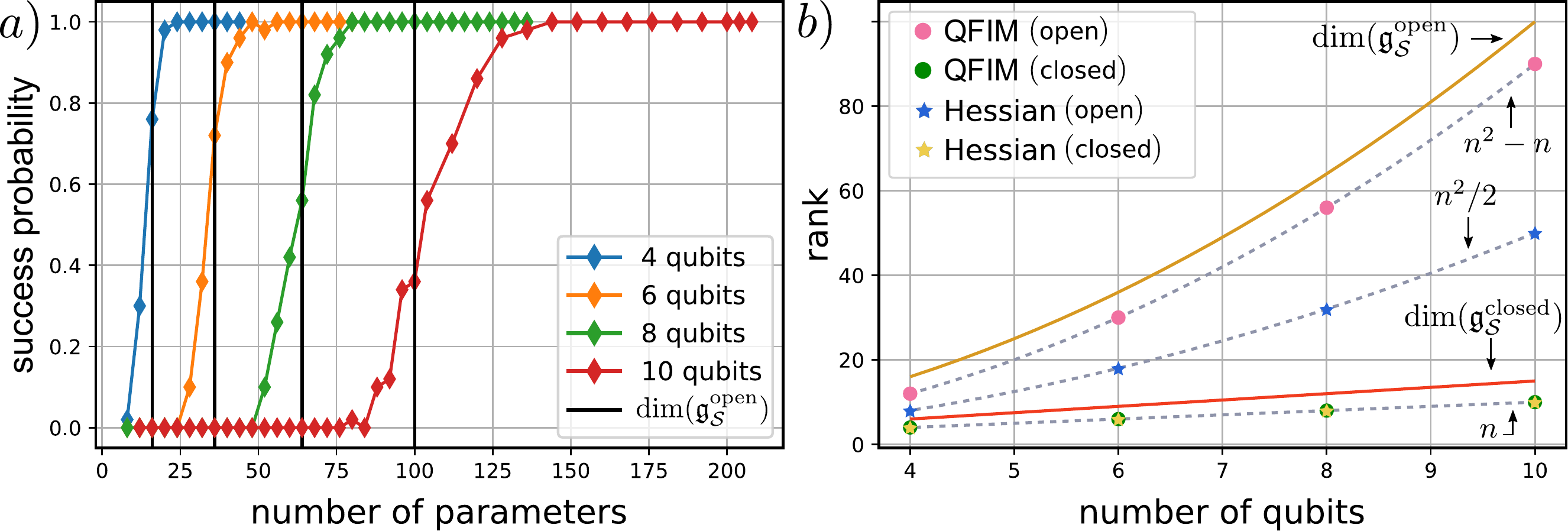}
    \caption{\textbf{Overparametrization threshold for VQE implementation.} a) The success probability (i.e., fraction of instances  that converged to the global optimum within an error of $10^{-7}$) is plotted versus number of parameters. Results are obtained from $50$ randomly-initialized instances, for $n=4-10$ qubits and for the Hamiltonian variational ansatz with open boundary conditions. The vertical black lines indicate the dimension of the DLA. b) The rank of the QFIM and Hessian is plotted versus number of qubits. The rank of the QFIM was evaluated at the optima and at random points in the landscape. The rank of the Hessian was evaluated at the optima. Dashed lines indicate the functional dependence of the computed ranks. Here, the ansatz had a number of parameters for which overparametrization had been fully achieved. }
    \label{fig:hva_traps}
\end{figure}

Finally, Fig.~\ref{fig:hva_traps}(b) shows the computations of the ranks of the QFIM and Hessian at the overparametrization threshold for Hamiltonians and ansatzes with open and closed boundary conditions.  The rank of the QFIM was computed at the global optima, and also at random points in the landscape, and it was found to be the same in all cases. The rank of the Hessian was computed at the global optima. First, let us note that these results show that Theorem~\ref{theo:1} and Theorem~\ref{theo:3} hold, as the dimension of the associated DLA is always an upper bound for the ranks. As shown in the figure by the dashed lines, we can find the explicit dependence for the ranks as a function of the system size. In the Supplementary Information we present additional plots for the ranks of the QFIM and Hessian.

\subsubsection{Unitary compilation}

Let us now consider a unitary compilation task. Unitary compilation refers to decomposing a target unitary into a sequence of control pulses or quantum gates that can be directly implemented on quantum hardware~\cite{chong2017programming,haner2018software,venturelli2018compiling,khatri2019quantum,sharma2019noise}. 

In variational unitary compiling~\cite{khatri2019quantum,sharma2019noise}, one trains a parametrized quantum circuit $U(\thv)$ so that its action matches that of a target unitary $V$ (up to a global phase). Thus, one minimizes the loss function 
\begin{equation}
\label{eq:unitary_cost}
    \LC(\thv)= 1-|\Tr[V\ad U(\vec{\theta})]|^2/d^2\,.
\end{equation}
Here, $\LC(\thv)$ can be efficiently evaluated on a quantum computer with the Hilbert-Schmidt test~\cite{khatri2019quantum}. While  $\LC(\thv)$ is not exactly of the form in~\eqref{eq:lineal-loss}, we also prove in the Methods section a theorem showing that the rank of its Hessian is also upper bounded by $\dim(\liea_\SC)$ at the global optima.

We employ a hardware efficient ansatz~\cite{kandala2017hardware} for $U(\thv)$ composed of alternating layers of single qubit rotations and entangling gates. The number of parameters is therefore $M=2n + L(4n-4)$ (see Methods for a detailed description of the ansatz). We sample the target unitary $V$ from the Haar measure in the unitary group of degree $d$. As shown in~\cite{larocca2021diagnosing}, the dimension of the DLA associated with this ansatz is
\begin{equation} 
\dim(\liea_\SC)= 4^n\,,
\end{equation}
and thus grows exponentially with the number of qubits. 

\begin{figure}[t!]
    \centering
    \includegraphics[width=1\columnwidth]{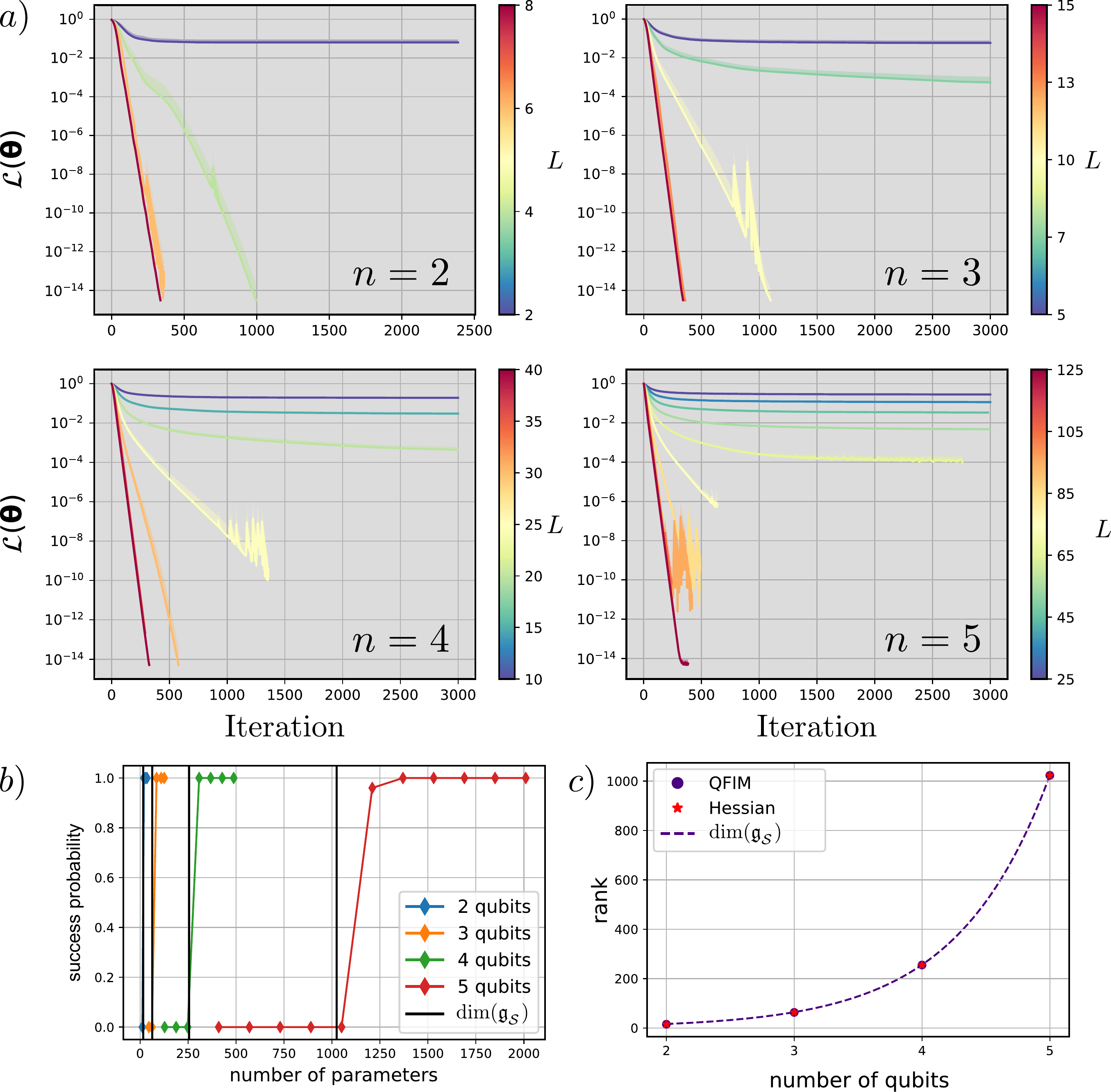}
    \caption{\textbf{Unitary compilation implementation.} a) The loss function is plotted versus iteration. The results are obtained for problem sizes of $n=2,3,4,5$ qubits and for an $L$-layered hardware efficient ansatz. For each point, we averaged the results over $50$ randomly-initialized problem instances. b)~The success probability (i.e., fraction of instances that converged to the global optimum within an error of $10^{-7}$) is plotted versus number of parameters. The vertical lines indicate the dimension of the DLA. c)~The rank of the QFIM and Hessian is plotted versus number of qubits. The rank of the QFIM was evaluated at the optimum and at random points in the landscape. The rank of the Hessian was evaluated at the optimum.  Here, the ansatz had a number of parameters for which overparametrization had been fully achieved.  }
    \label{fig:hea_results}
\end{figure}

Figure~\ref{fig:hea_results}(a) shows the results of minimizing the loss function in Eq.~\eqref{eq:unitary_cost}, for problem sizes of $n=2,3,4,5$ qubits, and for ansatzes with different depths $L$.  In all cases we averaged over $50$ random parameter initializations. Here we can again observe that as the depth of the circuit increases, the convergence towards the global optimum improves dramatically until reaching a saturation point. Figure~\ref{fig:hea_results}(b) plots the success probability for randomly-initialized instances. Similar to the VQE implementation, one finds that around $\dim(\liea_\SC)$ parameters are required to consistently find high-quality solutions, and that the probability of convergence to the global optimum undergoes a drastic phase transition when the number of parameters is around $\dim(\liea_\SC)$. This result again implies a simplification of the optimization landscape, where local traps disappear.  We also numerically verify Theorem~\ref{theo:1} and Theorem~\ref{theo:3}. Namely, Fig.~\ref{fig:hea_results}(c) plots the rank of the QFIM and the Hessian. The QFIM was evaluated at the global optima and at random points in the landscape, while the Hessian was evaluated at the global optima.  For all cases we found that the ranks are equal to $\dim(\liea)-1$. 

\subsubsection{Quantum autoencoding}

Finally, we present results for the archetypal QML task of quantum autoencoding \cite{romero2017quantum,ma2020compression}. A quantum autoencoder is a special type of QNN that can be used to compress quantum information. Analogously to classical autoencoders, the idea is to reduce the dimensionality of the states in a dataset through the action of an encoder $U(\vec{\theta})$. Once compressed, the states belong  to a smaller dimensional Hilbert space known as the latent space. The states compressed into this space can subsequently be recovered with high fidelity at a later time by a decoder $U^\dagger(\vec{\theta})$.

\begin{figure}[t!]
    \centering
    \includegraphics[width=.9\columnwidth]{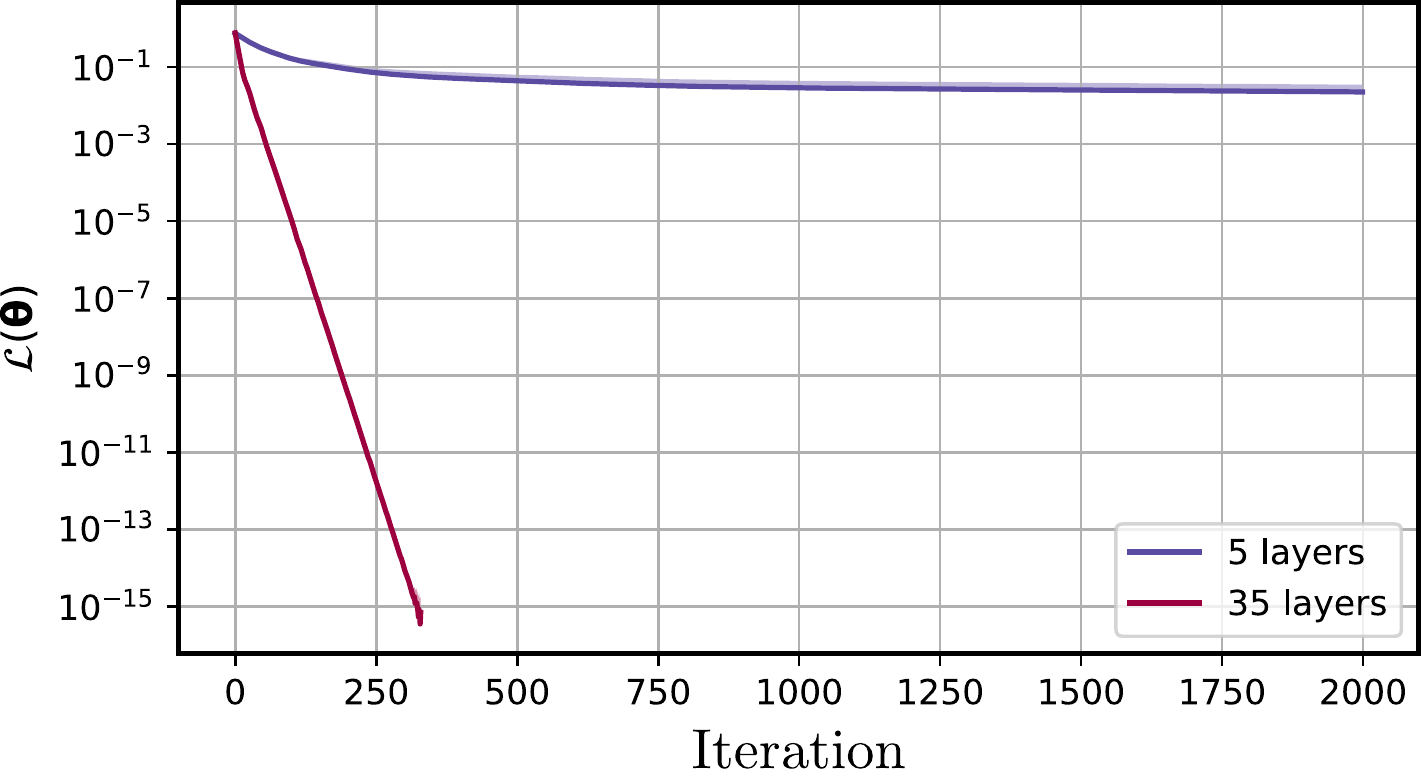}
    \caption{\textbf{Quantum autoencoder implementation.} The loss function value is plotted versus the number of iterations. Results are obtained for a $n=4$ qubit problem when using a layered hardware efficient ansatz with $L=5$ and $L=35$ layers.}
    \label{fig:auto_loss_vs_iter}
\end{figure}

Consider a bipartite quantum system $AB$ of $n_A$ and $n_B$ qubits, respectively, and let $\ket{\psi_\mu}$ be states from the training set $\SC$. The goal of the quantum autoencoder is to train an encoding parametrized quantum circuit $U(\thv)$ to compress the  states in $\SC$ onto subsystem $A$, so that one can discard the qubits in subsystem $B$ without losing information. A possible loss function here is given by~\cite{romero2017quantum}
\small
\begin{equation}\label{eq:cost-AE}
\LC(\vec{\theta}) =\! \sum_{\ket{\psi_\mu}\in\SC}\!\!\!\left( 1-\frac{1}{|\SC|} \Tr[ U(\vec{\theta}) \dya{\psi_\mu}  \,U^\dagger(\vec{\theta}) O]\right)\,,
\end{equation} 
\normalsize
with $O=\dya{0}^{\otimes n_b}\otimes\id_{A}$, and where $\id_{A}$ denotes the identity on subsystem $A$. 
We note that an alternative local version of this loss function was proposed in~\cite{cerezo2020cost} to avoid barren plateaus issues. However, since we here consider small problem sizes, we use the loss in~\eqref{eq:cost-AE}.

Our results were obtained for a system of $n=4$ qubits (with $n_B$=2)  and for the same hardware efficient ansatz used for unitary compilation. The dataset consisted of four states drawn from the NTangled dataset~\cite{schatzki2021entangled}, a quantum dataset composed of states with different amounts and types of multipartite entanglement. As shown in Fig.~\ref{fig:auto_loss_vs_iter}, we can again see that an overparametrized QNN is able to accurately reach the global optima in few iterations. When computing the rank of the QFIM at random points of the landscape, we found that the rank is always $30$. This is in contrast to the dimension of the DLA, which is $\dim(\liea_\SC)=256$, and thus the latter leads to some hope that overparametrization can be achieved with a number of parameters that is much smaller than $\dim(\liea_\SC)$.

Let us here note a crucial difference between the results obtained for the Hamiltonian variational ansatz and for the hardware efficient ansatz. Namely, for the Hamiltonian variational ansatz the dimension of the DLA scaled polynomially with $n$, whereas for the  hardware efficient ansatz the dimension of the DLA grows exponentially with the system size. Thus, in the first case the system becomes overparametrized at a polynomial number of parameters, while the latter case one can require an exponentially large one. This makes it so that overparametrization can be unachievable in practice for large problem sizes when using an ansatz with an exponentially large DLA. In addition, it has been shown that the ansatzes with exponentially large DLAs can exhibit barren plateaus~\cite{larocca2021diagnosing}, thus further preventing their practical use.

\section{Discussion}


Quantum Machine Learning (QML) is an emerging field that aims to analyze (either classical or quantum) data with significant speedup over classical Machine Learning (ML). However, like classical ML, QML also has trainability issues associated with non-convex landscapes, local minima, and the overall NP-hardness of the optimization. 


Classical ML has benefited from the discovery of the overparameterization phenomenon, whereby increasing the number of parameters beyond some threshold causes many local minima to disappear (e.g., as in Fig.~\ref{fig:schematic}). Similarly, preliminary evidence of overparameterization in QML has been discovered for specific constructions of Quantum Neural Networks (QNNs). However, prior to our work, no general theory existed for the precise properties of QNNs that lead to overparameterization.


In this work, we provide the first general analysis of overparameterization for a broad class of QNNs (i.e., those with periodic structure). We find that the Dynamical Lie Algebra (DLA)  obtained from the set of generators of the QNN plays a crucial role in determining properties of the QNN and the ensuing landscape. To our knowledge, our work is the first \textit{algebraic} theory of overparameterization. This represents an important contribution to Quantum Landscape Theory, i.e., the understanding of QML loss function landscapes and how to engineer them.

We defined overparametrization as the QNN having more than a critical number of parameters  that allow it to explore all independent and relevant directions in the state space. This translates to the Quantum Fisher Information Matrices (QFIMs)  having reached their maximum achievable rank. This definition has direct implications for the loss functions of under- and over-parametrized QNNs. Underparametrized QNNs can exhibit spurious, or false, local minima that disappear when one increases the number of parameters and reaches the overparametrization regime. Since the existence of false local minima negatively affect the QNN's trainability, the overparametrization onset corresponds to a computational phase transition where the QNN parameter optimization improves due to a more favorable landscape.

We found that the critical number of parameters needed to overparametrize the QNN is directly linked to the dimension of the associated DLA $\liea_\SC$. Our theorems showed that the rank of the QFIM (across the whole landscape) and the rank of the Hessian (evaluated at the optima) are upper bounded by $\dim(\liea_\SC)$. Thus, one can potentially reach overparametrization if the QNN has $\dim(\liea_\SC)$ parameters. This result is particularly interesting for QNN constructions where $\dim(\liea_\SC)\in\OC(\poly(n))$. Thus, our results show that there can exists QNNs that are overparametrized for a polynomial number of parameters.

We verified our  theoretical  results by performing numerical simulations of problems where the overparametrization had been heuristically observed~\cite{wiersema2020exploring,kiani2020learning}. Here, our theoretical framework allowed us to shed new light and explain some of the observations in these prior works. 

We note that most ansatzes used for QNNs in the literature are ultimately hardware efficient ansatzes. These  are known to exhibit barren plateaus, and in view of our recent results, they may require an exponential number of parameters to be overparametrized (e.g., see Fig.~\ref{fig:hea_results}). These results indicate that the search for scalable and trainable ansatzes should be a priority for the field. 

In this sense, our results provide additional guidance to develop QNN architectures with extremely favorable landscapes: overparametrization and absence of barren plateaus. In this context, good candidates are architectures with polynomially large DLAs.

\section{Methods}

In this section, we provide additional details and intuition for the results in the main text, as well as a sketch of the proofs for our main theorems. More detailed proofs of our theorems are given in the Supplementary Information. 

\subsection*{Intuition behind the dynamical Lie algebra}
According to Definition~\ref{def:dynamical_lie_algebra}, the DLA $\liea$ is obtained from the nested commutators of the elements in the set of generators. To understand why this is the case, let us consider a single-layered unitary $U(\thv)$ generated by two Hermitian operators, so that $\GC=\{H_1,H_2\}$.   From the Baker–Campbell–Hausdorff formula, we have that 
\begin{equation}
    U(\thv)=e^{i\theta_1 H_1} e^{i \theta_2 H_2} = e^{K_1(\thv)}\,,
\end{equation}
where 
\begin{align}\label{eq:nested}
\begin{split}
    K_1(\thv)=i(&\theta_1 H_1+\theta_2H_2+\frac{i\theta_1\theta_2}{2}[H_1,H_2]\\
    &-\frac{\theta_1^2 \theta_2}{12}[H_1,[H_1,H_2]]+\ldots)\,.
\end{split}
\end{align}
In Eq.~\eqref{eq:nested} we can see that by combining $e^{i\theta_1H_1}$ and $e^{i\theta_2 H_2}$ into a single term, the new evolution is generated by an operator $K_1(\thv)$ that depends on both $\theta_1$ and $\theta_2$, and which contains the nested commutators between $H_1$ and $H_2$. Here, it is also worth noting that the set formed by the operators $\{iH_1,iH_2,i[H_1,[H_1,H_2]],\ldots\}$ will eventually be closed under the commutation operation in the sense that not all elements will be linearly independent, but rather there will be a finite basis. This is precisely what the DLA is. It is the space spanned by the $\dim(\liea)$ operators that form a basis of the nested commutators. 

When the QNN has multiple layers, that is, when $U(\thv)=\prod_{l=1}^L e^{i\theta_{l1}H_1} e^{i\theta_{l2}H_2}$,  one can recursively apply the Baker–Campbell–Hausdorff formula to express the action of the QNN as being generated by a single parametrized operator $K_L(\thv)$. That is, to have $U(\thv)=e^{K_L(\thv)}$.  Evidently,  both $K_1$ and $K_L$ are obtained from the nested commutators of $H_1$ and $H_2$, and thus both operators are elements of $\liea$. However, while $K_1$ depends on only two parameters $\theta_1$ and $\theta_2$, $K_L$ is parametrized by all $2L$ elements in the vector  $\thv=\{\theta_{l1},\theta_{l2}\}_{l=1}^L$. Having these additional parameters allows for a more fine-tuned control of the action of $U(\thv)$. Intuitively, to hope for a locally surjective map between parameter space and $\liea$, we need to place at least $\dim(\liea)$ parameters. Here, there will come a point where further adding parameters does not further increase one's control of the action of  $U(\thv)$.

We finally note that the analysis for a QNN with more than two unitaries in $\GC$ follows readily. 

\subsection*{Motivation for the definition of overparametrization}
Let us here motivate our definition of overparametrization. First, we recall that we are considering the case where the QNN $U(\thv)$ acts on the states of the training set as $U(\thv)\ket{\psi_\mu}$, and that the loss function is estimated via measurement outcomes on such evolved states. 

In Definition~\ref{def:overparametrization}, we defined overparametrization as a property of the QNN (independently of how the loss function is defined). More specifically, we consider a QNN to be overparametrized if the QNN can explore all relevant directions in the state space. This definition is justified from the fact that, irrespective of how the loss function is estimated via measurements on $U(\thv)\ket{\psi_\mu}$, the accessible space in the Hilbert space is ultimately defined by the action of the QNN in the states of the training set.

Here, one could also potentially define  overparametrization  in terms of exploring all relevant directions in the loss landscape. However, this could have some issues. For instance, consider a QML model where one measures the evolved states $U(\thv)\ket{\psi_\mu}$ in the computational basis and evaluates the loss function as $\LC(\thv)=\sum_{\mu,\vec{z}}p(\vec{z}|\psi_\mu)/|\SC|$, where $p(\vec{z}|\psi_\mu)$ is the probability of measuring the bitstring $\vec{z}$ at the output of the QNN when sending the state $\ket{\psi_\mu}$ as input. Evidently, here $\LC(\thv)=1$ for all $\thv$, and independently of how the QNN is defined. Thus, the loss landscape is always flat, and the Hessian is trivially given by the zero matrix. 

The previous example shows that while a QNN can be considered as overparametrized in the state space, this might not be relevant in the loss landscape space. In view of this issue, we have opted to define overparametrization in the state space, as the map leading to states in the Hilbert space (third map in Eq.~\eqref{eq:maps} and Fig.~\ref{fig:surjectivity}) is more fundamental than the map leading to the loss landscape (fourth map). Evidently, we also expect that arguments can be made in favor of defining overparametrization in terms of the loss landscape, or even the unitary space. However, for the setting presently analyzed,  Definition~\ref{def:overparametrization} can be considered as a first step toward better understanding the overparametrization phenomenon.

\subsection*{Sketch of the Proof of Theorem~\ref{theo:1}}

Let us here consider for simplicity the case when the states in the training set do not respect any symmetries in the QNN (i.e., $\liea_\SC=\liea$). From Eq.~\eqref{eq:PSA_ansatz} we have that 
\begin{equation}\label{eq:der_psi}
     \ket{\partial_j\psi_\mu(\thv)} = \partial_j (U(\thv)\ket{\psi_\mu}) = -i U(\thv) \tilde{H_j} \ket{\psi}
\end{equation}
where we defined $\tilde{H}_j = U_1\ad \cdots U_j\ad H_jU_j \cdots U_1$. Note that here the explicit dependence of $\tilde{H}_j$ in the parameters $\thv$ is omitted. Replacing~\eqref{eq:der_psi} in Eq.~\eqref{eq:QFIM-elem} we find that the elements of the QFIM can be written as
\begin{align}\label{}
[F_\mu(\thv)]_{ij}=4\sum_{m\neq\psi_\mu} \Re[ \bra{\psi_\mu} \tilde{H_i} \ket{m}\bra{m} \tilde{H_j} \ket{\psi_\mu} ]\,.
\end{align}
From here, one finds that the  QFIM can be expressed as
\begin{equation}
    \label{eq:F1-MT}
    F_\mu(\thv) \!=\! -2\! \sum_{m\neq \psi} (\vec{R}_{m\psi_\mu} \cdot\vec{R}_{m\psi_\mu}^{\top} + \vec{I}_{m\psi_\mu} \cdot\vec{I}_{m\psi_\mu}^{\top})\,,
\end{equation}
where we have introduced the vectors $\vec{R}_{mn}$ and $\vec{I}_{mn}$ with components
\begin{equation}\label{eq:elements-RI}
R_{mn}(i) =\Re[ \bra{m} \tilde{H_i} \ket{n}],\,\,\, I_{mn}(i)=\Im[ \bra{m} \tilde{H_i} \ket{n} ]\,.
\end{equation}
Eq.~\eqref{eq:F1-MT} allows us to write the QFIM as a sum of $2d-2$ rank-one matrices.

Here we recall that, by definition, $H_j$ are elements in the DLA $\liea$. Then, since the unitaries $U$ are elements of the dynamical Lie group $\mathbb{G}$ generated by $\liea$, conjugating $H_j$ by any unitary $U$ results in another element in $\liea$. That is: $\forall U\in\mathbb{G}$, and $\forall H_i\in \liea$ we have $U H_j U\ad\in\liea$.  Then, by repeating this argument $j$ times, we find that $\tilde{H}_j\in\liea$.

Letting $\{S_\nu \}_{\nu=1}^{\dim(\liea)}$ be a basis of $\liea$, we can express 
\begin{equation}\label{eq:Htilde-MT}
    \tilde{H_j} = \sum_{\nu=1}^{\rm dim(\liea)} a_{\nu}(j) S_{\nu}\,,
\end{equation}
where $a_{\nu}(j)$ are real coefficients. From Eq.~\eqref{eq:Htilde-MT} we can find 
\begin{align}
    \vec{R}_{mn} &= \sum_{\nu=1}^{\dim(\liea_{\SC})} \Re [\bra{m} S_{\nu} \ket{n}] \vec{a}_{\nu}\,,\label{eq:F2-MT-1}\\
    \vec{I}_{mn} &= \sum_{\nu=1}^{\dim(\liea_{\SC})} \Im [\bra{m} S_{\nu} \ket{n}] \vec{a}_{\nu}\,.\label{eq:F2-MT-2}
\end{align}
Equations~\eqref{eq:F2-MT-1}, and~\eqref{eq:F2-MT-2}  show that the vectors $\vec{R}_{mn}$ and $\vec{I}_{mn}$ can be expressed as a linear combination of  $\dim(\liea_{\SC})$ other vectors $\{\vec{a}_{\nu} \}$. Then, while the $\vec{R}_{mn}$ and $\vec{I}_{mn}$ generate the $2d-2$ rank-one matrices in the QFIM, we have that $F_\mu(\thv)$ has a support on a subspace with a basis that has, at most, $\dim(\liea)$ elements. Thus, we find $\rank[F_\mu(\thv)]\leq \dim(\liea)$. The latter hence proves Theorem~\ref{theo:1}.

Here we note that Eqs.~\eqref{eq:F2-MT-1}, and~\eqref{eq:F2-MT-2} do not take into consideration what the state $\ket{\psi_\mu}$  is. However, from~\eqref{eq:F1-MT}, the QFIM  is actually expressed in terms of $\vec{R_{m\psi_\mu}}$ and $\vec{I_{m\psi_\mu}}$. Then, from the definitions in Eq.~\eqref{eq:elements-RI}, one can see that the state plays a role in the terms   $\tilde{H_i} \ket{\psi_\mu}$. From a closer inspection, one can see that $\tilde{H_i} \ket{\psi_\mu}$ is the action of some elements of the Lie algebra over the state  $\ket{\psi_\mu}$. Thus, since $\tilde{H_i}$ are directions in the Lie group, we have that $\tilde{H_i} \ket{\psi_\mu}$ are directions in the state space.

In fact, the expressible states obtained by acting with the QNN on $\ket{\psi_\mu}$ form a submanifold of the Hilbert space known as the state space orbit, which is defined by      $\lieg\ket{\psi_\mu}=\{U\ket{\psi_\mu}, \forall U\in\mathbb{G}\}$~\cite{dalessandro2010introduction}.  The latter has a very important implications. Since the rank of the QFIM quantifies the number of independent directions in the state space that are accessible via arbitrary infinitesimal variations of the parameter vector $\thv$, it cannot be larger than the dimension of the state space orbit. Thus, one can tighten the bound in Theorem~\ref{theo:1} as 
\begin{equation}
    \rank[F_\mu(\thv)]\leq \dim(\lieg \ket{ \psi_{\mu}})\,,
\end{equation}
where we recall that $\dim(\lieg \ket{\psi_{\mu}})$ is upper bounded by $\dim(\liea)$. 

Thus, one can define the overparametrization as
\begin{definition}[Overparametrization]\label{def:overparametrization-2}
    A QNN is said to be overparametrized if the number of parameters $M$ is such that the QFIM has rank equal to the dimension of the orbits given by the action of $\lieg$ on the states in the training set.
\end{definition}
Evidently, a sufficient condition for the QNN to be overparametrized is that $M\geq\max_\mu  \dim(\lieg \ket{ \psi_{\mu}})$.  For example, we have numerically verified that this occurs in the VQE implementation, where the overparametrization onset occurs when $M=\max_\mu  \dim(\lieg \ket{ \psi_{\mu}})$.

\subsection*{Sketch of the Proof of Theorem~\ref{theo:3}}
The proof of Theorem~\ref{theo:3} follows similarly to that of Theorem~\ref{theo:1}. Specifically, one can show that the Hessian evaluated at $\thv_*$ can be expressed as
 \begin{equation}
    \begin{split}
       \nabla^2\LC(\thv_*)&= \!2\!\sum_{m,n = 1}^d \!\kappa_{mn}(\vec{R}'_{mn} \cdot(\vec{R}'_{mn})^{\top} + \vec{I}'_{mn} \cdot(\vec{I}'_{mn})^{\top}) \,,
    \end{split}
    \end{equation}
where $\kappa_{mn}$ are real coefficients, and where now the vectors $\vec{R}'_{mn}$ and $\vec{I}_{mn}$ have components 
\begin{align}\label{eq:elements-RI-2}
R'_{mn}(i) &=\Re[\bra{m} Q \tilde{H_j} Q\ad \ket{n}]\,,\nonumber\\ I'_{mn}(i)&=\Im[ \bra{m} Q \tilde{H_j} Q\ad \ket{n}]\,.\nonumber
\end{align}
Where $Q$ is the matrix that diagonalizes the operator $\sigma=\sum_\mu c_\mu\dya{\psi_\mu}$.

Then following a similar argument as the one previously used for Theorem~\ref{theo:1}, we can again show that the rank of the Hessian  is such that $\rank[\nabla^2\LC(\thv_*)]\leq\min \{ \dim(\liea_\SC)\}$, we leave for the Supplementary Information the rest of the proof, where the quantity  $r=\min\{\rank[\sigma,\rank[O]\}$ comes into play.

In the Supplemental Information we also provide a proof for the following theorem
\begin{theorem}\label{Theorem4}
Consider the loss functions for a unitary compilation task
\begin{equation}
\begin{split}
    \LC_1 (\thv) = 2d - 2 \Re [T(\thv)],\quad \text{and} \quad \LC_2 (\thv) = 1 - \frac{1}{d^2}|T(\thv)|^2, \nonumber
\end{split}
\end{equation}
where $T(\thv)= \Tr[V\ad U(\vec{\theta})]$ for a target unitary $V$.   Then, let $H_1(\thv_*)$ and $H_2(\thv_*)$  be the Hessian for the loss functions  $\LC_1 (\thv)$ and  $\LC_1 (\thv)$, respectively evaluated at their solutions $U(\thv_*)=V$ and $U(\thv_*)=e^{i\phi}V$. Then, the maximal rank of  $\nabla^2\LC_1(\thv_*)$ and $\nabla^2\LC_1(\thv_*)$  is such that $\rank[\nabla^2\LC_1(\thv_*)],\rank[\nabla^2\LC_2(\thv_*)]\leq \dim(\liea_\SC)$ 
\end{theorem}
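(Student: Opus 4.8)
The plan is to compute both Hessians directly at their respective optima and show that, exactly as in the proof of Theorem~\ref{theo:1}, each one factors through the finite-dimensional algebra $\liea_\SC$, so that its rank cannot exceed $\dim(\liea_\SC)$.

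First I would differentiate $T(\thv)=\Tr[V\ad U(\thv)]$. Using $\partial_i U(\thv)=-i\,U(\thv)\tilde H_i$ with the conjugated generators $\tilde H_i\in\liea_\SC$ that already appear in Eq.~\eqref{eq:der_psi}, one finds $\partial_i T=-i\Tr[V\ad U\tilde H_i]$, and a second differentiation yields two contributions: a ``quadratic'' piece $-\Tr[V\ad U\,\tilde H_i\tilde H_j]$ and a ``connection'' piece $-i\Tr[V\ad U\,\partial_j\tilde H_i]$ produced by the parameter dependence of $\tilde H_i$ itself.

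The whole argument hinges on evaluating at the optimum. At $U(\thv_*)=V$ for $\LC_1$, and at $U(\thv_*)=e^{i\phi}V$ for $\LC_2$, the product $V\ad U(\thv_*)$ is a scalar multiple of $\id$. Consequently: (i) every first derivative vanishes, since $\partial_i T|_{\thv_*}\propto\Tr[\tilde H_i]=0$ because each $\tilde H_i$ is a unitary conjugate of a traceless generator; and (ii) the connection piece drops out because $\Tr[\partial_j\tilde H_i]=\partial_j\Tr[\tilde H_i]=0$ identically, $\tilde H_i$ being traceless for every $\thv$. For $\LC_2$ the vanishing gradient additionally kills the cross terms $\partial_iT\,\partial_jT^*$ and $\partial_jT\,\partial_iT^*$ coming from differentiating $|T|^2$ twice, while the surviving global phase $e^{i\phi}$ cancels against $T(\thv_*)^*=e^{-i\phi}d$. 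In both cases ($a=1,2$) what remains is, up to an irrelevant positive constant, $[\nabla^2\LC_a(\thv_*)]_{ij}\propto\Tr[\tilde H_i^*\tilde H_j^*]$, where $\tilde H_i^*=\tilde H_i(\thv_*)$.

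To close the argument I would expand each $\tilde H_i^*$ in a basis $\{S_\nu\}_{\nu=1}^{\dim(\liea_\SC)}$ of $\liea_\SC$ with real coefficients as in Eq.~\eqref{eq:Htilde-MT}, $\tilde H_i^*=\sum_\nu a_\nu(i)S_\nu$, so that $\Tr[\tilde H_i^*\tilde H_j^*]=\sum_{\nu,\mu}a_\nu(i)\,G_{\nu\mu}\,a_\mu(j)$ with $G_{\nu\mu}=\Tr[S_\nu S_\mu]$. Collecting the $a_\nu(i)$ into the $\dim(\liea_\SC)\times M$ matrix $A_{\nu i}=a_\nu(i)$ gives $\nabla^2\LC_a(\thv_*)\propto A\trp G A$, whose rank is at most the number of rows of $A$, namely $\dim(\liea_\SC)$; this proves both bounds. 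I expect the main obstacle to be the bookkeeping around the connection term $\partial_j\tilde H_i$: one must verify both that it remains inside $\liea_\SC$ (so the closure argument of Theorem~\ref{theo:1} transfers) and that it contributes only through its trace, which is why it vanishes solely at $\thv_*$---away from the optimum $V\ad U$ is not proportional to $\id$ and this term survives, explaining why the rank bound is asserted only at the solution.
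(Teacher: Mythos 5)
Your proposal is correct and takes essentially the same route as the paper's proof: differentiate $T(\thv)$ via $\partial_i U(\thv)=-iU(\thv)\tilde H_i$, use $V\ad U(\thv_*)\propto\id$ at the optimum to kill the gradient and the connection term (the paper writes $\partial_i\tilde H_j$ as a commutator whose trace vanishes, equivalent to your $\Tr[\partial_j\tilde H_i]=\partial_j\Tr[\tilde H_i]=0$), reduce the Hessian to $[\nabla^2\LC_a(\thv_*)]_{ij}\propto\Tr[\tilde H_i\tilde H_j]$ with the global phase cancelling for $\LC_2$, and bound the rank by expanding $\tilde H_i$ in a basis of $\liea_\SC$. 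The only cosmetic differences are in the last step, where you package the bound as the factorization $A\trp G A$ with the Hilbert--Schmidt Gram matrix $G_{\nu\mu}=\Tr[S_\nu S_\mu]$ while the paper writes the Hessian as a sum of rank-one matrices $\vec{R}_{mn}\cdot\vec{R}_{mn}^{\top}+\vec{I}_{mn}\cdot\vec{I}_{mn}^{\top}$ with generating vectors in ${\rm span}\{\vec{a}_\nu\}$, and for $\LC_2$ you discard the cross terms $\partial_iT\,\partial_jT^*$ via tracelessness of the generators whereas the paper carries the corresponding $\Tr[\vec{\tilde{H}}]\cdot\Tr[\vec{\tilde{H}}^\top]$ piece along, noting it lies in the same span.
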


\subsection*{Sketch of the Proof of Theorem~\ref{theo:2}}
Let us here consider for simplicity the case when the dataset contains a single state $\ket{\psi}$ which does not respect the symmetries of the ansatz ($\liea_\SC=\liea$). The more general proof is presented in the Supplementary Information.

Now, the capacities of Eq.~\eqref{eq:eff_dim_1} and~\eqref{eq:eff_dim_2} are
\begin{equation}\label{eq:eff_dim_12}
    D_1(\thv)=\sum_{i=1}^M\IC(\lambda^{i}(\thv))\,,
\end{equation}
where $\lambda^{i}(\thv)$ are the eigenvalues of the QFIM for the state $\ket{\psi}$, and 
\begin{equation}\label{eq:eff_dim_22}
    D_2=\max_{\thv}\left(\rank\left[\widetilde{F}(\thv)\right]\right)\,,
\end{equation}
for $\widetilde{F}(\thv)$ the classical Fisher information for the input state $\ket{\psi}$.

First, we note that, by definition, $D_1(\thv)=\rank[F(\thv)]$, so that that the inequality $D_1(\thv)\leq\dim(\liea)$ follows readily from Theorem~\ref{theo:1}. Moreover, by the definition of overparametrization in Definition~\ref{def:overparametrization}, the capacity $D_1(\thv)$ is saturated on at least one point of the landscape. 

Now we need to show that $D_2(\thv)\leq\dim(\liea)$. Here we recall that the quantum and classical Fisher information matrices are such that~\cite{meyer2021fisher}
\begin{equation}
    \widetilde{F}(\thv)\leq F(\thv)
\end{equation}
for all $\thv$. Then, using that fact that if $A$ and $B$ are two Hermitian matrices such that $A\leq B$, then $A^q\leq B^q$ for all $q\in[0,1]$~\cite{chan1985hermitian}. Thus, we have that $\widetilde{F}^q(\thv)\leq F^q(\thv)$, and  choosing $q=0$ leads to
\begin{equation}
    {\rm supp}[\widetilde{F}(\thv)]\leq {\rm supp}[F(\thv)]\,,
\end{equation}
where here $\supp(\cdot)$ denotes the support of a matrix. Taking the trace on both sides allows us to obtain
\begin{equation}
    \rank[\widetilde{F}(\thv)]\leq\rank[F(\thv)]\,.
\end{equation}
Finally, combining Theorem~\ref{theo:1} with the definition of overparametrization in Definition~\ref{def:overparametrization} and the definition of the capacity $D_2(\thv)$ in Eq.~\eqref{eq:eff_dim_22}, if follows that $D_2(\thv)\leq\dim(\liea)$.

\subsection*{Implications for Quantum Optimal Control}

In Quantum Optimal Control (QOC) \cite{glaser2015training,acin2018quantum,yang2017optimizing,lu2017enhancing,rembold2020introduction,peterson2020fast,bluvstein2021controlling,ebadi2021quantum,magann2021feedback,larocca2021krylov,brady2021optimal,wittler2021integrated} one is typically interested in controlling the dynamics of a quantum state $\ket{\psi}$ evolving through a functional time-dependent Hamiltonian
\begin{equation}
    H(t,\{\th_k(t)\}) = H_0 + \sum_{k=1}^K \th_k(t) H_k
\end{equation}
that defines the continuous-in-time equation of motion
\begin{equation}
    \label{eq:U(t)}
    \frac{d U(t)}{ d t} = -i H(t,\{\th_k(t)\})U(t), \; \text{with} \;\quad U(0) = \mathbb{I}\,.
\end{equation}
Here, the idea is that the functions $\{\th_k(t)\}$, known as control fields, can be trained to pursue some desired evolution. 

Interestingly, it has been shown some that QOC and the field of variational quantum algorithms can unified into a single framework where the evolution of a quantum system is controlled at the pulse level (QOC), or at the gate level (QNN) ~\cite{magann2021pulses,larocca2021diagnosing}. Most importantly, irregardless of the choice of controls, the unitaries that are expressible by a QOC ansatz $U(t)$ are, like in the QNN case, contained in the group generated by the DLA $\liea$ (see Definition~\ref{def:dynamical_lie_algebra}) that is determined by the set of generators $\GC=\{H_k\}_{k=0}^K$. Since all of the results presented in this manuscript are stated in terms of the DLA of a given QNN, they can be straightforwardly adapted to the QOC setting. For example, the maximum rank achievable by some QFIM associated with an ansatz of the form in Eq.~\eqref{eq:U(t)} will be upper bounded by the dimension of $\liea$ (equivalent to Theorem \ref{theo:1}). That is, a QOC and a QNN ansatz that share the same set of generators $\GC$ can be expected to have the same saturation value for their respective QFIM matrices.

\begin{figure*}[t!]
    \centering
    \includegraphics[width=0.9\linewidth]{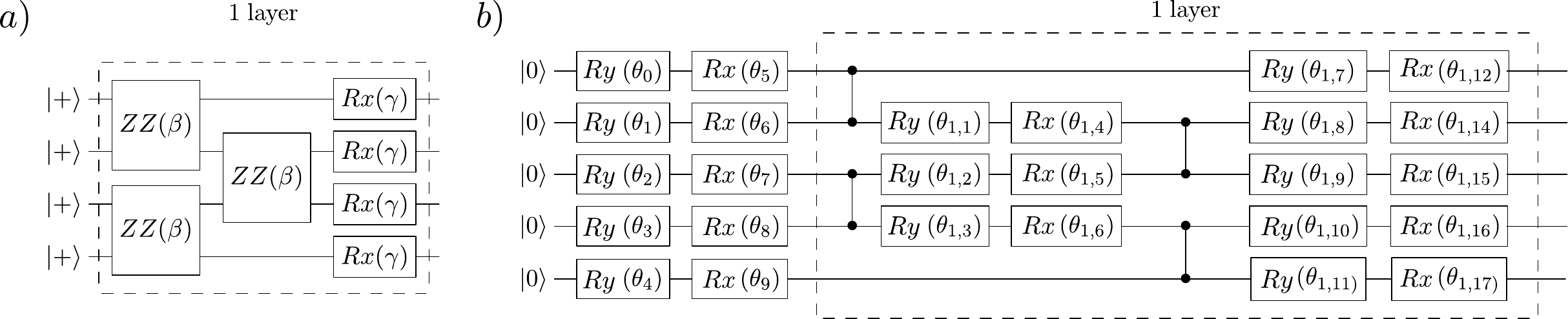}
    \caption{\textbf{QNN ansatzes for the numerical simulations.} a) Hamiltonian variational ansatz for the VQE task. Here we show a single layer of the ansatz for $n=4$ qubits. For closed boundary conditions, there is an extra $ZZ(\beta)$ gate acting on the uppermost and lowermost qubits. A $ZZ(\beta)$ gate on qubits $i,j$ corresponds to the operator $e^{-i \frac{\beta}{2} \sigma^z_i \sigma^z_j}$, and it may be decomposed into two CNOTs and one $Rz$ rotation \cite{smith2019simulating}. The input state was $\ket{+}^{\otimes n}$. b) Hardware efficient ansatz for the unitary compilation and autoencoding tasks. Here we show a single layer of the ansatz for $n=5$ qubits. Notice that there is an extra $Ry$ and $Rx$ rotation on each qubit at the beginning of the circuit.} 
    \label{fig:HVA_circuit}
\end{figure*}

Similarly, in analogy with the results in Theorems \ref{theo:3} and \ref{Theorem4}, the Hessian under a QOC ansatz can be expected to be upper bounded by $\dim(\liea)$ when evaluated at a solution. While the existence of bounds on the rank of the Hessian at solutions is well known in the control literature~\cite{hsieh2009topology,ho2009landscape}, these results analyze the case when the ansatz is controllable (i.e., when $\liea=\mathfrak{s}\mathfrak{u}(d)$) and thus the bounds found are exponentially large. For example, the rank of the Hessian for unitary compilation tasks (see Eq.\eqref{eq:unitary_cost}), has been shown to be upper bounded by $\dim(\liea)=\dim(\mathfrak{s}\mathfrak{u}(d)))=d^2-1$. Hence, the results in this work generalize these previous studies to the case of general $\liea$ (i.e. uncontrollable systems). Let us note that the existence of a fundamental bound on the rank of the Hessian at the global minima is directly connected to another interesting phenomenon: the arisal of continuous submanifolds of degenerate solutions \cite{moore2012exploring,larocca2020exploiting,larocca2020fourier}.

Although historically the quantum control community has mainly focused on controllable systems, the importance of studying uncontrollable ones, in particular those with $\dim(\liea)=\OC(poly(n))$, has been evidenced in~\cite{larocca2021diagnosing}. Here, it has been ascertained that control systems with exponentially large DLAs may encounter scalability issues, like the prescence of barren plateaus in their optimization landscapes. Conversely, systems with polynomially large DLAs can avoid barren plateaus issues and be scalable. Thus, the results in the present manuscript should also be considered as an additional motivation for QOC systems with polynomially sized algebras, as these will achieve overparametrization with $\OC(\poly(n))$ parameters.

Finally, we remark that our results also provide a new insight into the existence of false traps in the control landscape \cite{wu2012singularities,riviello2014searching,rach2015dressing,larocca2018quantum,dalgaard2021predicting}. In QOC, false traps are usually analyzed through the rank of the Jacobian matrix of the map $\{\th_k(t)\} \rightarrow U(t)$. Here, false traps are critical points in the landscape that are not related to local minima of the loss function itself, but to points where this map is not locally not surjective. In our context, this is precisely what a rank-deficient QFIM means: points in parameter space where all possible variations of parameters do not translate into all possible directions in the state space orbit.

\subsection*{Ansatzes for the numerical simulations}

In this section we present the details of the two QNN ansatzes used in our numeric simulations. Let us remark that both ansatzes are of the form in Eq.~\eqref{eqn:hva_cost_function}, i.e. a periodic structured parametrized circuit defined by a given set of generators $\GC$.

Let us first describe the so-called Hamiltonian variational ansatz (HVA) \cite{wecker2015progress,wiersema2020exploring}. Consider a VQE task where one wants to minimize a Hamiltonian of the form
\begin{equation} \label{eqn:hva_Ham}
    H = \sum_{k=1}^N a_k A_k\,,
\end{equation}
where $A_k$ are Hermitian operators and $a_k$ real numbers. The basic idea in the HVA ansatz is to use, as generators, the individual terms in the Hamiltonian that is being minimized, i.e. $\GC=\{ A_k \}_{k=1}^N$. For instance, we show in Fig. \ref{fig:HVA_circuit}(a) the ansatz used to find the ground-state of the  transverse field Ising model with open boundary conditions. Here, the the generators are $A_0=\frac{1}{2}\sum_i \sigma^x_i$ and $A_1= \frac{1}{2}\sum_i \sigma^z_i\sigma^z_{i+1}$.

As a second choice of ansatz, let us introduce the hardware efficient ansatz~\cite{kandala2017hardware} used in the unitary compilation and autoencoding tasks. As shown in Fig.~\ref{fig:HVA_circuit}(b), this ansatz is composed of single qubit rotations followed by CZ gates acting on alternating pairs of qubits. Here we can see that the number of parameters in the ansatz is $M=2n + L(4n-4)$.

\section*{ACKNOWLEDGEMENTS}

We thank Patrick deNiverville, Julia Nakhleh, Stavros Efthymiou, Louis Schatzki and Marco Farinati for useful conversations. NJ and DGM were supported by the U.S. DOE through a quantum computing program sponsored by the Los Alamos National Laboratory (LANL) Information Science \& Technology Institute. DGM acknowledges partial financial support from project QuantumCAT (ref. 001- P-001644), co-funded by the Generalitat de Catalunya and the European Union Regional Development Fund within the ERDF Operational Program of Catalunya, and from the European Union’s Horizon 2020 research and innovation programme under grant agreement No 951911 (AI4Media). PJC and MC were initially supported by Laboratory Directed Research and Development (LDRD) program of LANL under project number 20190065DR. PJC also acknowledges support from the LANL ASC Beyond Moore's Law project. MC also acknowledges support from the Center for Nonlinear Studies at LANL.  This work was supported by the U.S. DOE, Office of Science, Office of Advanced Scientific Computing Research, under the Accelerated Research in Quantum Computing (ARQC) program.

\section*{AUTHOR CONTRIBUTIONS}

The project was conceived by ML, PJC and MC. The manuscript was written by NJ, ML, DGM, PJC, and MC. Theoretical results were proved by  NJ, ML, PJC, and MC. Numerical implementations were performed by DGM. 

\section*{DATA AVAILABILITY}

Data generated and analyzed during current study are available from the corresponding author upon reasonable request.

\section*{COMPETING INTERESTS}
The authors declare no competing interests.

\bibliography{quantum,nathan}

\clearpage
\newpage

\onecolumngrid

\setcounter{section}{0}
\setcounter{proposition}{0}
\setcounter{figure}{0}
\setcounter{theorem}{0}
\setcounter{definition}{0}
\setcounter{corollary}{0}
\renewcommand{\figurename}{SUP FIG.}

\section*{\normalsize{Supplementary Information for ``Theory of Overparametrization in quantum neural networks'' }}

In this Supplementary Information, we present detailed proofs of the theorems, and corollaries presented in the manuscript ``\textit{Theory of overparametrization in quantum neural networks}''.  In addition, here we provide additional details and results for the numerical simulations.

\section{Preliminaries}\label{SP_prelim}

Let us start by recalling that we consider the case when the QNN $U(\thv)$ is a parametrized quantum circuit  with an $L$-layered periodic structure  of the form
\begin{equation}\label{eq:PSA_ansatz-SM}
    U(\thv)=\prod_{l=1}^LU_l(\thv_l)\,, \quad U_l(\thv_l)=\prod_{k=1}^K e^{-i \theta_{lk}H_k}\,,    
\end{equation}
where the index $l$ indicates the layer,  and the index $k$ spans the traceless Hermitian operators $H_k$ that generate the unitaries in the ansatz. Here,  $\thv_{l}=(\theta_{l1},\ldots\theta_{lK})$ are the parameters in a single layer, and $\thv=\{\thv_{1},\ldots,\thv_L\}$ denotes the set of $M=K\cdot L$ trainable parameters in the QNN. In this Supplementary Information, we make use of the following relabelling of the parameters $\theta_{lk}$ and operators $H_k$:

\begin{equation}
    U(\thv) \equiv \prod_{j=1}^{LK} e^{-i \theta_j H_j} \,.
\end{equation}

For convenience we also recall the following definitions:
\begin{definition}[Set of generators $\GC$]\label{def:generators-SM}
Consider a parametrized quantum circuit of the form \eqref{eq:PSA_ansatz-SM}. The set of generators $\GC=\{H_k\}_{k=1}^K$ is defined as the set (of size $|\GC|=K$) of the Hermitian operators that generate the unitaries in a single layer of $U(\thv)$.
\end{definition}
And, the definition for the dynamical Lie Algebra:
\begin{definition}[Dynamical Lie Algebra (DLA)]\label{def:dynamical_lie_algebra-SM}
Consider a set of generators $\GC$ according to Definition~\ref{def:generators-SM}. The DLA $\liea$ is generated by repeated nested commutators of the operators in $\GC$. That is, 
\begin{equation}
\liea={\rm span}\left\langle iH_1, \ldots, iH_K \right\rangle_{Lie}\,,
\end{equation}
where $\left\langle S\right\rangle_{Lie}$ denotes the Lie closure, i.e., the set obtained by repeatedly taking the commutator of the elements in $S$. 
\end{definition}

\textbf{Invariant subspaces.} Consider now the case when the elements in the DLA share a symmetry (for simplicity we assume only one symmetry, although generalization to multiple symmetries is straightforward). That is, there exists a Hermitian operator $\Sigma$ such that $[\Sigma,g]=0$ for all $g\in\liea $. If $\Sigma$ has $N$ distinct eigenvalues, then the DLA has the form $\liea=\bigoplus_{m=1}^N \liea_m$. This imposes a partition of Hilbert space $\HC=\bigoplus_{m=1}^N \HC_m$ where each subspace $\HC_m$ of dimension $d_m$ is invariant under $\liea$.  

Let us introduce some notation. Consider the $d\times d_m$ matrix that results from horizontally stacking the eigenvectors of $\Sigma$ associated with the $m$-th eigenvalue (of degeneracy $g_m$)
\begin{equation}
Q_m\ad = \begin{bmatrix} \vdots &  \vdots  && \vdots \\ \ket{v_1}, & \ket{v_2},&  &, \ket{v_{g_m}} \\ \vdots &  \vdots  && \vdots \end{bmatrix}, 
\end{equation}
such that $Q_m$ maps vectors from $\H$ to $\H_m$. These satisfy 
\begin{equation}
Q_m Q_n\ad = \id_{d_m} \delta_{mn},\quad  Q_m\ad Q_m  = \mathbb{P}_m\,,
\end{equation}\label{eq:proj2}
where $\mathbb{P}_m$ are projectors onto the $m$-th eigenspace, such that $\sum_{m=1}^N \mathbb{P}_m=\id$. Let us now use the notation
\begin{equation}\label{Eq_red}
    \ket{\psi}^{(m)} = Q_m \ket{\psi}, \quad   A^{(m)} = Q_m A Q_m\ad\,,
\end{equation}
to denote the $d_m$-dimensional reduced states and operators, respectively. Recall that, since any unitary $U\in\lieg$ produced by such a system is block diagonal, we can write $U= \sum_m \P_m U \P_m$. Also, let us note that if $A$ is Hermitian, then $A\k$ is also Hermitian.

\section{Proof of Theorem 1}
In the following we provide a proof for Theorem~\ref{theo:1_SM}. Let us first recall the definition of the Quantum Fisher Information Matrix (QFIM).
The QFIM entries are given by
\begin{equation}
    [F(\thv)]_{jk} = 4\Re[ \bra{\partial_j \psi(\thv)} \ket{\partial_k \psi(\thv)} -\bra{\partial_j \psi(\thv)}  \ket{\psi(\thv)} \bra{\psi(\thv)} \ket{\partial_k \psi(\thv)}]\,,
\label{QFIM_jk}
\end{equation}
where $\ket{\psi(\thv)} = U(\thv) \ket{\psi}$. Here we also denote where $\ket{\partial_i\psi(\thv)}=\partial \ket{\psi(\thv)}/\partial\theta_i=\partial_i\ket{\psi(\thv)}$ for $\theta_i\in\thv$. 

We now restate Theorem~\ref{theo:1_SM} for convenience.
\begin{theorem}\label{theo:1_SM}
    For each state $\ket{\psi_\mu}$ in the training set $\SC$, the maximum rank $R_\mu$ of its associated QFIM (defined in Eq.~\eqref{QFIM_jk}) is upper bounded  as
    \begin{equation}
        R_\mu\leq\dim(\liea_\SC)\,.
    \end{equation}
\end{theorem}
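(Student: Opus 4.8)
The plan is to write the QFIM as a (signed) sum of rank-one matrices whose defining vectors all lie in a common subspace whose dimension is controlled by the DLA, and then read off the rank bound from the dimension of that subspace.

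First I would differentiate the parametrized state. Writing the circuit in the relabeled form $U(\thv)=\prod_{j}e^{-i\theta_j H_j}$, a direct computation gives $\ket{\partial_j\psi(\thv)}=-iU(\thv)\tilde H_j\ket{\psi}$, where $\tilde H_j=U_1\ad\cdots U_j\ad H_j U_j\cdots U_1$ is the generator $H_j$ conjugated by the circuit factors that precede it. Substituting into Eq.~\eqref{QFIM_jk}, inserting a resolution of the identity $\id=\sum_m\dya{m}$ with $\ket\psi$ taken as one of the basis vectors, and noticing that the $m=\psi$ term exactly cancels the second contribution in the QFIM definition, I would arrive at
\[
[F(\thv)]_{jk}=4\sum_{m\neq\psi}\Re\!\left[\bra{\psi}\tilde H_j\ket{m}\bra{m}\tilde H_k\ket{\psi}\right].
\]
Introducing the real vectors $\vec R_{m\psi}$ and $\vec I_{m\psi}$ with components $R_{m\psi}(j)=\Re[\bra{m}\tilde H_j\ket{\psi}]$ and $I_{m\psi}(j)=\Im[\bra{m}\tilde H_j\ket{\psi}]$, this exhibits $F(\thv)$ as a sum of at most $2d-2$ rank-one outer products built from the $\vec R_{m\psi}$ and $\vec I_{m\psi}$.

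The decisive algebraic step comes next: every $\tilde H_j$ is (up to a factor of $i$) an element of the DLA. Indeed $iH_j\in\liea$ for $H_j\in\GC$, and each circuit factor $U_k=e^{-i\theta_k H_k}$ lies in the dynamical Lie group $\mathbb{G}=e^{\liea}$, so the adjoint action $X\mapsto U_k X U_k\ad$ preserves $\liea$; iterating this shows $i\tilde H_j\in\liea$. Fixing a Hermitian basis $\{S_\nu\}$ associated with $\liea$, I would expand $\tilde H_j=\sum_\nu a_\nu(j)S_\nu$ with real coefficients and substitute into the components above. This gives $\vec R_{m\psi}=\sum_\nu\Re[\bra{m}S_\nu\ket{\psi}]\,\vec a_\nu$ and likewise for $\vec I_{m\psi}$, where $\vec a_\nu=(a_\nu(1),\dots,a_\nu(M))^\top$. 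Thus all the vectors generating the rank-one terms lie in $\mathrm{span}\{\vec a_\nu\}$, so $F(\thv)$ is supported on a subspace of dimension at most the number of basis elements, yielding $\rank[F(\thv)]\le\dim(\liea)$.

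The remaining and, I expect, most delicate step is to sharpen $\dim(\liea)$ to $\dim(\liea_\SC)$ using the invariant-subspace structure set up in the Preliminaries. Since every reachable $U\in\mathbb{G}$ is block diagonal with respect to the eigenspaces of the symmetry operator $\Sigma$, I would restrict attention to the blocks that $\ket{\psi_\mu}$ actually overlaps, using the maps $Q_m$ and the reduced operators $\tilde H_j^{(m)}=Q_m\tilde H_j Q_m\ad$. The point is that the matrix elements $\bra{m}\tilde H_j\ket{\psi_\mu}$ probe only those blocks of $\liea$ preserved by the states in $\SC$, so the expansion basis $\{S_\nu\}$ may be taken inside $\liea_\SC$ and the count of independent vectors $\vec a_\nu$ drops to $\dim(\liea_\SC)$. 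The hard part is to argue rigorously that no direction in $\liea\setminus\liea_\SC$ ever contributes to these matrix elements for symmetry-respecting states; this is exactly the content of the orbit picture $\rank[F]\le\dim(\mathbb{G}\ket{\psi_\mu})\le\dim(\liea_\SC)$, which makes the refined bound both transparent and tight.
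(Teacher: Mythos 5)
Your proposal follows essentially the same route as the paper's proof: the derivative formula $\ket{\partial_j\psi(\thv)}=-iU(\thv)\tilde{H}_j\ket{\psi}$, the decomposition of $F(\thv)$ into $2d-2$ rank-one outer products built from $\vec{R}_{m\psi}$ and $\vec{I}_{m\psi}$, the observation that $i\tilde{H}_j\in\liea$ because the adjoint action of the dynamical Lie group preserves the algebra, and the expansion $\tilde{H}_j=\sum_\nu a_\nu(j)S_\nu$ showing the QFIM is supported on $\mathrm{span}\{\vec{a}_\nu\}$. The one difference is ordering: the paper dissolves the step you flag as delicate by projecting onto the invariant subspace at the outset --- since every $\tilde{H}_j$ commutes with the symmetry operator it is block diagonal, so all states, derivatives, and matrix elements reduce to the $d_\SC$-dimensional block where the reduced generators expand in a basis of $\liea_\SC$ --- making the sharpening from $\dim(\liea)$ to $\dim(\liea_\SC)$ immediate, with no need to invoke the orbit bound (which the paper presents only as a separate tightening in the Methods).
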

\begin{proof}
Let us first note that the partial derivatives of the parametrized state are
\begin{equation}
     \ket{\partial_j\psi(\thv)} = \partial_j (U(\thv)\ket{\psi}) = -i U(\thv) \tilde{H_j} \ket{\psi} ,\quad 1\leq j\leq M\,,
\end{equation}
where 
\begin{equation}\label{SM_Hj}
\tilde{H}_j= U_{1\vsa j}\ad H_j U_{1\vsa j},\quad\text{and where}\quad U_{1\vsa j}= U_j \cdots U_1\,.
\end{equation}
Thus, $ U_{1\vsa j}$ is the propagator up to the j-th layer in the circuit and we are labeling ${H}_j$ modulo $|\GC|$, e.g. ${H}_{|\GC|}={H}_1$, i.e. the first generator.

Next, let us consider the case then the DLA has a symmetry (see the Preliminaries section above) and that all states in the training set belong to the $m$-th invariant subspace of the symmetry. We denote by $\liea_\SC$ the DLA associated with said symmetry respected by the training set, by $\HC_\SC$ the corresponding Hilbert space, and by $\ket{\psi}^{(m)} = Q_m \ket{\psi}$ the projected state according to Eq.~\eqref{Eq_red}. Then, for any pair of states $\ket{\phi},\ket{\chi} \in \SC$, their overlap can be described in terms of the overlap between the corresponding $d_{\SC}$-dimensional reduced states 
\begin{equation}
    \bra{\chi}\ket{\phi} = \bra{\chi^{(\SC)}}\ket{\phi^{(\SC}}\,.
\end{equation}

Then, it is straightforward to see that $\ket{\psi}$,  $\ket{\psi(\thv)}$ and $\ket{\partial_j \psi(\thv)}$ also belong in $\HC_{\SC}$. Hence, the overlaps in Eq.~\eqref{QFIM_jk} can be computed in terms of their reduced counterparts $\ket{\psi(\thv)\S}$ and $\ket{\partial_j \psi(\thv)\S}$. In the following, we will work with everything reduced to such subspace, but to simplify the notation, we will omit the $\SC$ superscript everywhere. For example, whenever we write operator $O$, we actually mean $O^{(\SC)}\in \Cbb^{d_{\SC}\times d_{\SC}}$.

Using the explicit expression for the partial derivatives, we find that the first term in Eq.~\eqref{QFIM_jk} is
\begin{equation}
\begin{split}
    \Re [\bra{\partial_j \psi(\thv)} \ket{\partial_k \psi(\thv)} ] &= \Re[ i(-i)\bra{\psi} \tilde{H_j} U(\thv)\ad U(\thv) \tilde{H_k} \ket{\psi} ]=\Re[\bra{\psi} \tilde{H_j} \tilde{H_k} \ket{\psi} ]\,.
\end{split}
 \end{equation}
Choosing an orthonormal basis containing $\ket{\psi}$ we can rewrite this term as 
\begin{equation}
\begin{split}
      \Re [\bra{\partial_j \psi(\thv)} \ket{\partial_k \psi(\thv)} ] &= \sum_m \Re[ \bra{\psi} \tilde{H_j} \ket{m}\bra{m} \tilde{H_k} \ket{\psi} ]=  \Re [\bra{\psi} \tilde{H_j} \ket{\psi} \bra{\psi} \tilde{H_k} \ket{\psi} ] + \sum_{m\neq\psi} \Re[ \bra{\psi} \tilde{H_j} \ket{m}\bra{m} \tilde{H_k} \ket{\psi} ]\,.
\end{split}
\end{equation}

Proceeding similarly, we find for the second term in Eq.~\eqref{QFIM_jk}
\begin{equation}
\begin{split}
    \Re[\bra{\partial_j \psi(\thv)}  \ket{\psi(\thv)} \bra{\psi(\thv)} \ket{\partial_k \psi(\thv)}] &= \Re[(-i)i \bra{\psi} \tilde{H_j} U(\thv)\ad \ket{\psi(\thv)} \bra{\psi(\thv)} U(\thv)   \tilde{H_k} \ket{\psi} ]\\
    &= \Re [\bra{\psi} \tilde{H_j} \ket{\psi} \bra{\psi} \tilde{H_k} \ket{\psi} ]\,.
\end{split}
\end{equation}

Combining these results we can express the matrix elements of the QFIM as 
\begin{equation}\label{eq:QFIM-elements-1}
\begin{split}
   [F(\thv)]_{jk}&=4\Re[\bra{\psi} \tilde{H_j} \tilde{H_k} \ket{\psi} ]-4\Re [\bra{\psi} \tilde{H_j} \ket{\psi} \bra{\psi} \tilde{H_k} \ket{\psi} ]=4\sum_{m\neq\psi} \Re[ \bra{\psi} \tilde{H_j} \ket{m}\bra{m} \tilde{H_k} \ket{\psi} ]\,.
\end{split}
\end{equation}
Note here that this equations also allows us to express the QFIM elements as $[F]_{jk}=4\Re[ \text{Cov}_{\ket{\psi}}(\tilde{H_j},\tilde{H_k}) ]$. Then, defining the vectors $\vec{R}_{mn}$ and $\vec{I}_{mn}$ with components
\begin{equation}
R_{mn}(j) =\Re[ \bra{m} \tilde{H_j} \ket{n}],\quad I_{mn}(j)=\Im[ \bra{m} \tilde{H_j} \ket{n} ]\,,
\end{equation}
we can express~\eqref{eq:QFIM-elements-1} as
\begin{equation}
\begin{split}
[F(\thv)]_{jk}&=4 \sum_{m\neq \psi} R_{\psi m}(j)R_{m \psi}(k) - I_{\psi m}(j)I_{m\psi}(k)=4 \sum_{m\neq \psi} R_{\psi m}(j)R_{\psi m}(k) + I_{\psi m}(j)I_{\psi m}(k)\,,
\end{split}
\end{equation}
where the second equality follows from the fact that $R_{mn}(j)=R_{nm}(j)$, while $I_{mn}(j)= - I_{nm}(j)$. Thus, one can finally express the QFIM as a sum of $2d-2$ rank-one matrices
\begin{equation}
    \label{eq:F1}
    F(\thv) = -2 \sum_{m=1,m\neq \psi}^d (\vec{R}_{m\psi}\cdot \vec{R}_{m\psi}^{\top} + \vec{I}_{m\psi}\cdot \vec{I}_{m\psi}^{\top})\,.
\end{equation}

Here we recall that, by definition, $H_j$ are elements in the DLA $\liea_\SC$. Then, since the unitaries $U$ are elements of the dynamical Lie group $\mathbb{G}_\SC$ generated by $\liea_\SC$, conjugating $H_j$ by any unitary $U$ results in another element in $\liea_\SC$. That is: $\forall U\in\mathbb{G}_\SC$, and $\forall H_i\in \liea_\SC$ we have $U H_j U\ad\in\liea_\SC$.  Then, by repeating this argument $j$ times, we find that $\tilde{H}_j\in\liea_\SC$, where $\tilde{H}_j$ was defined in Eq.~\eqref{SM_Hj}.

Letting $\{S_\nu \}_{\nu=1}^{\dim(\liea)}$ be a basis of $\liea$, we can express 
\begin{equation}
    \label{eq:Htilde-2}
    \tilde{H_j} = \sum_{\nu=1}^{\rm dim(\liea_{\SC})} a_{\nu}(j) S_{\nu}\,,
\end{equation}
with $a_{\nu}$ real coefficients. Using this fact, we can expand $\vec{R}_{mn}$ and $\vec{I}_{mn}$ in the following ways:
\begin{equation}
\label{eq:Rsubspace}
\begin{split}
    R_{mn}(j) &= \Re[ \sum_{\nu}^{\rm dim(\liea_{\SC})} \bra{m} a_{\nu}(j) S_{\nu} \ket{n}] = \sum_{\nu}^{\rm dim(\liea_{\SC})} \Re [\bra{m} S_{\nu} \ket{n}] a_{\nu}(j)\,,\\
    I_{mn}(j)  &= \Im[ \sum_{\nu}^{\rm dim(\liea_{\SC})} \bra{m} a_{\nu}(j) S_{\nu} \ket{n}] = \sum_{\nu}^{\rm dim(\liea_{\SC})} \Im[ \bra{m} S_{\nu} \ket{n}] a_{\nu}(j)\,.
\end{split}
\end{equation}
More succinctly, we find
\begin{equation}
    \label{eq:F2}
    \vec{R}_{mn} = \sum_{\nu=1}^{\dim(\liea_{\SC})} \Re \bra{m} S_{\nu} \ket{n} \vec{a}_{\nu},\quad \text{and} \quad \vec{I}_{mn} = \sum_{\nu=1}^{\dim(\liea_{\SC})} \Im \bra{m} S_{\nu} \ket{n} \vec{a}_{\nu}\,.
\end{equation}
These equations show that the vectors $\vec{R}_{mn}$ and $\vec{I}_{mn}$ can be expressed as a linear combination of  $\dim(\liea_{\SC})$ other vectors $\{\vec{a}_{\nu} \}$. Then, since the $\vec{R}_{mn}$ and $\vec{I}_{mn}$ generate the $2d-2$ rank-one matrices in the QFIM, we have that $F(\thv)$ has a support on a subspace with a basis that has, at most, $\dim(\liea_\SC)$ elements. Thus, we find 
\begin{equation}
    \rank[F_\mu(\thv)]\leq \dim(\liea_\SC)\,,
\end{equation} 
where we have recovered the $\mu$ dependence of the QFIM.
\end{proof}

Let us here note that the proof of Theorem~\ref{theo:1_SM} holds for all states in the training set $\ket{\psi_{\mu}} \in \SC$. Thus, for all $\ket{\psi_{\mu}}$ we know that the associated QFIM $F_{\mu}(\thv)$ has a column space contained within some fixed $\rm dim(\liea_{\SC})$ dimensional space. More precisely, from the previous proof, we have that the following Proposition holds. 
\begin{proposition}
    \label{Ffixedsubspace}
    There is some vector space spanned by $\rm dim(\liea_{\SC})$ vectors $\{ \vec{a}_{\nu} \}_{\nu=1}^{\dim(\liea_{\SC})}$, such that for any state in the training set $\ket{\psi_{\mu}} \in \SC$, the associated QFIM $F_{\mu}(\thv)$ has a column space contained within this vector space.
\end{proposition}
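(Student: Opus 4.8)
The plan is to recognize that this proposition is an immediate consequence of the identities already established in the proof of Theorem~\ref{theo:1_SM}; the only genuinely new assertion is that the spanning set may be chosen \emph{once and for all}, independently of which training state is considered. First I would recall the vectors $\{\vec{a}_\nu\}_{\nu=1}^{\dim(\liea_\SC)}$ introduced via Eq.~\eqref{eq:Htilde-2}: their components are the real coefficients $a_\nu(j)$, $1\leq j\leq M$, in the expansion $\tilde{H}_j=\sum_{\nu} a_\nu(j) S_\nu$ of each conjugated generator in the fixed basis $\{S_\nu\}$ of $\liea_\SC$, so that $\vec{a}_\nu\in\mathbb{R}^M$.

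The key observation I would then make is that $\tilde{H}_j = U_{1\vsa j}\ad H_j U_{1\vsa j}$ depends only on the circuit parameters $\thv$ and the generators in $\GC$, and not at all on the state $\ket{\psi_\mu}$. Hence, once $\thv$ is fixed, the coefficients $a_\nu(j)$ --- and therefore the vectors $\{\vec{a}_\nu\}$ --- are fixed, independently of $\mu$. I would set $V := \spn\{\vec{a}_\nu\}_{\nu=1}^{\dim(\liea_\SC)}\subseteq\mathbb{R}^M$, a subspace of dimension at most $\dim(\liea_\SC)$ that is manifestly the same for every training state.

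Next, for an arbitrary $\ket{\psi_\mu}\in\SC$, I would invoke Eq.~\eqref{eq:F2}, which writes each $\vec{R}_{m\psi_\mu}$ and $\vec{I}_{m\psi_\mu}$ as a linear combination of the $\vec{a}_\nu$ with scalar weights $\Re\bra{m} S_\nu \ket{\psi_\mu}$ and $\Im\bra{m} S_\nu \ket{\psi_\mu}$. The state enters only through these scalars (and through the choice of completing basis $\{\ket{m}\}$), never through the vectors being combined, so $\vec{R}_{m\psi_\mu},\vec{I}_{m\psi_\mu}\in V$ for every $m$. Because Eq.~\eqref{eq:F1} expresses $F_\mu(\thv)$ as a sum of rank-one matrices $\vec{R}_{m\psi_\mu}\cdot\vec{R}_{m\psi_\mu}^{\top}+\vec{I}_{m\psi_\mu}\cdot\vec{I}_{m\psi_\mu}^{\top}$, its column space is spanned by vectors lying in $V$; thus the column space of $F_\mu(\thv)$ is contained in $V$ for every $\mu$, which is exactly the claim.

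The hard part --- though it is minor --- will be to argue cleanly that the state-dependence of the completing orthonormal basis $\{\ket{m}\}$, which in the Theorem~\ref{theo:1_SM} proof is taken to contain $\ket{\psi_\mu}$ and so varies with $\mu$, does not leak into the definition of $V$. I would dispatch this by emphasizing that $V$ is built solely from the $\vec{a}_\nu$, which are defined prior to and independently of any basis choice; altering the completing basis merely relabels the summands and rescales the scalar coefficients in Eq.~\eqref{eq:F2}, leaving the spanning vectors untouched. With that noted, the proposition follows directly.
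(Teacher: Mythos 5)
Your proof is correct and follows essentially the same route as the paper, which states the proposition as an immediate consequence of the proof of Theorem~\ref{theo:1_SM}: since $\tilde{H}_j$ and hence the vectors $\{\vec{a}_\nu\}$ depend only on $\thv$ and the generators, the column space of every $F_\mu(\thv)$ lies in the single $\mu$-independent span of the $\vec{a}_\nu$. Your explicit remarks on the state-independence of $\tilde{H}_j$ and on the $\mu$-dependent completing basis $\{\ket{m}\}$ entering only through scalar coefficients simply spell out details the paper leaves implicit.
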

We will make use of this proposition in the following section. 

\section{Proof of Theorem 2}

In the following, we provide a proof for Theorem \ref{theo:2_SM}. For convenience, we here recall the definition of overparametrization as well as the statement for Theorem~\ref{theo:2_SM}.
\begin{definition}[Overparametrization]\label{def:overparametrization-SI}
    A QNN is said to be overparametrized if the number of parameters $M$ is such that the QFI matrices, for all the states in the training set, simultaneously saturate their achievable rank $R_\mu$ at least in one point of the loss landscape. That is, if increasing the number of parameters past some minimal (critical) value $M_c$ does not further increase the rank of any QFIM:
    \begin{equation}\label{eq:Rmu-SI}
        \max_{M\geq M_c,\thv}\rank[F_\mu(\thv)] = R_\mu\,.
    \end{equation}
\end{definition}

Let us also recall two definitions of a QNN's effective dimension. First, following~\cite{haug2021capacity}, we can define the average effective quantum dimension of a QNN:
\begin{equation}\label{eq:SM_eff_dim_1}
    D_1(\thv)=\mathbb{E}\left[\sum_{i=1}^M\IC(\lambda^{i}_\mu(\thv))\right]\,,
\end{equation}
where $\lambda^{i}_\mu(\thv)$ are the eigenvalues of $F_{\mu}(\thv)$, and where $\IC(x)=0$ for $x=0$, and $\IC(x)=1$ for $x\neq1$. Here the expectation value is taken over the probability distribution  that samples input states from the dataset. 

The second definition follows from~\cite{abbas2020power}. In the $n\rightarrow\infty$ limit, the effective quantum dimension of~\cite{abbas2020power} converges to
\begin{equation}\label{eq:SM_eff_dim_2}
    D_2=\max_{\thv}\left(\rank\left[\widetilde{F}(\thv)\right]\right)\,,
\end{equation}
where $\widetilde{F}(\thv)$ is the classical Fisher Information matrix obtained as
\small
\begin{equation}
    \widetilde{F}(\thv)=\mathbb{E}\left[\frac{\partial\log(p(\ket{\psi_\mu},y_\mu;\thv))}{\partial\thv}\frac{\partial\log(p(\ket{\psi_\mu},y_\mu;\thv))}{\partial\thv}^T\right].
\end{equation}
\normalsize 
Here, $p(\ket{\psi},y;\thv)$, describes the joint relationship between an input $\ket{\psi}$ and an output $y$ of the QNN. In addition, the expectation value is taken over the probability distribution that samples input states from the dataset.

Then, consider the following theorem in the main text.

\begin{theorem}\label{theo:2_SM}
The model capacity, as quantified by the effective dimensions of Eqs.~\eqref{eq:SM_eff_dim_1} or~\eqref{eq:SM_eff_dim_2}, is upper bounded as
\begin{equation}
    D_1(\thv)\leq \dim(\liea_\SC),\quad D_2\leq \dim(\liea_\SC)\,.
\end{equation}
Moreover, when the QNN is overparametrized according to Definition~\ref{def:overparametrization}, $D_1(\thv)$ achieves its maximum value on at least one point of the landscape. 
\end{theorem}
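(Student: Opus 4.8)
The plan is to reduce both capacity bounds to Theorem~\ref{theo:1_SM} and Proposition~\ref{Ffixedsubspace}, handling $D_1$ and $D_2$ separately. For $D_1(\thv)$, the first step is to recognize that the indicator sum merely counts nonzero eigenvalues: since $\IC$ is one on the nonzero eigenvalues of $F_\mu(\thv)$ and zero on the vanishing ones, $\sum_{i=1}^M \IC(\lambda^i_\mu(\thv)) = \rank[F_\mu(\thv)]$, so that $D_1(\thv) = \mathbb{E}_\mu[\rank[F_\mu(\thv)]]$. Theorem~\ref{theo:1_SM} bounds each summand by $\dim(\liea_\SC)$, and since the expectation over the data distribution is a convex combination, $D_1(\thv) \leq \dim(\liea_\SC)$ follows immediately. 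For the \emph{moreover} claim I would invoke Definition~\ref{def:overparametrization-SI}: at an overparametrized point $\thv$ every QFIM simultaneously attains its maximal rank $R_\mu$, so $D_1(\thv) = \mathbb{E}_\mu[R_\mu]$, which is the largest value the expectation can reach since $\rank[F_\mu(\thv')] \leq R_\mu$ for all $\thv'$.

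For $D_2$ the single-state argument in the Methods uses $\widetilde{F} \leq F$, but with a training set I must first reduce the joint classical Fisher information to per-state pieces. Factoring the joint distribution as $p(\ket{\psi_\mu}, y_\mu; \thv) = p(\ket{\psi_\mu})\, p(y_\mu \mid \ket{\psi_\mu}; \thv)$ shows the $\thv$-dependence lives entirely in the conditional, so $\widetilde{F}(\thv) = \mathbb{E}_\mu[\widetilde{F}_\mu(\thv)]$, with $\widetilde{F}_\mu$ the classical Fisher information of the output distribution for input $\ket{\psi_\mu}$. The standard quantum-versus-classical Fisher inequality~\cite{meyer2021fisher} then yields $\widetilde{F}_\mu(\thv) \leq F_\mu(\thv)$ for every $\mu$. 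As both matrices are positive semidefinite, this gives the support inclusion $\supp[\widetilde{F}_\mu(\thv)] \subseteq \supp[F_\mu(\thv)]$ (equivalently, applying $A\leq B \Rightarrow A^q \leq B^q$ for $q\in[0,1]$~\cite{chan1985hermitian} and letting $q\to 0$).

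The concluding and most delicate step is passing from the per-state support inclusions to a bound on the averaged matrix. Here Theorem~\ref{theo:1_SM} by itself is insufficient, since a priori the average $\widetilde{F}(\thv)$ could have rank exceeding each individual $\rank[F_\mu]$; what rescues the argument is Proposition~\ref{Ffixedsubspace}, which guarantees that the supports of all the $F_\mu(\thv)$ lie inside one common $\dim(\liea_\SC)$-dimensional subspace $V = \spn\{\vec{a}_\nu\}$. Combined with the support inclusions, each $\supp[\widetilde{F}_\mu(\thv)] \subseteq V$. For a positive semidefinite average the support equals the span of the individual supports (because the kernel of a sum of positive semidefinite matrices is the intersection of their kernels), so $\supp[\widetilde{F}(\thv)] \subseteq V$ as well, giving $\rank[\widetilde{F}(\thv)] \leq \dim V = \dim(\liea_\SC)$ for every $\thv$ and hence $D_2 = \max_\thv \rank[\widetilde{F}(\thv)] \leq \dim(\liea_\SC)$.

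The hard part will be exactly this averaging step: one must avoid the crude bound on the rank of a sum by the sum of ranks and instead exploit the common-subspace structure, which is precisely why Proposition~\ref{Ffixedsubspace} — rather than merely Theorem~\ref{theo:1_SM} — is the essential ingredient in the multi-state setting.
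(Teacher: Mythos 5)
Your proof is correct and follows essentially the same route as the paper's: the $D_1$ part is handled identically (identifying the indicator sum with $\rank[F_\mu(\thv)]$, bounding via Theorem~\ref{theo:1_SM}, and invoking the overparametrization definition for the \emph{moreover} claim), and for $D_2$ both arguments combine the per-state inequality $\widetilde{F}_\mu(\thv)\leq F_\mu(\thv)$ with the common $\dim(\liea_\SC)$-dimensional subspace guaranteed by Proposition~\ref{Ffixedsubspace}, which you rightly identify as the essential ingredient beyond Theorem~\ref{theo:1_SM}. The only (immaterial) difference is the order of operations: the paper first averages the operator inequality to get $\widetilde{F}(\thv)\leq \mathbb{E}_\mu[F_\mu(\thv)]$, applies $A\leq B \Rightarrow A^0\leq B^0$ to the averaged matrices, and then bounds $\rank[\mathbb{E}_\mu[F_\mu(\thv)]]$ by Proposition~\ref{Ffixedsubspace}, whereas you take supports state-by-state and then use that the support of a finite average of positive semidefinite matrices is the span of the individual supports.
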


\begin{proof}
    When the QNN is overparametrized according to Definition~\ref{def:overparametrization}, there exists some $\thv$ such that the ranks of the QFIMs     are maximized. That is, $\rank[F\mu(\thv)]=R_\mu$.
    
    Note that the effective dimension $D_1(\thv)$ can be expressed as $D_1(\thv)=\mathbb{E}\left[\rank[F_\mu(\thv)]\right]$. Then, since the ranks are maximal at the overparametrization, so is $D_1(\thv)$. More precisely, we have $D_1(\thv)=\mathbb{E}\left[R_\mu\right]$. Additionally, as shown in Theorem~\ref{theo:1}, $\rank[F_{\mu}(\thv)] \leq \rm dim(\liea_{\SC})$ for all $\thv$ and $\ket{\psi_{\mu}} \in \SC$, so $D_1(\thv)\leq \rm dim(\liea_{\SC})$. 
    
    Next we consider the effective dimension $D_2$ (Eq.~\eqref{eq:SM_eff_dim_2}). $D_2$ specifically quantifies the maximal rank of the expectation value of the classical Fisher information matrices $\widetilde{F}_{\mu}(\thv)$. Because the operator $F_{\mu}(\thv) - \widetilde{F}_{\mu}(\thv)$ is positive semidefinite (\cite{meyer_2021}, Section 5), the following holds:
    \begin{equation}
        \widetilde{F}(\thv) = \mathbb{E}_{\mu}[\widetilde{F}_{\mu}(\thv)] \leq \mathbb{E}_{\mu}[F_{\mu}(\thv)]\,.
    \end{equation}
    In addition for any two symmetric matrices $A$ and $B$, having $A \leq B$ implies that $A^0 \leq B^0$ (\cite{chan1985hermitian}, Theorem 3). Thus,
    \begin{equation}
        (\widetilde{F}(\thv))^0 \leq (\mathbb{E}_{\mu}[F_{\mu}(\thv)])^0 \,,
    \end{equation}
    implying that
    \begin{equation}
        \rank[\widetilde{F}(\thv)] = \Tr[(\widetilde{F}(\thv))^0] \leq \Tr[(\mathbb{E}_{\mu}[F_{\mu}(\thv)])^0] = \rank[\mathbb{E}_{\mu}[F_{\mu}(\thv)]]\,.
    \end{equation}
    By applying Proposition~\ref{Ffixedsubspace} to $\mathbb{E}_{\mu}[F_{\mu}(\thv)]$, we arrive at the desired result:
    \begin{equation}
        \rank[\widetilde{F}(\thv)] \leq \rank[\mathbb{E}_{\mu}[F_{\mu}(\thv)]] \leq \rm dim(\liea_{\SC})\,.
    \end{equation}
\end{proof}

\section{Proof of Theorem 3}
In the following we prove Theorem~\ref{theo:3-si}, which bounds the rank of the Hessian for an observable minimization loss function of the form
\begin{equation}\label{eq:lineal-loss_si}
    \LC(\thv)=\sum_{\ket{\psi_\mu}\in\SC}c_\mu \Tr[U(\thv)\dya{\psi_\mu}U\ad(\thv)O]\,,
\end{equation}
at its optimum. Here, the terms $c_\mu$ are real coefficients associated with each state $\ket{\psi_\mu}$ in $\SC$, and where the operator $O$ is Hermitian. Let us restate the theorem for convenience.

\begin{theorem}\label{theo:3-si}
  Let $\nabla^2\LC(\thv_*)$  be the Hessian for a loss function of the form of Eq.~\eqref{eq:lineal-loss_si} evaluated at the optimum set of parameters $\thv_*$. Then, its rank is upper bounded as
    \begin{equation}
        \rank[\nabla^2\LC(\thv_*)]\leq\min \{ \dim(\liea_\SC), 2dr - r^2 - r\}\,,
    \end{equation}
    where $r=\min\{\rank[\sum_{\mu} c_{\mu} \dya{\psi_{\mu}}],\rank[O]\}$, and $d$ is the Hilbert space dimension.
\end{theorem}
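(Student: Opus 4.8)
The plan is to follow the strategy of Theorem~\ref{theo:1_SM} closely, since the Hessian will turn out to have the same ``sum of rank-one matrices built from conjugated DLA elements'' structure. First I would write the loss compactly as $\LC(\thv)=\Tr[\rho(\thv)O]$ with $\rho(\thv)=U(\thv)\sigma U\ad(\thv)$ and $\sigma=\sum_\mu c_\mu\dya{\psi_\mu}$. Because $\LC$ is \emph{linear} in $\rho$, the Hessian entries are simply $[\nabla^2\LC]_{ik}=\Tr[(\partial_i\partial_k\rho)O]$. Using the identity from the proof of Theorem~\ref{theo:1_SM}, namely $\partial_i U=-iU\widetilde{H}_i$ with $\widetilde{H}_i=U_{1\vsa i}\ad H_i U_{1\vsa i}\in\liea_\SC$, a short computation gives $\partial_i\rho=iU[\sigma,\widetilde{H}_i]U\ad$.

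Differentiating once more produces two contributions, $\partial_i\partial_k\rho = U[\widetilde{H}_i,[\sigma,\widetilde{H}_k]]U\ad + iU[\sigma,\partial_i\widetilde{H}_k]U\ad$. The key simplification happens at the optimum: the minimizer of a linear functional over the adjoint orbit $\{U\sigma U\ad\}$ satisfies the stationarity condition $[\rho(\thv_*),O]=0$, equivalently $[\sigma,\widetilde{O}_*]=0$ with $\widetilde{O}_*=U\ad(\thv_*)OU(\thv_*)$. Tracing the second contribution against $O$ and using cyclicity yields $i\Tr[\partial_i\widetilde{H}_k\,[\widetilde{O}_*,\sigma]]=0$, so only the nested-commutator term survives. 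I expect this optimality step to be the main obstacle: it is exactly the commutation of state and observable at a genuine optimum (standard for linear cost functions on adjoint orbits), and one must also verify that $\partial_i\widetilde{H}_k$ remains inside $\liea_\SC$, which holds because $\widetilde{H}_k(\thv)$ is a smooth curve in the linear space $\liea_\SC$.

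Next I would diagonalize $\sigma$ and $\widetilde{O}_*$ simultaneously (possible precisely because they commute), writing $\sigma\ket{m}=s_m\ket{m}$ and $\widetilde{O}_*\ket{m}=o_m\ket{m}$ in a common eigenbasis $\{\ket m\}$ of the relevant subspace $\HC_\SC$ of dimension $d$. Expanding $\Tr[[\widetilde{H}_i,[\sigma,\widetilde{H}_k]]\widetilde{O}_*]$ in this basis and collecting terms, the eigenvalue factors combine into $-(s_m-s_n)(o_m-o_n)$, so that with $h^{(i)}_{mn}=\bra m\widetilde{H}_i\ket n$, $R_{mn}(i)=\Re[h^{(i)}_{mn}]$ and $I_{mn}(i)=\Im[h^{(i)}_{mn}]$ one obtains the real, symmetric form
\begin{equation}
[\nabla^2\LC(\thv_*)]_{ik}=-\sum_{m,n}(s_m-s_n)(o_m-o_n)\bigl[R_{mn}(i)R_{mn}(k)+I_{mn}(i)I_{mn}(k)\bigr]\,.
\end{equation}
This is exactly the advertised decomposition $\nabla^2\LC(\thv_*)=\sum_{m,n}\kappa_{mn}(\vec R_{mn}\vec R_{mn}^{\top}+\vec I_{mn}\vec I_{mn}^{\top})$ with $\kappa_{mn}=-(s_m-s_n)(o_m-o_n)$; the diagonal terms drop out since $\kappa_{mm}=0$.

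Finally I would read off the two bounds. For the algebraic bound, since each $\widetilde{H}_i\in\liea_\SC$ expands as $\widetilde{H}_i=\sum_\nu a_\nu(i)S_\nu$ in a DLA basis, the vectors $\vec R_{mn}$ and $\vec I_{mn}$ lie in the $\dim(\liea_\SC)$-dimensional span of $\{\vec a_\nu\}$, exactly as in Theorem~\ref{theo:1_SM}, giving $\rank[\nabla^2\LC(\thv_*)]\leq\dim(\liea_\SC)$. For the counting bound, observe that $\kappa_{mn}$ vanishes unless both $s_m\neq s_n$ and $o_m\neq o_n$; in particular every pair with $\ket m,\ket n$ in $\ker\sigma$ contributes nothing. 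Writing $r=\rank[\sigma]$, the number of unordered pairs $\{m,n\}$ not entirely contained in $\ker\sigma$ is at most $\binom{d}{2}-\binom{d-r}{2}=dr-\tfrac{1}{2}(r^2+r)$, and since each pair supplies the two vectors $\vec R_{mn},\vec I_{mn}$, the rank is at most $2dr-r^2-r$. The identical argument applied to $\widetilde{O}_*$ gives the same expression with $r=\rank[O]$, and since $f(r)=2dr-r^2-r$ is non-decreasing for $0\le r\le d$, taking the smaller rank yields the bound with $r=\min\{\rank[\sigma],\rank[O]\}$. Intersecting with the algebraic bound gives $\rank[\nabla^2\LC(\thv_*)]\leq\min\{\dim(\liea_\SC),2dr-r^2-r\}$, completing the proof.
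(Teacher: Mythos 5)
Your proposal is correct and reproduces the paper's core structure---expressing the Hessian at the optimum as a sum of rank-one projectors $\vec{R}_{mn}\cdot\vec{R}_{mn}^{\top}+\vec{I}_{mn}\cdot\vec{I}_{mn}^{\top}$ built from the conjugated generators $\widetilde{H}_i\in\liea_\SC$ in a common eigenbasis, then bounding the span---but it departs from the paper in two substantive places, both defensible and arguably cleaner. First, where you derive $[\rho(\thv_*),O]=0$ from first-order stationarity of a linear functional on the adjoint orbit, the paper instead imports the explicit kinematic-optimum form $U(\thv_*)=R\ad Q$ (simultaneously diagonalizing $\sigma$ and $O$, with oppositely ordered eigenvalues) from the quantum control literature. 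Be aware that your stationarity argument yields full commutation only when the admissible variations span all anti-Hermitian directions; over the restricted group $e^{\liea_\SC}$ a critical point only forces the $\liea_\SC$-component of $i[\rho_*,O]$ to vanish. This is not a gap relative to the paper---its assumption $U(\thv_*)=R\ad Q$ is at least as strong---but both proofs implicitly require the global kinematic optimum to be attained within the ansatz, and your write-up would benefit from stating that. Second, your counting of the $2dr-r^2-r$ bound is genuinely different and simpler: symmetrizing the coefficient to $\kappa_{mn}=-(s_m-s_n)(o_m-o_n)$ shows at once that unordered pairs lying entirely in $\ker\sigma$ (or, by the same token, in $\ker\widetilde{O}_*$) contribute nothing, so at most $\binom{d}{2}-\binom{d-r}{2}$ pairs survive, each supplying the two vectors $\vec{R}_{mn},\vec{I}_{mn}$; the paper reaches the identical count through a laborious case analysis that splits the sums over $m>n$ and $m<n$, orders the eigenvalues, and assumes $r(e)\leq r(o)$ without loss of generality. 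Your symmetric form makes the bound manifestly symmetric in $\sigma$ and $O$ and applies at any critical point where state and observable commute, not only at the anti-ordered global minimum (your overall sign differs from the paper's display, which is immaterial for rank). One pedantic point: $f(r)=2dr-r^2-r$ is actually decreasing on the real interval $(d-\tfrac{1}{2},d]$, but since $f(d)=f(d-1)$ it is non-decreasing on the integers $0,\dots,d$, which is all your final min-taking step needs.
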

\begin{proof}
Let us define $\rho =\sum_{\mu} c_{\mu} \ket{\psi_{\mu}} \bra{\psi_{\mu}}$. First, the gradient has the following form:
\begin{equation}
\label{eq:obsgrad}
\begin{split}  
    \partial_j \LC(\thv) &= -\Tr\left[\partial_j U(\thv) \rho U\ad(\thv) O - U(\thv) \rho \partial_j U\ad(\thv) O\right]= i\Tr \left[ [\H_j, \rho] \,O_f \right]\,,
\end{split} 
\end{equation}
where $\tilde{H_j}$ is defined in Eq.~\eqref{SM_Hj}, and where we defined $O_f=U(\thv)\ad O U(\thv)$ (we henceforth drop the explicit dependence on $\thv$). Going forward, we similarly drop a the explicit dependence on $\thv$ on terms which are not being differentiated. If we assume that $i \leq j$, and note that $\partial_i \H_j =i[\H_i,\H_j]$ in this case, then we can express the matrix elements of the Hessian as
\begin{equation}
\label{eq:obshessian}
\begin{split}
    \partial_i \partial_j \LC(\thv) &= i \partial_i \Tr\left[U(\thv) \tilde{H_j}(\thv) \rho U\ad(\thv) O - U(\thv) \rho \tilde{H_j}(\thv) U\ad(\thv) O\right] \\
    &= \Tr[\tilde{H_i} \tilde{H_j} \rho O_f] - \Tr[[\tilde{H_i}, \tilde{H_j}] \rho O_f] - \Tr[ \tilde{H_j} \rho \tilde{H_i} O_f] \\
    &- \Tr[ \tilde{H_i} \rho \tilde{H_j} O_f] + \Tr[ \rho [\tilde{H_i}, \tilde{H_j}] O_f] + \Tr[\rho \tilde{H_j} \tilde{H_i}O_f] \\
    &= 2 \Re\left[ \Tr[\rho \tilde{H_i} \tilde{H_j} O_f]\right] - 2\Re\left[ \Tr[\tilde{H_i} \rho \tilde{H_j}O_f]\right]\,. \\
\end{split}
\end{equation}

We now evaluate the Hessian at the optimum $\thv_*$. Here, the propagator has the form \cite{chakrabarti2007quantum} $U(\thv_*) = R\ad Q$ for unitaries $R$ and $Q$ that respectively diagonalize $\rho$ and $O$, i.e. $\rho = Q\ad e Q$ and $O = R\ad o R $, such that $e$ ($o$) is a diagonal matrix containing  the eigenvalues of $\rho$ ($O$) in decreasing (increasing) order. Therefore, we can rewrite Eq.~\eqref{eq:obshessian} at the optimum as
\begin{equation}
\label{eq:final_func_form_obs}
\begin{split}
    \partial_i \partial_j \LC(\thv_*) &= 2 \Re \left[ \Tr[\tilde{H_i} \tilde{H_j} O_f \rho] \right] - 2\Re \left[\Tr[\tilde{H_i} \rho \tilde{H_j} O_f] \right] \\
    &= 2 \R \left[\Tr[\tilde{H_i} \tilde{H_j} Q\ad oe Q] \right] - 2\Re \left[ \Tr[\tilde{H_i} Q\ad e Q \tilde{H_j} Q\ad o Q] \right]\\
    &= 2  \sum_{m,n=1}^d (o_m e_m - o_me_n) \Re \left[ \bra{m} Q \tilde{H_i} Q\ad  \ket{n} \bra{n}Q \tilde{H_j} Q\ad \ket{m} \right]\\
    &= 2 \sum_{m,n = 1}^d (o_m e_m - o_me_n) (R_{mn}'(i)R_{mn}'(j)+I_{mn}'(i)I_{mn}'(j)) \,,
\end{split}
\end{equation}
where we have used $O_f = Q \ad o Q$ at the optimum and defined $R_{mn}'(j) = \Re [\bra{m} Q \tilde{H_j} Q\ad \ket{n}]$ and $I_{mn}'(j) = \Im[ \bra{m} Q \tilde{H_j} Q\ad \ket{n}]$. Because Eq.~\eqref{eq:final_func_form_obs} is symmetric in indices $i$ and $j$, we can remove the assumption that $i \leq j$. By following a proof similar to that in Theorem~\ref{theo:1_SM}, we have an upper bound of $\rm dim(\liea_\SC)$ on the rank of the Hessian $\nabla^2\LC(\thv_*)$ because $R_{mn}'(\cdot)$ and $I_{mn}'(\cdot)$ reside in a $\rm dim(\liea_{\SC})$ dimensional space; see Eq.~\eqref{eq:F2}.

We will now establish the additional $2dr - r^2 -r$ upper bound stated in the theorem. We will use the short hand $r(o)$ and $r(e)$ for ranks of $o$ and $e$, respectively. Assume that $r(e) \leq r(o)$ (the case of $r(o) \leq r(e)$ proceeds similarly). We would like to split Eq.~\eqref{eq:final_func_form_obs} into disjoint summations over $m$ and $n$. Toward that goal, let us define $t_{mn}(i,j) = R_{mn}'(i)R_{mn}'(j)+I_{mn}'(i)I_{mn}'(j)$ to rewrite Eq.~\eqref{eq:final_func_form_obs}:

\begin{equation}
    \begin{split}
        \frac{1}{2} \partial_i \partial_j \LC(\thv_*) =& \sum_{m} o_me_m t_{mm}(i,j) + \sum_{m > n} o_m e_m t_{mn}(i,j) + \sum_{m < n} o_m e_m t_{mn}(i,j) \\
        &- \sum_{m} o_{m} e_{m} t_{mm}(i,j) - \sum_{m > n} o_{m} e_{n} t_{mn}(i,j) - \sum_{m < n} o_{m} e_{n} t_{mn}(i,j)  \\
        =& \sum_{m > n, m \leq r(e)} o_m e_m t_{mn}(i,j)
        + \sum_{m < n, m \leq r(e)} o_m e_m t_{mn}(i,j)
        - \sum_{m > n, m \leq r(o), n \leq r(e)} o_{m} e_{n} t_{mn}(i,j) \\
        &- \sum_{m < n, n \leq r(e)} o_{m} e_{n} t_{mn}(i,j) \,,
    \end{split}
\end{equation}
where in the first equality we have simply split the sums among $m=n$, $m>n$, and $m<n$. Then, in the second sum we have attached more specific subscripts to the summation and used the fact that $r(e) \leq r(o)$. We now combine the sums over $m > n$ (and the same for $m < n$, separately) to arrive at

\begin{equation}
\begin{split}
    \frac{1}{2} \partial_i \partial_j \LC(\thv_*) =& \sum_{m > n; m, n \leq r(e)} (o_m e_m - o_m e_n) t_{mn}(i,j)
    + \sum_{m < n; m, n \leq r(e)} (o_m e_m - o_m e_n) t_{mn}(i,j) \\
    &- \sum_{m > n, r(e) < m \leq r(o), n \leq r(e)} o_{m} e_{n} t_{mn}(i,j)
    + \sum_{m < n, m \leq r(e)} o_{m} e_{m} t_{mn}(i,j)\,,
\end{split}
\end{equation}
where the second summation contains the leftover terms when combining over $m > n$ and the fourth summations contains the leftover terms when combining over $m < n$. Note that $R_{mn}'=R_{nm}'$ and $I_{mn}' = - I_{nm}'$. This means that we can also combine more terms between the first and second summations, and also combine terms between the third and fourth summations. By rewriting terms so that $m > n$ and combining, we arrive at

\begin{equation}
\begin{split}
    \frac{1}{2} \partial_i \partial_j \LC(\thv_*) =& \sum_{m > n; m, n \leq r(e)} (o_m e_m - o_m e_n + o_n e_n - o_n e_m) t_{mn}(i,j) \\
    &- \sum_{m > n, r(e) < m \leq r(o), n \leq r(e)} (-o_m e_n + o_{n} e_{n}) t_{mn}(i,j) \\
    &+ \sum_{m > n, m > r(o), n \leq r(e)} o_{n} e_{n} t_{mn}(i,j)\,.
\end{split}
\end{equation}
As a result, the Hessian can be expressed as
\begin{equation}
\begin{split}
    \frac{1}{2} \nabla^2\LC(\thv_*) &= \sum_{m > n; m, n \leq r(e)} (o_m e_m - o_m e_n + o_n e_n - o_n e_m) (\vec{R}_{mn}\cdot \vec{R}_{mn}^{\top} + \vec{I}_{mn}\cdot \vec{I}_{mn}^{\top}) \\
    &- \sum_{m > n, r(e) < m \leq r(o), n \leq r(e)} (-o_m e_n + o_{n} e_{n}) (\vec{R}_{mn}\cdot \vec{R}_{mn}^{\top} + \vec{I}_{mn}\cdot \vec{I}_{mn}^{\top}) \\
    &+ \sum_{m > n, m > r(o), n \leq r(e)} o_{n} e_{n} (\vec{R}_{mn}\cdot \vec{R}_{mn}^{\top} + \vec{I}_{mn}\cdot \vec{I}_{mn}^{\top})\,,
\end{split}
\end{equation}
where the $j$'th entry of $\vec{R}_{mn}$ and $\vec{I}_{mn}$ are $R_{mn}'(j)$ and $I_{mn}'(j)$, respectively. Now each summation is completely disjoint over $(m, n)$ pairs, so the remaining projectors, $\vec{R_{mn}}\cdot \vec{R_{mn}}^{\top}$ and $\vec{I_{mn}}\cdot \vec{I_{mn}}^{\top}$, are those such that $m > n$ and $n \leq r(e)$. This gives an upper bound on the rank of the Hessian when $r(e) \leq r(o)$ as
\begin{equation}
    \text{rank}(\nabla^2 \LC (\thv_*)) \leq 2 {r(e) \choose 2} + 2 r(e) (d - r(e)) = r(e)^2 - r(e) + 2 r(e) (d - r(e))\,.
\end{equation}
A similar analysis for the case of $r(o) \leq r(e)$ reveals 
\begin{equation}
    \text{rank}(\nabla^2 \LC (\thv_*)) \leq r(o)^2 - r(o) + 2 r(o) (d - r(o))\,.
\end{equation}
Thus, defining $r = \min\{r(e), r(o)\}$, we have an upper bound on the rank of the Hessian as
\begin{equation}
    \text{rank}(\nabla^2 \LC (\thv_*)) \leq 2dr - r^2 - r\,.
\end{equation}
    
\end{proof}

\section{Proof of Theorem 4}

Here we present a proof for Theorem~\ref{theo:4-si}, which upper bounds the rank of the Hessian (evaluated at the solution) for a unitary compilation task. Here the goal is to train a QNN so that its action matches that of a target unitary $V$. We consider two possible loss functions for this task
\begin{equation}
\begin{split}
    \LC_1 (\thv) = 2d - 2 \Re [\Tr[V\ad U(\vec{\theta})]],\quad \text{and} \quad \LC_2 (\thv) = 1 - \frac{1}{d^2}|\Tr[V\ad U(\vec{\theta})]|^2\,.
\end{split}
\end{equation}
Where $\LC_1 (\thv)$ is minimized if $U(\vec{\theta})=V$, while $\LC_2 (\thv)$ is minimized if $U(\vec{\theta})=e^{i \phi}V$, for some any phase $\phi$.

We recall now the statement of Theorem~\ref{theo:4-si},:
\begin{theorem}\label{theo:4-si}
Consider the loss functions for a unitary compilation task
\begin{equation}\label{SM_eq:unitary_cost}
\begin{split}
    \LC_1 (\thv) = 2d - 2 \Re [T(\thv)],\quad \text{and} \quad \LC_2 (\thv) = 1 - \frac{1}{d^2}|T(\thv)|^2, \nonumber
\end{split}
\end{equation}
where $T(\thv)= \Tr[V\ad U(\thv)]$ for a target unitary $V$. Then, let $\nabla^2\LC_1(\thv_*)$ and $\nabla^2\LC_2(\thv_*)$  be the Hessians for the loss functions  $\LC_1 (\thv)$ and  $\LC_1 (\thv)$, respectively evaluated at their solutions $U(\thv_*)=V$ and $U(\thv_*)=e^{i\phi}V$. Then, the maximal ranks of  $\nabla^2\LC_1(\thv_*)$ and $\nabla^2\LC_2(\thv_*)$  are such that $\rank[\nabla^2\LC_1(\thv_*)],\rank[\nabla^2\LC_2(\thv_*)]\leq \dim(\liea)$.
\end{theorem}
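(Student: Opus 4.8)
The plan is to reduce both Hessians, \emph{evaluated at their respective optima}, to exactly the rank-one-sum structure that already appeared in the proofs of Theorems~\ref{theo:1_SM} and~\ref{theo:3-si}, and then to invoke the same subspace argument. Recall from Eq.~\eqref{eq:F2} that the vectors $\vec{R}_{mn}$ and $\vec{I}_{mn}$, whose components are $R_{mn}(j)=\Re[\bra{m}\tilde{H}_j\ket{n}]$ and $I_{mn}(j)=\Im[\bra{m}\tilde{H}_j\ket{n}]$ with $\tilde{H}_j$ defined in Eq.~\eqref{SM_Hj}, all lie in a single $\dim(\liea)$-dimensional subspace spanned by $\{\vec{a}_\nu\}$, because each $\tilde{H}_j\in\liea$. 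Hence any matrix that can be written as a real-coefficient combination $\sum_{m,n}\kappa_{mn}(\vec{R}_{mn}\cdot\vec{R}_{mn}^{\top}+\vec{I}_{mn}\cdot\vec{I}_{mn}^{\top})$ automatically has rank at most $\dim(\liea)$. The whole proof therefore collapses to showing that $\nabla^2\LC_1(\thv_*)$ and $\nabla^2\LC_2(\thv_*)$ both take this form.

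For $\LC_1$ I would first compute the gradient using $\partial_j U(\thv)=-iU(\thv)\tilde{H}_j$, obtaining $\partial_j\LC_1=-2\Im[\Tr[V\ad U\tilde{H}_j]]$. Differentiating once more (assuming $i\leq j$ so that $\partial_i\tilde{H}_j=i[\tilde{H}_i,\tilde{H}_j]$, exactly as in the proof of Theorem~\ref{theo:3-si}) produces two contributions; evaluating at the solution $U(\thv_*)=V$ collapses $V\ad U$ to $\id$, the commutator contribution drops out since $\Tr[[\tilde{H}_i,\tilde{H}_j]]=0$, and one is left with the clean expression $[\nabla^2\LC_1(\thv_*)]_{ij}=2\Tr[\tilde{H}_i\tilde{H}_j]$. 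Expanding this trace in an orthonormal basis and using Hermiticity of the $\tilde{H}_j$ yields $\Tr[\tilde{H}_i\tilde{H}_j]=\sum_{m,n}(R_{mn}(i)R_{mn}(j)+I_{mn}(i)I_{mn}(j))$, with the spurious imaginary parts cancelling through $R_{mn}=R_{nm}$ and $I_{mn}=-I_{nm}$; symmetry in $i,j$ then lets me discard the ordering assumption, giving precisely the desired form.

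For $\LC_2$ the modulus-squared structure is the genuinely new ingredient. Writing $|T|^2=T\bar{T}$, the second derivative expands into the four pieces $(\partial_i\partial_j T)\bar{T}+(\partial_j T)(\partial_i\bar{T})+(\partial_i T)(\partial_j\bar{T})+T(\partial_i\partial_j\bar{T})$. Evaluating at $U(\thv_*)=e^{i\phi}V$, where $V\ad U=e^{i\phi}\id$ and $T=e^{i\phi}d$, the two pieces carrying second derivatives of $T$ reproduce (up to the prefactor $2/d$) the $\Tr[\tilde{H}_i\tilde{H}_j]$ term found for $\LC_1$, while the two pieces that are products of first derivatives collect into a correction proportional to $\Tr[\tilde{H}_i]\Tr[\tilde{H}_j]$. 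The key simplification is that the generators $H_k$ are traceless, hence so is every conjugate $\tilde{H}_j$, so $\Tr[\tilde{H}_j]=0$ and the correction vanishes identically, leaving $[\nabla^2\LC_2(\thv_*)]_{ij}=\frac{2}{d}\Tr[\tilde{H}_i\tilde{H}_j]$, again of the required form. Even without tracelessness the correction is a single rank-one outer product of the vector $(\Tr[\tilde{H}_1],\dots,\Tr[\tilde{H}_M])$, which also resides in the $\dim(\liea)$-dimensional span of $\{\vec{a}_\nu\}$, so the bound would persist.

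The main obstacle I anticipate is the careful bookkeeping of the four contributions in the $\LC_2$ Hessian and tracking how the phase $e^{i\phi}$ and the global prefactors cancel; in particular one must verify that the surviving object is real and symmetric before identifying it with the $\vec{R}_{mn},\vec{I}_{mn}$ quadratic form. A secondary point, shared with Theorem~\ref{theo:3-si}, is that the simplification $V\ad U\to\id$ (respectively $e^{i\phi}\id$) relies crucially on evaluating at the global optimum; away from $\thv_*$ the same manipulations leave behind $V\ad U$ factors that do not collapse, and the rank can then exceed $\dim(\liea)$. With the optimum-evaluated forms in hand, both bounds follow at once from the subspace argument of Eq.~\eqref{eq:F2}, giving $\rank[\nabla^2\LC_1(\thv_*)],\rank[\nabla^2\LC_2(\thv_*)]\leq\dim(\liea)$.
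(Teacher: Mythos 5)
Your proposal is correct and follows essentially the same route as the paper's proof: you compute the Hessians via $\partial_j U(\thv)=-iU(\thv)\tilde{H}_j$ and $\partial_i\tilde{H}_j=i[\tilde{H}_i,\tilde{H}_j]$ for $i\leq j$, collapse $V\ad U(\thv_*)$ at the optimum so that (up to prefactors and a phase that cancels) both Hessians reduce to the Gram form $\Tr[\tilde{H}_i\tilde{H}_j]$, expand this as a sum of rank-one matrices $\vec{R}_{mn}\cdot\vec{R}_{mn}^{\top}+\vec{I}_{mn}\cdot\vec{I}_{mn}^{\top}$, and invoke the $\dim(\liea)$-dimensional subspace argument from the proof of Theorem~\ref{theo:1_SM}. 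The only cosmetic deviation is that for $\LC_2$ you use tracelessness of the generators to make the $\Tr[\tilde{H}_i]\Tr[\tilde{H}_j]$ correction vanish identically (while correctly noting the fallback that it is in any case a rank-one matrix lying in the span of $\{\vec{a}_\nu\}$), whereas the paper retains this term and absorbs it into the same subspace bound.
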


\begin{proof}
Let us begin with $\LC_1(\thv)$. The gradient of this loss function is
\begin{equation}
\begin{split}
    \vec{\nabla} L_1(\thv) &= -2\Re\left[ \vec{\nabla} F(\thv)\right] \\
&= -2 \Re\left[\Tr\left[V\ad \vec{\nabla}U(\thv)\right]\right] \\
&= 2\Im\left[ \Tr\left[V\ad U(\thv)\vec{\tilde{H}}(\thv)\right]\right]\,,
\end{split}
\end{equation}
where $\vec{\tilde{H}(\thv)}=(\tilde{H_1}(\thv),\cdots,\tilde{H_M}(\thv))^\top$, and where $\tilde{H_j}(\thv)$ was defined in Eq.~\eqref{SM_Hj}. Similarly, assuming $i\leq j$, we find for the matrix elements of the  Hessian 
\begin{equation}
    \label{eq:thm4beforechange}
    \begin{split}
    \partial_i\partial_j\LC_1(\thv) &= 2\Im\left[\Tr\left[ V\ad \partial_i U(\thv) \H_j + V\ad U(\thv) \partial_i\H_j\right] \right] \\
    &= -2\Re\left[\Tr\left[ V\ad U(\thv) \H_i\H_j - V\ad U(\thv) [\H_i,\H_j] \right]\right]\\
    &=-2\Re\left[\Tr\left[ V\ad U(\thv) \H_i\H_j \right]\right]\,.
    \end{split}
\end{equation}

Evaluating at any optimal set of parameters, that is, such that $U(\thv_*)=V$, we find that Eq.~\eqref{eq:thm4beforechange} is symmetric in indices $i$ and $j$. Thus, we can remove the assumption that $i \leq j$ and express the Hessian more succinctly:
\begin{equation}
\begin{split}
    \nabla^2\LC_1(\thv_*) &= -2\Re \left[\Tr\left[\vec{\tilde{H}}\cdot \vec{\tilde{H}}^\top \right]\right] = -2\sum_{m,n=1}^d \vec{R}_{mn}\cdot\vec{R}_{mn}^\top + \vec{I}_{mn}\cdot\vec{I}_{mn}^\top
\end{split}
\end{equation}
where $\vec{R}_{mn}=\Re[\bra{m} \vec{\tilde{H}}\ket{n}]$ and $\vec{I}_{mn}=\Im[\bra{m} \vec{\tilde{H}}\ket{n}]$. Hence,  we again find that the Hessian is a sum of $d^2$ rank-one matrices. We note that from here onward, we drop the $\thv$ dependence of $\vec{\tilde{H}}$.

Then, following a proof similar to the one used in proving Theorem~\ref{theo:1_SM}, we know that each of the vectors generating the matrices $\vec{R}_{mn}$ and $\vec{I}_{mn}$ can be written as a linear combination of $\dim (\liea)$ other vectors $\{ \vec{a_\nu} \}_{\nu=1}^{\dim(\liea)}$. Thus, the rank of the Hessian of $\nabla^2\LC_1(\thv)$ at the optimum is upper bounded be larger that $\dim(\liea)$.

Now, let us derive the result for $\LC_2(\thv)$. The gradient of the loss function is
\begin{equation}
\begin{split}
    \vec{\nabla}\LC_2(\thv) &= -\frac{2}{d^2}\Re\left[ \vec{\nabla} T(\thv) T^*(\thv)\right] = -2 \Re\left[\Tr\left[V\ad \vec{\nabla}U(\thv) \right]T^*(\thv)\right] = 2\Im\left[ \Tr \left[V\ad U(\thv) \vec{\tilde{H}}\right]T^*(\thv)\right]
\end{split}
\end{equation}
and the Hessian
\begin{equation}
\begin{split}
    \nabla^2\LC_2(\thv) &= \frac{2}{d^2}\Im\left[ \vec{\nabla}\cdot\left(\Tr[ V\ad U(\thv) \vec{\tilde{H}}^\top]\right)T^*(\thv)+\vec{\nabla}T^*(\thv) \cdot\Tr[V\ad U(\thv)\vec{\tilde{H}}^\top ] \right] \\
    &=
    -\frac{2}{d^2}\Re\left[ \Tr[V\ad U(\thv) \vec{\tilde{H}}\cdot \vec{\tilde{H}}^\top]T^*(\thv) + \Tr[\vec{\tilde{H}} U\ad(\thv)V ] \cdot \Tr[V\ad U(\thv) \vec{\tilde{H}}^\top]\right]\,.
\end{split}
\end{equation}
Evaluating at a solution $U(\thv)=e^{i\phi}V$
\begin{equation}
\begin{split}
    \nabla^2\LC_2(\thv_*) &= -\frac{2}{d^2}\Re\left[ \Tr[\vec{\tilde{H}} \cdot \vec{\tilde{H}}^\top]d + \Tr[\vec{\tilde{H}}] \cdot \Tr[\vec{\tilde{H}}^\top]\right] \\
    &= -\frac{2}{d} \sum_{m,n=1}^d \vec{R}_{mn}\cdot\vec{R}_{mn}^\top +\vec{I}_{mn}\cdot\vec{I}_{mn}^\top +\frac{1}{d}\left( \vec{R}_{mm}\cdot\vec{R}_{nn}^\top-\vec{I}_{mm}\cdot\vec{I}_{nn}^\top \right) \,.
\end{split}
\end{equation}
Again, this is a sum of rank-one matrices that live in the span of $\{ \vec{a}_\nu \}$, and following a proof similar to that used in deriving Theorem~\ref{theo:1_SM}, the rank of $ \nabla^2\LC_2(\thv_*)$ is at upper bounded by $\dim(\liea)$.
\end{proof}

\section{Details of the numerical simulations}

The simulations in the main text were  carried out in double precision using the open-source library \texttt{Qibo} \cite{efthymiou2020qibo,efthymiou2021qibo_zenodo} (version 0.1.6). All circuits have been run on CPU because the overhead of transferring the state vector  between the host and the device makes the usage of GPUs not suitable for circuits with less than 15-20 qubits. This is specially true for the case of VQAs, where back and forth communication between host and device results in a deteriorated performance. Simulations have been performed using single-thread multiprocessing to parallelize the execution of different instances of circuits with different number of qubits and depths in multiple cores. In particular, IntelCore i7-9750H, IntelCore i7-10750H and IntelCore i9-9900K cores have been employed.

The optimization method chosen in all cases has been the Adaptive Moment Estimation (Adam) algorithm \cite{kingma2015adam}, which is a variant of Stochastic Gradient Descent (SGD) widely used in classical machine learning, that adaptively adjusts the learning rate for each optimization parameter based on information coming from first and second moments of the gradients. This choice has been motivated by the fact that the works that have reported overparametrization in VQAs used this algorithm \cite{kiani2020learning, wiersema2020exploring}, and also because we consider that a gradient-based optimizer is an appropriate choice to probe relevant features of the optimization landscape, like the disappearance of suboptimal local minima.
In order to leverage automatic differentiation for the computation of gradients, the simulation backend in \texttt{Qibo} has been set to \texttt{tensorflow}. This backend, although slower than the \texttt{qibojit} and \texttt{qibotf} custom backends, allows to seamlessly deploy \texttt{Tensorflow}'s implementation of the Adam optimizer.
The hyper-parameter values employed in all cases are: initial learning rate $=10^{-2}$, $\beta_1=0.9$, $\beta_2=0.999$ and $\hat{\epsilon}=10^{-7}$. The optimization was stopped whenever we reached machine precision.

The minimizations have been carried out in all cases without considering sampling noise, \textit{i.e.} using the full state vector in the simulation to compute expectation values of observables. The main reason for this is that we are here interested in the optimization landscape itself, and not in the stochasticity introduced by finite sampling.

Finally, we mention that parameter-shift rules \cite{mitarai2018quantum,schuld2019evaluating} have been employed in all cases for the computation of the quantum Fisher information and Hessian matrices, and that the simulation backend was switched to the faster \texttt{qibojit} for that.

\section{Formulas for computing the QFIM and Hessian}

We present here the explicit formulas employed in the computation of the Quantum Fisher Information Matrix, $F(\thv)$, and the Hessian, $\nabla^2 \LC(\thv)$, in each of the examples in our numerical simulations. For convenience, we recall the definitions of the elements of these two matrices,
\be [F_\mu(\thv)]_{ij} = 4\Re[\braket{\partial_i\psi_\mu(\thv)}{\partial_j\psi_\mu(\thv)}\\
-\braket{\partial_i\psi_\mu(\thv)}{\psi_\mu(\thv)}\braket{\psi_\mu(\thv)}{\partial_j\psi_\mu(\thv)}] \quad,\quad [\nabla^2\LC(\thv)]_{ij}=\partial_i\partial_j \LC(\thv)\,,\ee
where we use the notation $\partial_{i}=\frac{\partial}{\partial \theta_{i}}$ and where the subscript $\mu$ indicates the quantum state $\ket{\psi_\mu}$ the QNN acts on.
The QFIM can be interpreted, at each point $\vec{\theta}$ in the landscape (and up to a constant factor), as the Hessian matrix of a pure state transfer problem where the target state is $\ket{\psi(\vec{\theta})}$ itself. This opens up the possibility of employing quantum circuits to evaluate the QFIM on quantum hardware using parameter shift rules \cite{mitarai2018quantum,schuld2019evaluating}, which may be useful for instance when computing the natural gradient during the optimization of a VQA \cite{stokes2020quantum}, or when doing variational quantum simulation of imaginary-time evolution \cite{mcardle2019variational}. 
The parameter shift rules are simple recipes to analytically compute partial derivatives of a given loss function with respect to parametrized quantum gates. In the case of the QFIM, its elements are given by~\cite{mari2021estimating}
\be [F_\mu(\thv)]_{ij}= \frac{1}{2}\,\partial_i\partial_j\LC(\vec{\theta})\Big\vert_{\vec{\theta'} = \vec{\theta}}
\ee
where
\be \LC(\vec{\theta}) = 1-|\bra{\psi(\vec{\theta'})}\ket{\psi(\vec{\theta})}|^2.\ee
We start by computing the QFIM and the Hessian for the Hamiltonian Variational Ansatz (HVA) employed in the Variational Quantum Eigensolver (VQE) implementation. We recall that a HVA is an ansatz of the form in Eq. \eqref{eq:PSA_ansatz-SM} where the generators $\GC$, for a given Hamiltonian $H = \sum_{k=1}^N a_k A_k$ (with $A_k$ Hermitian operators and $a_k$ real numbers), are simply $\GC=\{ A_k \}_{k=1}^N$.
We employed this type of ansatz in the main text to minimize the loss function
\be E(\thv) = \langle \psi(\thv)|H_{\TFIM}|\psi(\thv)\rangle \,,\ee
where $H_{\TFIM}$ is the Hamiltonian of the Transverse Field Ising Model (TFIM).
Making use of the fact that for a HVA applied to the TFIM Hamiltonian, all the terms commute within a given $e^{-i \theta_{lk} H_k}$ operator (where $H_k = \frac{1}{2}\sum_i \sigma^z_i\sigma^z_{i+1}$ or $H_k=\frac{1}{2}\sum_i \sigma^x_i$), we find that
\be \label{eqn:hva_partial_derivative} \begin{split}
\partial_{lk}\, e^{-i \theta_{lk} H_k} &= \partial_{lk}\,\prod_i e^{-i \theta_{lk} H_{ki} } =  \sum_i \prod_{j\neq i} e^{-i \theta_{lk} H_{kj} }\, \partial_{lk}\, e^{-i \theta_{lk} H_{ki} } =  \sum_i \prod_{j\neq i} e^{-i \theta_{lkj} H_{kj} }\, \partial_{lk}\, e^{-i \theta_{lki} H_{ki} } \,, 
\end{split}\ee
where $H_{ki}=\sigma^z_i\sigma^z_{i+1}$ or $H_{ki}= \sigma^x_i$. For convenience, we have additionally introduced the notation $\theta_{lki}$ to denote the parameter in the $l$-th layer, that parametrizes the $i$-th term of the $k$-th generator. Note that in a periodic-structured ansatz, $\theta_{lki}=\theta_{lk}$ for all $i$.  Now, the partial derivative $\partial_{lk}\, e^{-i \theta_{lk} H_{ki}}$ can be obtained by applying the parameter shift rule (since $H_{ki}$ has only two distinct non-zero eigenvalues), and hence the partial derivative of a loss function with respect to $\theta_{lk}$ is
\be \partial_{lk} \, \LC(\vec{\theta}) = \frac{1}{2} \sum_{i} \left(\LC\left(\vec{\theta}_{\overline{lki}}, \theta_{lki}^{\frac{\pi}{2}}\right) - \LC\left(\vec{\theta}_{\overline{lki}}, \theta_{lki}^{-\frac{\pi}{2}} \right) \right)\,,\ee
where $\overline{lki}$ denotes all the indices distinct from $l,k,i$, and $\theta_{lki}^{s}= \theta_{lki} +s$. Therefore, applying the parameter shift rule twice, the matrix elements of the QFIM are given by
\begin{equation} \label{eqn:hva_Hessian}
 \begin{split}
[F(\thv)]_{lk,\,l'k'} =  \frac{1}{8} \sum_{i,j} \Big(&\LC\left(\vec{\theta}_{\overline{lki,\, l'k'j}}, \theta_{lki}^{\frac{\pi}{2}}, \theta_{l'k'j}^{\frac{\pi}{2}}\right)+ \LC\left(\vec{\theta}_{\overline{lki,\, l'k'j}}, \theta_{lki}^{-\frac{\pi}{2}}, \theta_{l'k'j}^{-\frac{\pi}{2}}\right)\\ - & \, \LC\left(\vec{\theta}_{\overline{lki,\, l'k'j}}, \theta_{lki}^{\frac{\pi}{2}}, \theta_{l'k'j}^{-\frac{\pi}{2}}\right) - \, \LC\left(\vec{\theta}_{\overline{lki,\, l'k'j}}, \theta_{lki}^{-\frac{\pi}{2}}, \theta_{l'k'j}^{\frac{\pi}{2}}\right)\Big)\,. 
\end{split} 
\end{equation}

To analytically compute the Hessian matrix of the loss function $E(\thv)$ for the HVA, we can also apply twice the parameter shift rule. The matrix elements $\nabla^2 E(\thv)_{lk,\,l'k'}  =\partial_{lk}\, \partial_{l'k'}\, E(\vec{\theta})$ of the Hessian are thus given by (see \textit{e.g.} \cite{huembeli2021characterizing,cerezo2020impact})
\begin{equation} \label{eqn:hva_Hessian2}
 \begin{split}
[\nabla^2 E(\thv)]_{lk,\,l'k'} =  \frac{1}{4} \sum_{i,j} \Big(&E\left(\vec{\theta}_{\overline{lki,\, l'k'j}}, \theta_{lki}^{\frac{\pi}{2}}, \theta_{l'k'j}^{\frac{\pi}{2}}\right)+ \, E\left(\vec{\theta}_{\overline{lki,\, l'k'j}}, \theta_{lki}^{-\frac{\pi}{2}}, \theta_{l'k'j}^{-\frac{\pi}{2}}\right)\\ - & \, E\left(\vec{\theta}_{\overline{lki,\, l'k'j}}, \theta_{lki}^{\frac{\pi}{2}}, \theta_{l'k'j}^{-\frac{\pi}{2}}\right)- \, E\left(\vec{\theta}_{\overline{lki,\, l'k'j}}, \theta_{lki}^{-\frac{\pi}{2}}, \theta_{l'k'j}^{\frac{\pi}{2}}\right)\Big)\,. 
\end{split} 
\end{equation}

We now turn our attention to the Hardware Efficient Ansatz (HEA) that we employed for unitary compilation and quantum autoencoding in the main text. In this case, every $Rx$ or $Ry$ gate in the ansatz is a generator of the form $e^{-i\frac{\theta}{2}\sigma^k}$ (where $\sigma^k\in\{\sigma^x,\sigma^y\}$ has eigenvalues $\pm1$), with an independent angle.
Hence, the QFIM elements are simply
\be \begin{split}  [F(\thv)]_{lk,l'k'} = \frac{1}{8} \Big(&\LC\left( \vec{\theta}_{\overline{lk,\,l'k'}}, \theta_{lk}^{\frac{\pi}{2}},\theta_{l'k'}^{\frac{\pi}{2}} \right)  + \,\LC\left(\vec{\theta}_{\overline{lk,\,l'k'}}, \theta_{lk}^{-\frac{\pi}{2}},\theta_{l'k'}^{-\frac{\pi}{2}}\right)\\ - &\,\LC\left(\vec{\theta}_{\overline{lk,\,l'k'}}, \theta_{lk}^{\frac{\pi}{2}}, \theta_{l'k'}^{-\frac{\pi}{2}}\right) - \,\LC\left(\vec{\theta}_{\overline{lk,\,l'k'}}, \theta_{lk}^{-\frac{\pi}{2}},\theta_{l'k'}^{\frac{\pi}{2}}\right)\Big)\,.  \end{split}\ee
We recall that the QFIM is independent of the loss function, and hence the above formula is valid for both the unitary compilation task and the quantum autoencoder. The difference between these two is that the initial states are different, and this has an impact on the QFIM. The Hessian matrices, on the contrary, depend on the loss function and thus are different in each case, but in our simulations we only computed the Hessian for the unitary compilation case. The loss here is given by $\LC(\thv) = 1 - \frac{1}{d^2} |\Tr(W^\dagger U(\vec{\theta}))|^2$, where $W$ is the unitary being compiled and $d=2^n$.
In this case, the term $L(\theta)=\frac{1}{d^2} |\Tr(W^\dagger U(\vec{\theta}))|^2$ can be directly evaluated on a quantum computer, and so its second partial derivative $\partial_{lk}\,\partial_{l'k'}\, L(\vec{\theta})$ is
\be \partial_{lk}\,\partial_{l'k'}\, L(\vec{\theta}) = \frac{1}{4} \Big(L\left( \vec{\theta}_{\overline{lk,\,l'k'}}, \theta_{lk} ^{\frac{\pi}{2}},\theta_{l'k'}^{\frac{\pi}{2}} \right)  + L\left(\vec{\theta}_{\overline{lk,\,l'k'}}, \theta_{lk}^{-\frac{\pi}{2}},\theta_{l'k'}^{-\frac{\pi}{2}}\right) - L\left(\vec{\theta}_{\overline{lk,\,l'k'}}, \theta_{lk}^{\frac{\pi}{2}}, \theta_{l'k'}^{-\frac{\pi}{2}}\right) - L\left(\vec{\theta}_{\overline{lk,\,l'k'}}, \theta_{lk}^{-\frac{\pi}{2}},\theta_{l'k'}^{\frac{\pi}{2}}\right)\Big)\,. \ee
Then, applying the chain rule twice on $\LC(\vec{\theta})=1-L(\vec{\theta})$ as
\be \partial_{lk}\,\partial_{l'k'}\, \LC(\vec{\theta}) = \LC'(L(\vec{\theta}))\, \partial_{lk}\,\partial_{l'k'}\,L(\vec{\theta})  + \LC''(L(\vec{\theta}))\, \partial_{lk}\, L(\vec{\theta})\,\partial_{l'k'}\,L(\vec{\theta})\,,  \ee
where $\LC'$ ($\LC''$) is the first (second) derivative of $\LC$ with respect to $L$, the expression for the Hessian matrix is found to be
\be [\nabla^2\LC(\thv)]_{lk,l'k'}  = -\frac{1}{4} \Big(L\left( \vec{\theta}_{\overline{lk,\,l'k'}}, \theta_{lk}^{\frac{\pi}{2}},\theta_{l'k'}^{\frac{\pi}{2}} \right)  + L\left(\vec{\theta}_{\overline{lk,\,l'k'}}, \theta_{lk}^{-\frac{\pi}{2}},\theta_{l'k'}^{-\frac{\pi}{2}}\right) - L\left(\vec{\theta}_{\overline{lk,\,l'k'}}, \theta_{lk}^{\frac{\pi}{2}}, \theta_{l'k'}^{-\frac{\pi}{2}}\right) -L\left(\vec{\theta}_{\overline{lk,\,l'k'}}, \theta_{lk}^{-\frac{\pi}{2}},\theta_{l'k'}^{\frac{\pi}{2}}\right)\Big) \,. \ee

\section{Additional Numerical Results}
We present here some additional numerical results that we obtained in simulations. In Sup. Fig. \ref{fig:hva_ranks_qfim_hess_supp}, we show the eigenvalues of the QFIM and Hessian computed at the global optimum for the loss function $E(\thv)=\langle \psi(\thv)|H_{\TFIM}|\psi(\thv)\rangle$ and the Hamiltonian variational ansatz with closed boundary conditions, for $n=10$ qubits. The interest in showing these plots is that therein one can better appreciate that the spectrum does not form a continuum, but rather, that there is a large gap between the non-zero and the zero eigenvalues, so that there is no ambiguity when defining the rank, stemming from numerical precision issues. Moreover, we computed the eigenvalues at random points in the landscape, where the rank of the QFIM is also bounded  by $\dim(\liea_\SC)$, unlike the rank of the Hessian. The spectra of the Hessian at random points in the landscape further informed us that the landscape is highly non-convex, as it contains both positive and negative eigenvalues. It is also interesting to note that in order to obtain a rank of the Hessian that is bounded by $\dim(\liea_\SC)$, one needs to compute it at the global minimum; otherwise, we encountered fairly-good local minima that did not fulfill this result. We remark that all these features were found in all cases where we computed and diagonalized the QFIM and the Hessian.

Furthermore, in Sup. Fig. \ref{fig:hva_rank_vs_parameters_supp} we computed the rank of the QFIM at 30 random points in the landscape for the HVA with open boundaries and different depths (\textit{i.e.} number of parameters), for $n=4,6$ qubits. This figure shows that the rank quickly statures at all points once a critical number of parameters $M_c$ is reached, suggesting that, at least in this case, overparametrization largely arises simultaneously across the entire landscape. We note as well that before having $M_c$ parameters, the average rank is equal to the number of parameters, \textit{i.e.} there is a perfectly linear relation between the two quantities. Adding more parameters beyond $M_c$ however seems to have no effect on the rank, as it only adds null eigenvalues.

\begin{figure}[t!]
    \centering
    \includegraphics[width=1\linewidth]{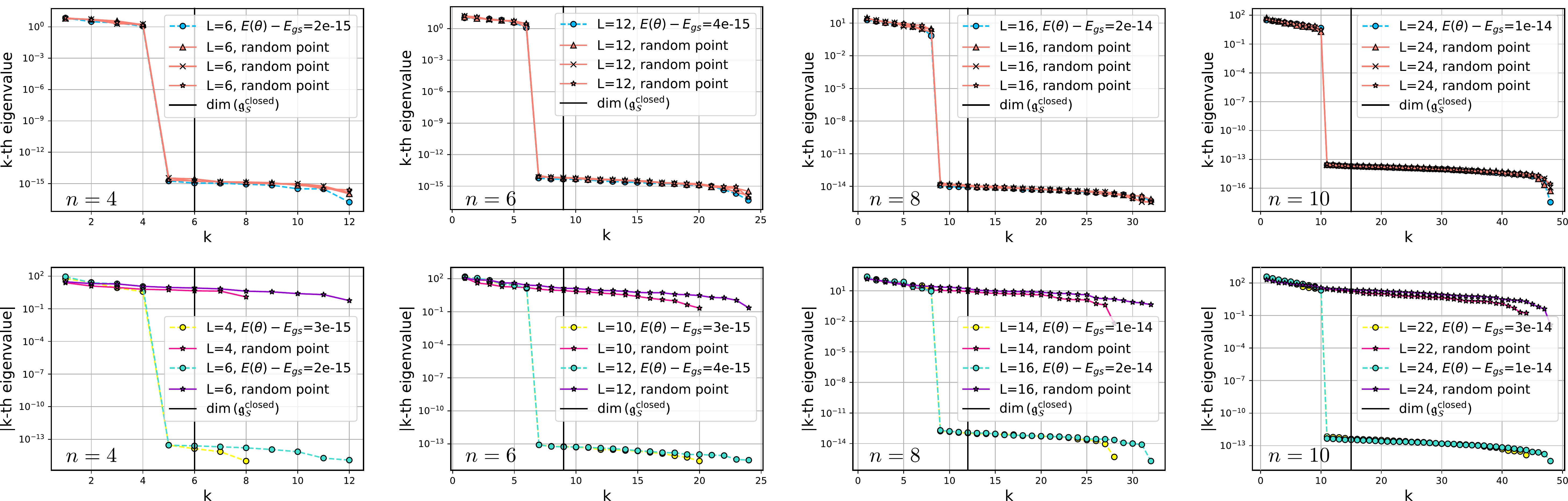}
    \caption{{\bf Spectra of the QFIM and the Hessian for the VQE implementations.} Top row: QFIM spectra for the Hamiltonian variational ansatz with closed boundary conditions, for $n=4,6,8,10$ qubits and $L=6,12,16,24$ layers, both at the global optima and at three random points in the landscape. Bottom row: Hessian spectra for the Hamiltonian variational ansatz with closed boundary conditions, for $n=4,6,8,10$ qubits and $L=4,6,10,12,14,16,22,24$ layers, both at the global optima and at a random point in the landscape.}
    \label{fig:hva_ranks_qfim_hess_supp}
\end{figure}

\begin{figure}[b!]
    \centering
    \includegraphics[scale=0.5]{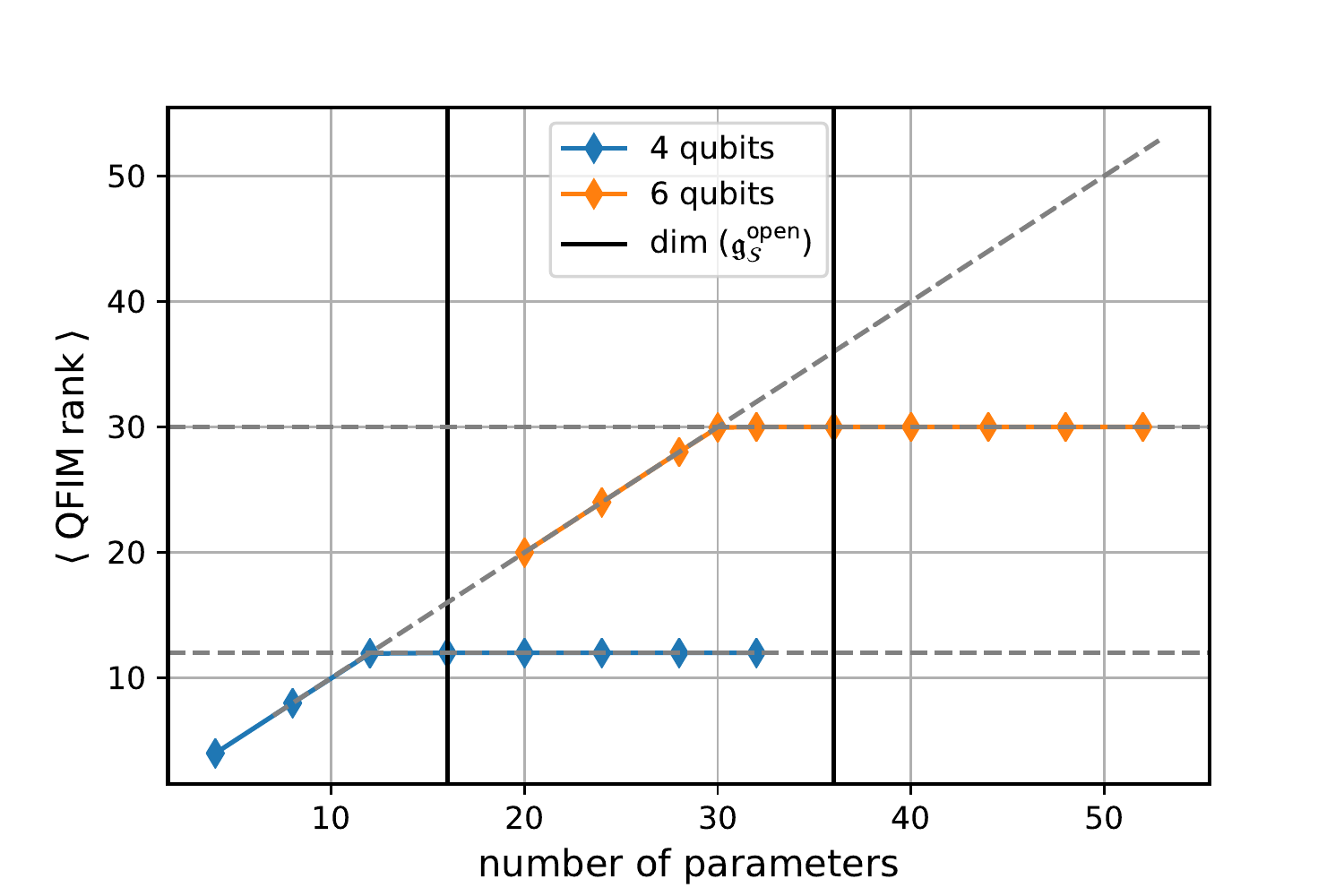}
    \caption{{\bf Average QFIM rank versus number of parameters for the VQE implementations.} Average rank of the QFIM across 30 random points in the landscape, for the Hamiltonian variational ansatz with open boundary conditions and $n=4,6$ qubits. The horizontal dashed lines mark the maximal ranks that the QFIMs achieve, and the tilted dashed line is the line $\langle {\rm QFIM \;rank} \rangle =$ number of parameters. The vertical black lines correspond to the respective $\dim(\liea_\SC)$ (leftmost for $n=4$, and rightmost for $n=6$). We remark that the standard deviation is exactly 0 for all the points in the plot, except for the case $n=4$ (6) and $M=12$ (30) parameters, where $\rank[F(\thv)]=12$ (30) for 29 out of 30 points and $\rank[F(\thv)]=11$ (29) for the remaining one.}
    \label{fig:hva_rank_vs_parameters_supp}
\end{figure}

\end{document}